\documentclass[11pt]{article}
\usepackage{amsmath,amsthm,amssymb,amscd,fancybox,ifthen,float,epsfig,subcaption}
\usepackage[all]{xy}
\usepackage{times}
\usepackage{color}
\usepackage{hyperref}
%A macro to put in marginal comments
\setlength{\marginparwidth}{0.72in}

\floatstyle{plain}
\restylefloat{figure}

\usepackage{color}

\definecolor{gr}{rgb}   {0.,   0.8,   0. }
\definecolor{bl}{rgb}   {0.,   0.5,   1. }
\definecolor{mg}{rgb}   {0.7,  0.,    0.7}

\newcommand{\Bk}{\color{black}}
\newcommand{\Rd}{\color{black}}

\newcommand\lot{\operatorname{l.o.t.}}

\newcommand\loc{\operatorname{loc}}

\newcommand\tot{\operatorname{tot}}
\newcommand\scat{\operatorname{sc}}
\newcommand\rad{\operatorname{rad}}
\newcommand\In{\operatorname{in}}

\newcommand\fw{\mathfrak w}
\newcommand\fW{\mathfrak W}
\newcommand\fA{\mathfrak A}
\newcommand\fB{\mathfrak B}
\newcommand\fC{\mathfrak C}

\newcommand\tc{\tilde c}

\newcommand\trho{\widetilde \rho}
\newcommand\tR{\widetilde R}
\newcommand\tsigma{\widetilde \sigma}

\newcommand\ttau{\widetilde \tau}

\newcommand\tpsi{\widetilde \psi}

\newcommand\cF{\mathcal F}

\newcommand\cW{\mathcal W}

\newcommand\tgamma{\widetilde{\gamma}}
\newcommand\ta{\widetilde{a}}
\newcommand\tb{\widetilde{b}}

\newcommand\cC{\mathcal{C}}

\newcommand\cS{\mathcal{S}}

\newcommand\cD{\mathcal{D}}

\newcommand\Ha{\hat a}
\newcommand\hb{\hat b}

\newcommand\hsigma{\hat\sigma}
\newcommand\htau{\hat\tau}

\newcommand\tw{\tilde w_{0+}}

%%%%% Some general AMSLatex macros
%\setlength{\marginparwidth}{0.72in}
%\newcommand{\mar}[1]{{\marginpar{\sffamily{\scriptsize #1}}}}
%{{\marginpar{\textsf{#1}}}}

\renewcommand\Re{\operatorname{Re}}
\renewcommand\Im{\operatorname{Im}}

\newcommand\fD{\mathfrak D}

\newcommand\bbN{\mathbb N}

\newcommand\bbR{\mathbb R}
\newcommand\bbZ{\mathbb Z}

\newcommand\pa{\partial}

\newcommand\supp{\operatorname{supp}}
        %Support

        %`Cinfinity' in text mode

        %The space of smooth functions vanishing to all orders at boundary

        %The space of Laurent functions

        %The space of smooth functions

        %The space of continuous functions

        %The space of smooth functions with all derivatives bounded

        %The space of smooth functions of compact support

        %The space of extendible distributions

        %The space of distributions of compact support

        %The space of supported distributions
\newcommand\Id{\operatorname{Id}}

        %Identity operator

        %and inside math mode

        %and inside math mode

        %and inside math mode

        %and inside math mode

        %as inside math mode

        %at inside math mode

        %for inside math mode

        %if inside math mode

        %i.e. inside math mode

        %in inside math mode

        %near inside math mode

        %on inside math mode

        %over inside math mode

        %provided inside math mode

        %with inside math mode

        %so inside math mode

        %such that inside math mode

        %then inside math mode

        %where inside math mode

        %Coefficient of the constant term

        %Rank = fibre dimension

        %Range

        %Index

        %Null space

        %Relative index

        %Homomorphism

        %Stabilizing map

        %A-hat genus

        %Chern character

        %The Thom isomorphism

        %extended Chern (non-)character

        %deRham

        %Todd class

        %General linear group

        %Bargmann-Fock

%%%%% Theorem type environments
\newtheorem{theorem}{Theorem}
\newtheorem*{theorem*}{Theorem}
\newtheorem{proposition}{Proposition}

\newtheorem{lemma}{Lemma}
\newtheorem*{lemma*}{Lemma}

\theoremstyle{definition}

\theoremstyle{remark}
\newtheorem{remark}{Remark}

%\labelproofing
%%%%%%%%%%%%%%%%%%%%%%% document begins %%%%%%%%%%%%%%%%%%%%%% 

\begin{document}

\title{Solving the  Scattering Problem for Open Wave-Guide Networks, II
\\Outgoing Estimates}

\author{Charles L. Epstein\footnote{ Center for Computational Mathematics, 
  Flatiron Institute, 162 Fifth Avenue, New York, NY 10010. E-mail:
  {cepstein@flatironinstitute.org}. }
  }
\date{November 5, 2025}

\maketitle

\begin{abstract} The paper continues the analysis, started
  in~\cite{EpWG2023_1} (Part I), of the model open wave-guide
  network\footnote{\Rd In the Applied Math, Engineering and Physics literature
  an ``open wave-guide'' usually refers to a translationally invariant device.
  We call these {\em bi-infinite wave-guides.}  The main point of our work is
  that we consider an assemblage of devices that are asymptotically modeled by
  bi-infinite wave-guides, which we call a {\em wave-guide network.} \Bk}
  scattering problem defined by 2 semi-infinite, rectangular wave-guides meeting
  along a common perpendicular line. In Part I we reduce the solution of the
  scattering problem to a transmission problem rephrased as a system of integral
  equations on the common perpendicular line. In this part we show that
  solutions of the integral equations introduced in Part I have asymptotic
  expansions, if the data allows it. Using these expansions we show that the
  solutions to the PDE found in each half space have asymptotic expansions that
  imply that they satisfy appropriate outgoing radiation conditions. The
  radiation conditions are given in Part III, where we show that they imply
  uniqueness of the solution to the PDE, as well as uniqueness for our system of
  integral equations.
\end{abstract}

\tableofcontents
\section{Introduction}
This paper continues the analysis begun in~\cite{EpWG2023_1} of the scattering
problem for an open wave-guide network defined by two rectangular channels
meeting along a common perpendicular line, see Figure~\ref{fig0}.
\begin{figure}
  \centering \includegraphics[width= 10cm]{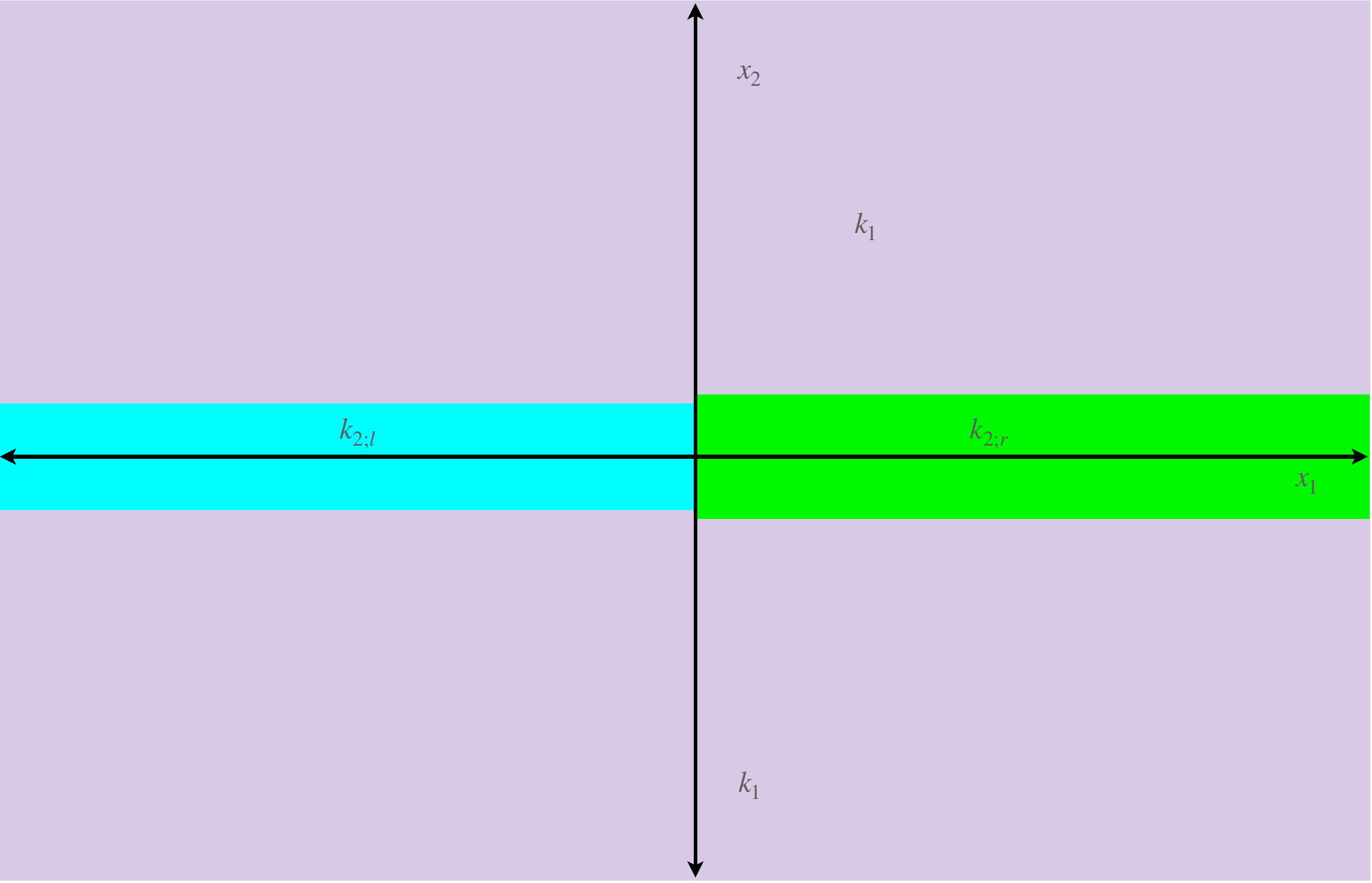}
    \caption{Two dielectric channels meeting along a straight interface. The
      $x_3$-axis is orthogonal to the plane of the image.}  
   \label{fig0}
\end{figure}
In the pages that follow we adopt the notation, and make extensive use of
results in our earlier paper,~\cite{EpWG2023_1}.  In this paper we obtain
precise \emph{uniform} asymptotics for the solution to the open wave-guide
network scattering problem found in~\cite{EpWG2023_1}. \Rd The notion of uniform
asymptotics is classical, appearing in~\cite{Erdelyi}, and explained in
Section~\ref{ss.asympt}.\Bk

\Rd We begin by recalling the scattering problem from Part I, and our approach
to solving it.  Generalities of scattering problems are discussed in Sections
1.1 and 1.2 of Part I. Here we focus  on our approach to scattering
from a pair of open wave guides that meet along a common perpendicular line.

The pair of wave-guides meeting along a common
perpendicular line is modeled by a Schr\"odinger operator with a piecewise
constant potential
\begin{multline}\label{eqn1.200}
  q(x) =q_l(x)+q_r(x)\overset{d}{=}\\
  \chi_{(-\infty,0]}(x_1)(k_{2;l}^2-k_1^2)\chi_{[y_l-d_l,y_l+d_l]}(x_2)+
    \chi_{[0,\infty)}(x_1)(k_{2;r}^2-k_1^2)\chi_{[y_r-d_r,y_r+d_r]}(x_2).
\end{multline}
We assume that $k_{2;l,r}\geq k_1;$ there is no assumption about the
relationship between $(y_l,d_l)$ and $(y_r,d_r).$ Throughout this paper, the
sub- and super-scripts $l,r$ refer to the half-planes: $l\leftrightarrow\{x_1<0\},
r\leftrightarrow \{x_1>0\}.$ We look for solutions to the equation
\begin{equation}\label{eqn2.761}
  (\Delta+k_1^2+q(x))u^{\tot}(x)=0,
\end{equation}
which belong to $H^2_{\loc}(\bbR^2).$  A scattering problem is specified by
  defining an incoming field, $u^{\In}$  and then finding the scattered field
  produced by the variations in the material parameters produced by the
  potential.  As described in Part I, the incoming field is described by a pair
  functions $u^{l,r}_{\In}$ that are solutions to
  \begin{equation}
    (\Delta+k_1^2+q_{l,r}(x))u^{\In}_{l,r}(x)=0.
  \end{equation}

  A bi-infinite wave-guide is modeled by a operator of the form
  $\Delta+q(x_2)+k^2,$ where $q$ is a compactly supported function. If $q$ is
  somewhere positive, then the the 1$d$ operator $\pa_{x_2}^2+q(x_2)$ typically
  has a finite number of $L^2$-eigenfunctions, $\{v_1(x_2),\dots,v_N(x_2)\}$
  with positive energies
  \begin{equation}\label{eqn4.765}
    (\pa_{x_2}^2+q(x_2))v_j(x_2)=E_j^2v_j(x_2).
  \end{equation}
  Setting $\xi_j=\sqrt{E_j^2+k_1^2},$ the functions
  \begin{equation}\label{eqn5.765}
    u^{\pm}_j(x_1,x_2)=e^{\pm   i\xi_j x_1}v_j(x_2)\text{ for }j=1,\dots, N,
  \end{equation}
  define solutions to $(\Delta+k^2+q(x_2))u^{\pm}_j=0,$ which are called
  \emph{wave-guide modes.} They are traveling waves, which do not decay as
  $x_1\to\pm\infty,$ but are exponentially localized within $\supp q.$ The
  wave-guide modes for $q_l$ and $q_r$ are often used as incoming data for the
  wave-guide network defined above. We denote these eigenpairs by
 $\{(v^{l,r}_{j}(x_2),\xi^{l,r}_j):\: j=1,\dots, N_{l,r}\}.$
  \Bk

The scattered field is then defined by a pair of functions $u^{\scat}_{l,r}(x)$
that satisfy the equations
\begin{equation}
  \begin{split}
    & (\Delta+k_1^2+q_{l}(x))u^{\scat}_{l}(x)=0,\text{ where }x_1<0,\\
     & (\Delta+k_1^2+q_{r}(x))u^{\scat}_{r}(x)=0,\text{ where }x_1>0,
  \end{split}
\end{equation}
along with the transmission boundary conditions along the line $\{x_1=0\}$
\begin{equation}\label{eqn4.213}
  \begin{split}
    u^l_{\scat}(0,x_2)+u_{\In}^{l}(0,x_2&=u^r_{\scat}(0,x_2))+u_{\In}^r(0,x_2),\\
    \pa_{x_1}[u^l_{\scat}(0,x_2)+u_{\In}^{l}(0,x_2)]&=\pa_{x_1}[u^r_{\scat}(0,x_2)+u_{\In}^r(0,x_2)];
  \end{split}
\end{equation}
the data for the scattered field is given $\{x_1=0\}$  by
\begin{equation}\label{eqn6.761}
  \begin{split}
    &g(x)=u_{\In}^r(0,x_2)-u_{\In}^{l}(0,x_2),\\
    &h(x)=\pa_{x_1}u_{\In}^{l}(0,x_2)-\pa_{x_1}u_{\In}^{l}(0,x_2).
  \end{split}
\end{equation}
In addition, the scattered fields are expected to satisfy an outgoing radiation
condition as $|x|\to \infty.$ While the precise nature of the radiation condition is
not given until Part III, it is just the standard Sommerfeld radiation condition in directions
not parallel to the wave-guide channels. In this paper the wave-guide directions are
$\{(\pm 1,0)\}.$ If we let $\eta=(\eta_1,\eta_2)$ denote a point on the unit
circle,  $S^1,$ we then
need to show that for $\eta\notin\{(-1,0), (0,\pm 1)\}$ for $l$, and
$\eta\notin\{(+1,0), (0,\pm 1)\}$ for $r$ we have the estimates
\begin{equation}
  (\pa_r-ik_1)u^{l,r}_{\scat}(r\eta)=O(r^{-\frac 32})
\end{equation}
as $r\to\infty,$ in some sense {\em uniformly} as $\eta$ approaches the excluded
points.  Establishing these estimates is the main goal of this paper.  The
function
\begin{equation}
   u^{\tot}(x_1,x_2)=
  \begin{cases}
    u_{\In}^l(x)+u^l_{\scat}(x)\text{ for }x_1<0,\\
      u_{\In}^r(x)+u^r_{\scat}(x)\text{ for }x_1>0,
  \end{cases}
\end{equation}
then belongs to $H^2_{\loc}(\bbR^2)$ and defines a weak solution to~\eqref{eqn2.761}.

Having reformulated the scattering problem as a transmission problem along the
$x_2$-axis, $\{x_1=0\},$ we show, in Part I, that the solution of this
problem can be reduced to the solution of a system of integral equations along
the $x_2$-axis.  To that end we construct the outgoing fundamental solutions for
the operators $(\Delta+k_1^2+q_{l,r}(x_2)),$ which can then be used to represent the
scattered fields $u^{l,r}_{\scat}$ as layer potentials along the $x_2$-axis, in the
respective half-planes. We let $(\sigma,\tau)$ denote the densities in these
representations, see~\eqref{eqn160.08}.\Bk

These densities are obtained by solving a system of integral equations on the $x_2$-axis,
which take the form
 \begin{equation}\label{eqn143.35}
     \left(\begin{matrix}\Id & D\\C&\Id\end{matrix}\right)
              \left(\begin{matrix}\sigma\\\tau\end{matrix}\right)=
        \left(\begin{matrix}g\\h\end{matrix}\right).
 \end{equation}
 \Rd This is equation (65) in Part I, where we show (Corollary 3) that this
 system is Fredholm of index 0 when acting on the function spaces
 $\cC_{\alpha}(\bbR)\oplus\cC_{\alpha+\frac 12}(\bbR),$ for any $0<\alpha<\frac
 12,$ see~\eqref{eqn9.556}.  The null-space of~\eqref{eqn143.35} is trivial, but
 the proof of this statement requires the radiation condition, and is therefore
 postponed to Part III.  The kernel functions that define the integral operators
 $C,D$ are constructed in Part I, and have very useful asymptotic expansions,
 which we describe below in~\eqref{eqn171.65}--\eqref{eqn173.65}.  \Bk

 \Rd The integral equations are solvable for data
 $(g,h)\in\cC_{\alpha}(\bbR)\oplus\cC_{\alpha+\frac 12}(\bbR),$ which may or may not
 come from incoming data, as in~\eqref{eqn6.761}. \Bk Assuming that the data,
 $(g,h),$ have appropriate asymptotic expansions, we  show that the
 solutions to the integral equation, \Rd
 $(\sigma,\tau)\in\cC_{\alpha}(\bbR)\oplus\cC_{\alpha+\frac 12}(\bbR),$ \Bk also
 have asymptotic expansions:
 \begin{equation}\label{eqn5.205}
\begin{split}
   &\sigma(x_2)=\frac{e^{ik_1|x_2|}}{|x_2|^{\frac
       12}}\sum_{l=0}^N\frac{a^{\pm}_l}{|x_2|^l}+O\left(|x_2|^{-N-\frac
    32}\right),\\
  &\tau(x_2)=\frac{e^{ik_1|x_2|}}{|x_2|^{\frac
       32}}\sum_{l=0}^N\frac{b^{\pm}_l}{|x_2|^l}+O\left(|x_2|^{-N-\frac
     52}\right), \text{ as }\pm x_2\to\infty.
 \end{split}
 \end{equation}
Using these expansions we can show that the scattered fields, $u^{l,r}_{\scat}$
have uniform (see Section~\ref{ss.asympt}) asymptotic expansions.

In our representation, the solution in each half plane naturally splits into a
radiation part, $u^{l,r}_{\scat,\rad},$ and a wave-guide mode part,
$u^{l,r}_{\scat,g},$ so that
\begin{equation}
u^{l,r}_{\scat}=  u^{l,r}_{\scat,\rad}+u^{l,r}_{\scat,g}.
\end{equation}
The wave-guide mode part is given as a finite sum
\begin{equation}
  u^{l,r}_{\scat,g}(x_1,x_2)=\sum_{m=1}^{N_{l,r}}a_m^{l,r}e^{\mp i\xi_m^{l,r}x_1}v^{l,r}_m(x_2),
\end{equation}
where $0<k_1<\xi_m^{l,r}<k_{2;l,r},$  $\{v_m^{l,r}\}\subset L^2(\bbR)$ and
\begin{equation}
  (\pa_{x_2}^2+q_{l,r}(x_2))v^{l,r}_m(x_2)=[(\xi_m^{l,r})^2-k_1^2]v^{l,r}_m(x_2),
\end{equation}
\Rd which implies  $|v_m^{l,r}(x)|=O(e^{-|x_2|\sqrt{(\xi_m^{l,r})^2-k_1^2}}),$ as
$|x_2|\to\infty.$ Note that the wave-guide mode part does not decay as
$x_1\to\pm\infty,$  which partly explains the need for more complicated radiation conditions.\Bk

Let $x=r\eta,$ where $\eta=(\eta_1,\eta_2)\in S^1.$ Assuming that $N$ can be
taken arbitrarily large in~\eqref{eqn5.205}, in each half plane we show that the
radiation parts have uniform asymptotic expansions
\begin{equation}\label{eqn9.400}
  u^{l,r}_{\scat,\rad}(r\eta)\sim\frac{e^{ik_1 r}}{\sqrt{r}}\sum_{j=0}^{\infty}\frac{a^{l,r}(\eta)}{r^j},
\end{equation}
where the coefficients $\{a^{l,r}(\eta)\}$ are smooth where
$\eta_1\cdot\eta_2\neq 0,$ and  extend smoothly to $\eta_1, \eta_2\to
0^{\pm}.$ \Rd There is some subtlety to the sense in which these expansions are
uniform, which is explained in \Bk  Remark~\ref{rmk9.200}. In addition there are somewhat different
expansions as $x_1\to\pm\infty,$ with $x_2$ remaining bounded.

In the course of proving these expansions we provide an answer to a question of
independent interest. Suppose that we define single and double layer potentials
in the half plane $\{x_1>0\},$ by integrating along  the boundary of the half plane:
\begin{equation}
  \begin{split}
    v(x)&=\cD_k(\sigma)[x]=\int_{-\infty}^{\infty}\pa_{y_1}g_k(x;0,y_2)\sigma(y_2)dy_2,\\
  u(x)&=\cS_k(\tau)[x]=\int_{-\infty}^{\infty}g_k(x;0,y_2)\tau(y_2)dy_2,
  \end{split}
\end{equation}
\Rd where $(\sigma,\tau)\in\cC_{\alpha}(\bbR)\oplus\cC_{\alpha+\frac 12}(\bbR),$  see~\eqref{eqn9.556}.
\Bk  Under what hypotheses on
$\sigma,\tau$ do these layer potentials  have uniform asymptotic expansions like those
in~\eqref{eqn9.400}? The analysis in Section~\ref{sec3.207} proves
the following result.
\begin{theorem*}[Theorem~\ref{thm3.75}]
  If $\sigma,$ resp. $\tau,$ belongs to $\cC_{\alpha}(\bbR),$
  resp. $\cC_{\alpha+\frac 12}(\bbR),$ for $0<\alpha<\frac 12,$ has an
    asymptotic expansion like that given in~\eqref{eqn5.205}, then the layer
    potential $v(r\eta),$ resp. $u(r\eta),$ has an asymptotic expansion, as
    $r\to\infty$ like that in~\eqref{eqn9.400}, with smooth coefficients and
    uniformly bounded errors as $\eta_2\to \pm 1.$
\end{theorem*}

As usual, the expansions are proved using a stationary phase argument.  The
difficulty in proving uniform expansions comes from the fact that the location
of the stationary phase varies with $\eta,$ and, in \emph{natural coordinates},
tends to the intersection of two orthogonal segments on the contour of
integration, see Figure~\ref{fig6.203}. The proof that these expansions extend
uniformly uses complex contour deformations.  In some cases there is a fixed
smooth curve so that all the stationary phase points remain in the interior of
the curve. In other cases there is a smooth family of contours so that the
stationary phase lies at the orthogonal intersection of two curve segments. The
uniformity of the expansions then follows from the lemma:
\begin{lemma*}[Lemma~\ref{lem5.202}]
   Let $f(x,\theta)\in\cC^{\infty}_c([0,1)\times [0,\delta]),$ for a
     $\delta>0.$ \Rd The function $F(r,\theta)$ defined by the integral
     \begin{equation}
       F(r,\theta)=\int_{0}^{1}e^{ir x^2}f(x,\theta)dx
     \end{equation}
    is infinitely differentiable and,  all derivatives  have
    uniform asymptotic expansions
     \begin{equation}
      \pa_r^m\pa_{\theta}^l F(r,\theta)\sim
      \sum_{j=0}^{\infty}\pa_{\theta}^la_j(\theta)
      \pa_r^m\left(\frac{1}{r^{\frac{(j+1)}{2}}}\right),
     \end{equation}
     where $0\leq l,m$ and
     $a_{j}(\theta)\in\cC^{\infty}([0,\delta]).$\Bk
\end{lemma*}
\Rd
\begin{remark}
  The subscript $c$ in $\cC^{\infty}_c([0,1)\times [0,\delta])$ means that the
    data is this space has compact support in $[0,1)\times [0,\delta].$ That is,
        the support is contained in a set of the form $[0,a]\times [0,\delta],$
        for an $a<1.$
\end{remark}
\Bk

These expansions suffice to show that the solutions
$u^{l,r}_{\scat,\rad}+u^{l,r}_{\scat,g}$ satisfy the natural outgoing radiation
conditions for this problem, which are essentially given in the work of Isozaki,
Melrose and Vasy, see~\cite{Isozaki94,Melrose94,Vasy97,Vasy2000}. These are
analyzed and explained in detail in Part III, see~\cite{EpMaSR2023}. The
radiation conditions imply uniqueness results, which show that our solutions
agree with the limiting absorption solutions, when they exist.  A similar
construction for the limiting absorption resolvents for a bi-infinite wave-guide
is considered in~\cite{Nosich1994}, which gives a physics-style discussion of
the outgoing radiation conditions. It contains neither a mathematically rigorous
treatment, nor a precise radiation condition valid in a neighborhood of the
channels.

While we only explicitly consider the case that $q_{l,r}(x_2)$ are of the simple form given
in~\eqref{eqn1.200}, the asymptotics established herein are valid for any pair
of piecewise smooth, bounded potentials with bounded support, for which 0 is not
a {\em threshold,}  that is the equations
$(\pa_{x_2}^2+q_{l,r}(x_2))v(x_2)=0$ do \emph{ not } have bounded
solutions. \Rd This is proved in~\cite{EGHQR2025}.\Bk

\subsection{Asymptotic Expansions}\label{ss.asympt}

With $\eta\in S^1$ and $r>0,$ we say that a function $f(r\eta)$ has an
asymptotic expansion as $r\to\infty,$ and write
   \begin{equation}
     f(r\eta)\sim\frac{e^{ik_1 r}}{r^{\alpha}}\sum_{j=0}^{\infty}\frac{a_j(\eta)}{r^j},
   \end{equation}
   provided that, for any $N>0$
   \begin{equation}\label{eqn17.760}
     \left|f(r\eta)-\frac{e^{ik_1
         r}}{r^{\alpha}}\sum_{j=0}^{N}\frac{a_j(\eta)}{r^j}\right|=o(r^{-(N+\alpha)}). 
   \end{equation}
   \Rd If $\Xi\subset S^1,$ then the expansion is \emph{uniform} for $\eta\in
   \Xi$ provided the implied constants in the error terms in~\eqref{eqn17.760}
   are uniformly bounded for $\eta\in\Xi.$ Smoothness in $\eta$ means that the
   coefficients $\{a_j(\eta)\}$ are smooth, we can differentiate the
   expansion with respect to $\eta,$ and the error terms remain the same order.
   That is, for any $N,$
\begin{equation}\label{eqn18.760}
     \left|\pa^l_{\eta}f(r\eta)-\frac{e^{ik_1
         r}}{r^{\alpha}}\sum_{j=0}^{N}\frac{\pa^l_{\eta}a_j(\eta)}{r^j}\right|=o(r^{-(N+\alpha)}). 
\end{equation}\Bk

   An important fact about the error terms in asymptotic expansions that arise
   from stationary phase calculations is contained in the following lemma.
   \begin{lemma}\label{lem0}
     Let $f\in\cC^{\infty}_c((-1,1)),$ the function
     \begin{equation}
       F(r)=\int_{-1}^1e^{irx^2}f(x)dx,
     \end{equation}
     has the asymptotic expansion
     \begin{equation}
       F(r)\sim\sum_{j=0}^{\infty}\frac{a_j}{r^{j+\frac 12}}.
     \end{equation}
     For each $N$ let
     \begin{equation}
       R_N(r)= F(r)-\sum_{j=0}^{N}\frac{a_j}{r^{j+\frac 12}},
     \end{equation}
     then for each $l\geq 0$
     \begin{equation}\label{eqn22.760}
         \pa_r^{l}F(r)\sim\sum_{j=0}^{\infty}a_j\pa_r^l\left[\frac{1}{r^{j+\frac 12}}\right],
     \end{equation}
     and we have the estimate
     \begin{equation}
       \left|\pa_{r}^{l}R_N(r)\right|=O(r^{-(N+l+\frac 32)}).
     \end{equation}
   \end{lemma}
   \begin{remark}
     Briefly we can differentiate the asymptotic expansion and the remainder
     terms, $R_N(r),$
     satisfy symbolic estimates.
   \end{remark}
   \begin{proof}
   \Rd  Choose $\varphi\in\cC^{\infty}_c((-1,1)),$ an even function with $\varphi(x)=1$ for
     $x\in\supp f.$ We first observe that, for any $j\in\bbN,$
   \begin{equation}\label{eqn20.45}
     \begin{split}
       &\int_{-1}^{1}e^{irx^2}x^{2j}\varphi(x)dx=\frac{a_j}{r^{j+\frac
           12}}+w_j(r).\\
       & \int_{-1}^{1}e^{irx^2}x^{2j-1}\varphi(x)dx=0.
       \end{split}
     \end{equation}
     The error terms $w_j(r)=O(r^{-M}),$ for any $M\in\bbN.$ The formula for
     $j>0$ is obtained by differentiating the $j=0$-formula $j$ times,
     see~\cite{ZworskiSA2012}.
    
     Let
     \begin{equation}
       \rho_N(x)=f(x)-\sum_{j=0}^{2N+1} \frac{f^{[j]}(0)}{j!}x^j,
     \end{equation}
     then
     \begin{equation}\label{eqn26.555}
       F(r)=\int_{-1}^1e^{irx^2}\varphi(x)\left[\sum_{j=0}^{2N+1}
       \frac{f^{[j]}(0)}{j!}x^j\right]dx+
       \int_{-1}^1e^{irx^2}\rho_N(x)\varphi(x)dx.
     \end{equation}
     It follows from~\eqref{eqn20.45}  that the first $N+1$
     terms of the asymptotic expansion of $F(r)$ come from the first integral
     in~\eqref{eqn26.555}, with an error term that satisfies the conclusion of
     the lemma.   As
     \begin{equation}
       \pa_r^l F(r)=\int_{-1}^1e^{irx^2}(ix^2)^lf(x)dx,
     \end{equation}
     this also verifies~\eqref{eqn22.760}.

     Taylor's theorem shows that
     \begin{equation}
       \rho_N(x)=x^{2(N+1)}\trho_N(x),
     \end{equation}
    where $\trho_N\in\cC^{\infty}((-1,1)).$ It remains to show that
     \begin{equation}\label{eqn28.555}
       R^{(2)}_N(r)=\int_{-1}^1e^{irx^2}\rho_N(x)\varphi(x)dx=\int_{-1}^1e^{irx^2}x^{2(N+1)}\trho_N(x)\varphi(x)dx
     \end{equation}
     also satisfies the conclusion of the lemma. 
      Integrating by parts $N+1$ times
     shows that
     \begin{equation}\label{eqn29.555}
       \begin{split}
         R^{(2)}_N(r)&=\int_{-1}^1e^{irx^2}\left(\pa_x\frac{-1}{2irx}\right)^{N+1}[x^{2(N+1)}\trho_N(x)\varphi(x)]dx\\
         &=\frac{1}{r^{N+1}}\int_{-1}^1e^{irx^2} \trho_{N,N}(x)dx,
         \end{split}
     \end{equation}
     for a function $\trho_{N,N}\in\cC^{\infty}_c((-1,1)).$  It follows from 
     standard stationary phase results that the integral is $O(r^{-\frac 12}),$
     and therefore
     \begin{equation}
       R^{(2)}_N(r)=O\left(r^{-(N+\frac 32)}\right).
     \end{equation}
     The  formula for $R^{(2)}_N(r)$ can be differentiated under the integral sign, as
     many times as we please, giving
     \begin{equation}
         \pa_r^lR^{(2)}_N(r)=\int_{-1}^1e^{irx^2}(ix^2)^l\rho_N(x)\varphi(x)dx.
     \end{equation}
     This is precisely the sort of integral appearing in~\eqref{eqn28.555} with
     $N$ replaced by $N+l,$ and therefore the foregoing argument shows that
     \begin{equation}
        \pa_r^lR^{(2)}_N(r)=O\left(r^{-(N+l+\frac 32)}\right).
     \end{equation}
     \Bk
   \end{proof}
   \Rd
   \begin{remark}
     This is an elaboration of result of Coddington and Levinson,  Theorem 3.2 in Chapter 5
     of~\cite{CoddingtonLevinson}, which states
     \begin{theorem*}[ Theorem 3.2 (b), Chapter 5 of~\cite{CoddingtonLevinson}]
       If $f(t)\sim \sum_{k=0}^{\infty}p_kt^{-k},$ $f(t)$ is continuously
       differentiable for $t>t_0$ and $f'(t)$ has an asymptotic expansion, then
       \begin{equation}
         f'(t)\sim -\sum_{k=1}^{\infty}kp_kt^{-(k+1)}.
       \end{equation}
     \end{theorem*}
     We will have several occasions to use this theorem below, as it allows us
     to identify the coefficients in the asymptotic expansions of derivatives of
     functions with asymptotic expansions, without any careful accounting.
   \end{remark}
\Bk
   
   \subsection{Some Notation}
\Rd  
   We define some non-standard notation that we use throughout this paper.
   \begin{enumerate}
   \item The notation $A\overset{d}{=}B$ means that $A$ is defined by $B.$
   \item $B_d$ is the subset of $\bbR^2$ defined by $B_d=[-d,d]\times [-d,d].$
   \item For $\alpha\in\bbR,$ $\cC_{\alpha}(\bbR)$ is the Banach space defined
     as the subspace of functions $f\in\cC^0(\bbR)$ for which
      \begin{equation}\label{eqn9.556}
    |f|_{\alpha}=\sup_{x\in\bbR}\{(1+|x|)^{\alpha}|f(x)|\}<\infty,
      \end{equation}
      with $|\cdot|_{\alpha}$  the norm on  $\cC_{\alpha}(\bbR).$
      \item We let $\eta=(\eta_1,\eta_2)$ denote a point on the unit circle
        $S^1\subset\bbR^2.$
        \item We let $\Id$ denote the identity operator: $\Id f=f.$
   \end{enumerate}
   \Bk
   
   {\small
  
\centerline{\sc Acknowledgments} I would like to thank Leslie Greengard for
suggesting this problem and for many interesting conversations along the way. I
would also like to thank Manas Rachh, Shidong Jiang, Felipe Vico, and Alex Barnett for many
helpful discussions of this material and pointers to the literature on this
problem. I am very grateful to Manas for carefully reading this manuscript
and providing very useful comments. I would like to thank David Jerison, Rafe
Mazzeo and Andras Vasy for useful discussions about these issues. I am 
very grateful for the support of the Flatiron Institute of the Simons
Foundation, and for the support of Stanford University through the Bergman
Visiting Scholarship.  I would finally like the thank the referees for Parts I,
II and III whose careful reading and thoughtful comments have lead to very substantial
improvements in all 3 papers. }
  
 \section{Outgoing Estimates for $\sigma$  and $\tau$ }\label{sec8.71}
 In this section we show that a solution, $(\sigma,\tau)$ to the integral
 equations
 \begin{equation}\label{eqn33.557}
   \left(
   \begin{matrix}
     \Id&D\\ C&\Id
   \end{matrix}\right) \left(
   \begin{matrix}
     \sigma\\ \tau
   \end{matrix}\right)=\left(
   \begin{matrix}
     g \\ h
   \end{matrix}\right),
 \end{equation}
 introduced in Section 5 of Part I, equations (64), (65) and (66), has an
 asymptotic expansion like that in~\eqref{eqn5.205}, if the data $(g,h)$ allows
 it. \Rd The kernel functions defining the operator $C,D$ are described below
 in~\eqref{eqn14.206},~\eqref{eqn40.764}.  When $(\sigma,\tau)$ has an expansion
 like that in~\eqref{eqn5.205} the resultant solutions, $u^{l,r}(r\eta),$ also
 have asymptotic expansions that are uniform in the asymptotic direction
 $\eta\in S^1.$ \Bk

  As a consequence of the non-compactness of the domain over which we integrate
  the layer potentials, uniform estimates are rather delicate to prove.  In Part
  III of this series we state the radiation conditions for wave-guide networks precisely
  \Rd and show that the solutions we have constructed satisfy these
  conditions. They imply uniqueness for the solution of the original PDE
  problem, which allows us to show that the integral equations
  in~\eqref{eqn33.557} have a trivial null-space and are therefore always
  solvable for data $(g,h)\in\cC_{\alpha}(\bbR)\oplus\cC_{\frac 12
    +\alpha}(\bbR),\,0<\alpha<\frac 12.$\Bk

 %% For $\alpha\in\bbR,$ the Banach spaces $\cC_{\alpha}(\bbR)\subset\cC^0(\bbR)$
%%  are defined above in~\eqref{eqn9.556}. They consist   of continuous functions $f$
%%  such that
%% \begin{equation}
%%   |f|_{\alpha}=\sup\{(1+|x|)^{\alpha}|f(x)|:\: x\in\bbR\}<\infty.
%% \end{equation}
  Estimates for $(\sigma,\tau)$ are obtained by a bootstrap argument: We use
  estimates for the kernel functions defining the operators $C,D$
  to first show that, if $(g,h)$ are \Rd differentiable \Bk and decrease rapidly enough,
  then the solution $(\sigma,\tau)$ to~\eqref{eqn33.557}, which a priori belongs
  to $\cC_{\alpha}(\bbR)\oplus\cC_{\frac 12 +\alpha}(\bbR),$ for an
  $0<\alpha<\frac 12,$ actually satisfies the optimal decay estimates
 \begin{equation}\label{eqn164.76}
   |\sigma(x_2)|\leq \frac{M}{(1+|x_2|)^{\frac 12}}\,\text{ and }\,
    |\tau(x_2)|\leq \frac{M}{(1+|x_2|)^{\frac 32}}.
 \end{equation}
 Using this first step, in Lemmas~\ref{lem3.62} and~\ref{lem4.62} we then show
 that $C\sigma, D\tau$ satisfy symbolic estimates of all orders. Using these
 estimates we finally show that $\sigma(x_2)$ and $\tau(x_2)$ have asymptotic
 expansions as $|x_2|\to\infty,$ see~\eqref{eqn200.61}.
 
The kernel functions, $k_C(x_2,y_2), k_D(x_2,y_2),$ for the operators $C$ and
$D$ \Rd are  introduced in equation (64) of Part I. Their properties are
established in Theorem 1 of Part I.  \Bk The kernels are given by
   \begin{equation}\label{eqn14.206}
     \begin{split}
       k_C(x_2,y_2)&=\fw_l^{[2]}(x_2,y_2)-\fw_r^{[2]}(x_2,y_2)+\\ &
       \phantom{lllllllllllmmmmmm}\pa^2_{x_1y_1}w^g_{l}(0,x_2;0,y_2)-\pa^2_{x_1y_1}w^g_{r}(0,x_2;0,y_2),\\
         k_D(x_2,y_2)&=\fw_l^{[0]}(x_2,y_2)-\fw_r^{[0]}(x_2,y_2)+w^g_{l}(0,x_2;0,y_2)-w^g_{r}(0,x_2;0,y_2).
     \end{split}
   \end{equation}
   \Rd The terms $w^g_{l,r}, \pa^2_{x_1y_1}w^g_{l,r},$ come from the wave-guide
   modes, see~\eqref{eqn4.765},~\eqref{eqn5.765}. If the $\{v^{l,r}_j(x_2)\}$
   are properly normalized, then
   \begin{equation}\label{eqn40.764}
     w^g_{l,r}(x_1,x_2;y_1,y_2)=\sum_{j=1}^{N_{l,r}}e^{\pm
       i\xi^{l,r}_j(x_1-y_1)}v^{l,r}_j(x_2)v^{l,r}_j(y_2),
     \text{ with }+\leftrightarrow r,\,-\leftrightarrow l.
   \end{equation}
These terms are infinitely differentiable except at the boundaries of the
supports of $q_{l,r},$ where they are $\cC^1.$ They are exponentially decaying,
along with all of their derivatives as $|x_2|\to\infty.$\Bk

 For the convenience of the reader we recall
 the expansions proved in~\cite{EpWG2023_1}. If the support of the potential
 lies in $|x_2|<d,$ then as proved in Theorem I of Part I,  we have the expansions:
   \begin{enumerate}
   \item  If both $\pm x_2>d$ and  $\pm y_2>d, \,j=0,1,2,$ then
     \begin{equation}\label{eqn171.65}
        \fw^{[j]}_{l,r}(x_2,y_2)\sim
        \frac{e^{ik_1(|x_2|+|y_2|)}}{(|x_2|+|y_2|)^{\frac{j+1}{2}}}\left[M^{\pm,\pm}_{j0;l,r}+\sum_{n=1}^{\infty}
          \frac{M^{\pm,\pm}_{jn;l,r}}{(|x_2|+|y_2|)^n}\right].
     \end{equation}
     In these subsets of $\bbR^2$ the kernel functions depend only on
    $|x_2|+|y_2|.$  We also have
     \begin{equation}\label{eqn122.56}
       \begin{split}
        \fw^{[j]}_{l,r}(x_2,y_2)&\sim \frac{e^{ik_1|y_2|}}
           {|y_2|^{\frac{j+1}{2}}}\left[\sum_{n=0}^{\infty}\frac{b^{\pm}_{jn;l,r}(x_2)}{|y_2|^n}\right],
           \text{
           where }\pm y_2>d,\, |x_2|<d;\\
              \fw^{[j]}_{l,r}(x_2,y_2)&\sim
       \frac{e^{ik_1|x_2|}}{|x_2|^{\frac{j+1}{2}}}\left[\sum_{n=0}^{\infty}
       \frac{c^{\pm}_{jn;l,r}(y_2)}{|x_2|^n}\right],       \text{
           where }\pm x_2>d,\, |y_2|<d.
       \end{split}
     \end{equation}
   \Rd The coefficients
   $b^{\pm}_{jn;l,r},c^{\pm}_{jn;l,r}\in\cC^{\infty}([-d,d]).$

   \item In $B_d^c\subset \bbR^2$ the functions $\fw^{[j]}_{l,r}(x_2,y_2)$ are
     infinitely differentiable, and expansions for their derivatives are
     obtained by differentiating the expansions in~\eqref{eqn171.65} and
     ~\eqref{eqn122.56}. Along the diagonal in $B_d$ the functions
     $\fw^{[j]}_{l,r}$ have $|x_2-y_2|^{2-j}\log|x_2-y_2|$-singularities. More
     precise descriptions of the diagonal singularities are given in equation (271) of Part I.
  \item
     In particular, applying the differential operators $\pa_{x_2}\mp ik_1,$ or
     $\pa_{y_2}\mp ik_1$ to the appropriate expansion above give asymptotic
     expansions for the functions $\pa_{x_2}\fw^{[j]}_{l,r}(x_2,y_2) \mp
     ik_1\fw^{[j]}_{l,r}(x_2,y_2),$ $\pa_{y_2}\fw^{[j]}_{l,r}(x_2,y_2) \mp
     ik_1\fw^{[j]}_{l,r}(x_2,y_2).$ These expansions easily imply that the
     kernels are outgoing, that is
       \begin{equation}\label{eqn173.65} 
         \begin{split}
           \pa_{x_2}\fw^{[j]}_{l,r}(x_2,y_2) \mp    ik_1\fw^{[j]}_{l,r}(x_2,y_2)=O((|x_2|+|y_2|)^{-\frac{j+3}{2}}),\\
           \pa_{y_2}\fw^{[j]}_{l,r}(x_2,y_2) \mp ik_1\fw^{[j]}_{l,r}(x_2,y_2)=O((|x_2|+|y_2|)^{-\frac{j+3}{2}}),
         \end{split}
       \end{equation}
       \Rd as the appropriate variable tends to $\pm\infty.$ \Bk  
      
   \end{enumerate}

   \subsection{The basic estimate}
 We assume that $\supp q_{l,r}\subset [-d,d];$ using  the relations
 \begin{equation}\label{eqn213.65}
   \sigma(x_2)=g(x_2)-D\tau(x_2),\quad\tau(x_2)=h(x_2)-C\sigma(x_2),
 \end{equation}
 we prove the following proposition.
 \begin{proposition}\label{prop1.206} Assume that
   $(\sigma,\tau)\in\cC_{\alpha}(\bbR)\oplus\cC_{\alpha+\frac 12}(\bbR),$ for an
   $0<\alpha<\frac 12,$ satisfy~\eqref{eqn213.65},
    with the data $g,h \in \cC^0(\bbR)\cap
   \cC^1((-\infty,-d)\cup (d,\infty))$  satisfying the estimates
 \begin{equation}\label{eqn20.211}
   \begin{split}
     |g(x_2)|\leq \frac{M}{(1+|x_2|)^{\frac 12}},&\quad
      |(\pa_{x_2}\mp ik_1)g(x_2)|\leq \frac{M}{(1+|x_2|)^{\frac 32}}\text{ for  }
        \pm x_2>d,\\
         |h(x_2)|\leq \frac{M}{(1+|x_2|)^{\frac 32}},&\quad
      |(\pa_{x_2}\mp ik_1)h(x_2)|\leq \frac{M}{(1+|x_2|)^{2}}\text{ for }
        \pm x_2> d.
        \end{split}
 \end{equation}
 Then $(\sigma,\tau),$  satisfy the estimates\Rd
 \begin{equation}\label{eqn42.557}
   |\tau(x_2)|\leq \frac{m_{\alpha,K}(|\sigma|_{\alpha}+M)}{(1+|x_2|)^{\frac 32}},\quad
     |\sigma(x_2)|\leq \frac{m_{\alpha,K}(|\tau|_{\alpha+\frac 12}+M)}{(1+|x_2|)^{\frac 12}},
 \end{equation}
 for all $x_2,$ the constants $m_{\alpha,K}$ depend $\alpha,$ and bounds on
 the kernels, $k_C,k_D.$\Bk
 \end{proposition}
 \begin{remark}
 \Rd  
   In the proof below $m_{\alpha,K}$ denotes a variety of positive constants that
   depend on $\alpha,$ and bounds on the kernel functions, but not the data.
   While the estimates below are valid for $|x_2|>d,$ there main
   interest lies in what they say as $|x_2|\to\infty.$ Note that we estimate
   different quantities depending on whether $x_2\to +\infty$ or $x_2\to
   -\infty$\Bk
 \end{remark}
 \begin{proof}
 Outside a neighborhood of $B_d$ the kernels $k_C(x_2,y_2),
 k_D(x_2,y_2)$ are infinitely differentiable.  \Rd Using the estimates on $\tau$
 and on the kernel $k_D$ along with the mean value theorem and the Lebesgue
 dominated convergence theorem, one can easily show to show that $D\tau(x_2)$ is
 differentiable, for $|x_2|>d,$ and that we can differentiate $D\tau(x_2)$ under
 the integral sign. Indeed this argument can be used, along with the asymptotic
 expansions for the kernel and its derivatives, to see that $D\tau(x_2)$ is
 infinitely differentiable where $|x_2|>d,$ and can be repeatedly differentiated
 under the integral sign. Similar remarks apply to 
 $C\sigma(x_2).$ \Bk Using the estimate in~\eqref{eqn173.65} for the kernel
 function, and the fact that
 \begin{equation}
   |\tau(x_2)|\leq\frac{|\tau|_{\alpha+\frac 12}}{(1+|x_2|)^{\alpha+\frac 12}}
 \end{equation}
we see that, for
 $\pm x_2\to \infty,$
 \begin{equation}\label{eqn196.30}
   \begin{split}
     |(\pa_{x_2}\mp ik_1)D\tau(x_2)|&\leq
     \int_{-\infty}^{\infty}\frac{m_{\alpha,K}|\tau|_{\alpha+\frac 12}dy_2}{|y_2|^{\alpha+\frac
         12}(|x_2|+|y_2|)^{\frac 32}}\\
     &= \frac{|\tau|_{\alpha+\frac 12}}{|x_2|^{\alpha+1}}\int_{-\infty}^{\infty}\frac{m_{\alpha,K}d t}{|t|^{\alpha+\frac
         12}(1+|t|)^{\frac 32}}\leq \frac{m_{\alpha,K}|\tau|_{\alpha+\frac 12}}{|x_2|^{\alpha+1}}.
   \end{split}
 \end{equation}
 
 Similarly we can show that, as $\pm x_2\to\infty,$
 \begin{equation}\label{eqn197.30}
   \begin{split}
     |(\pa_{x_2}\mp ik_1)C\sigma(x_2)|&\leq
     \int_{-\infty}^{\infty}\frac{m_{\alpha,K}|\sigma|_{\alpha} dy_2}{|y_2|^{\alpha}(|x_2|+|y_2|)^{\frac 52}}\\
     &= \frac{|\sigma|_{\alpha}}{|x_2|^{\alpha+\frac
         32}}\int_{-\infty}^{\infty}\frac{m_{\alpha,K}d t}{|t|^{\alpha}
       (1+|t|)^{\frac 52}}\leq
     \frac{m_{\alpha,K}|\sigma|_{\alpha}}{|x_2|^{\alpha+\frac 32}}.
   \end{split}
 \end{equation}
Equation~\eqref{eqn213.65} implies that, for $|x_2|>d,$
 \begin{equation}
   \pa_{x_2}\sigma=\pa_{x_2}g-\pa_{x_2}D\tau,\quad
   \pa_{x_2}\tau=\pa_{x_2}h-\pa_{x_2}C\sigma,
    \end{equation}
 and  therefore these estimates, along with~\eqref{eqn20.211},  show that, for
 $\pm x_2 >d,$ we have the estimates
 \begin{equation}\label{eqn218.66}
   \begin{split}
   &|\pa_{x_2}\sigma(x_2)\mp ik_1\sigma(x_2)|\leq
   \frac{m_{\alpha,K}|\sigma|_{\alpha}+M}{(1+|x_2|)^{\alpha+1}}\\
   &|\pa_{x_2}\tau(x_2)\mp ik_1\tau(x_2)|\leq
   \frac{m_{\alpha,K}|\tau|_{\alpha+\frac 12}+M}{(1+|x_2|)^{\alpha+\frac 32}}.
   \end{split}
 \end{equation}

 Let $\psi_{\pm}\in\cC^{\infty}(\bbR)$ be monotone and non-negative with 
 \begin{equation}\label{eqn190.76}
   \psi_{\pm}(x)=
   \begin{cases}
     &0\text{ for }\pm x<d,\\
     &1\text{ for }\pm x>d+\frac 12.
   \end{cases}
 \end{equation}
 Using integration by parts and~\eqref{eqn173.65}  we see
 that for $|x_2|>d,$ 
 \begin{equation}\label{eqn218.65}
   \begin{split}
    \Bigg|\int_{d}^{\infty}k_D(x_2,y_2)[\pa_{y_2}(\psi_+\tau)(y_2)+&ik_1\psi_+\tau(y_2)]
     dy_2\Bigg|\\
     &=\left|\int_{d}^{\infty}(\pa_{y_2}-ik_1)k_D(x_2,y_2)\psi_+\tau(y_2)dy_2\right|\\
     &\leq \int_{d}^{\infty}\frac{m_{\alpha,K}|\tau|_{\alpha+\frac 12}dy_2}{|y_2|^{\alpha+\frac
         12}(|x_2|+|y_2|)^{\frac 32}}\leq \frac{m_{\alpha,K}|\tau|_{\alpha+\frac 12}}{|x_2|^{\alpha+1}}.
     \end{split}
 \end{equation}
 A similar argument with $\psi_-(x_2)$ shows that
 \begin{equation}\label{eqn219.65}
   \left|\int_{-\infty}^{-d}k_D(x_2,y_2)[\pa_{y_2}(\psi_-\tau)(y_2)-ik_1\psi_-\tau(y_2)]dy_2\right|
   \leq \frac{m_{\alpha,K}|\tau|_{\alpha+\frac 12}}{|x_2|^{\alpha+1}}.
 \end{equation}

 The estimates in~\eqref{eqn218.66} show that,  for $\pm x_2>d,$ we have
 \begin{multline}
   \pa_{x_2}(\psi_{\pm}\tau)(x_2)=(\pa_{x_2}\psi_{\pm}(x_2))\tau(x_2)+\psi_{\pm}(x_2)\pa_{x_2}\tau(x_2)
   =\\
   (\pa_{x_2}\psi_{\pm}(x_2))\tau\pm ik_1\psi_{\pm}\tau(x_2)+O(|x_2|^{-(\alpha+\frac 32)}).
 \end{multline}
 Here the implicit constant in the $O$-term is that in~\eqref{eqn218.66}.  Using
 this in~\eqref{eqn218.65}, along with the triangle inequality and the leading
 order estimate for $|k_D(x_2,y_2)|,$ we see that
 \begin{multline}\label{eqn222.66}
   \left|2ik_1\int_{d}^{\infty}k_D(x_2,y_2)\psi_+\tau(y_2)dy_2\right|\leq\\
  m_{\alpha,K}(|\tau|_{\alpha+\frac 12}+M)\left[ \frac{1}{|x_2|^{1+\alpha}}+\frac{1}{(1+|x_2|)^{\frac 12}}+
    \int_{d}^{\infty}\frac{dy_2}{y_2^{\alpha+\frac 32}(x_2+y_2)^{\frac 12}}\right].
 \end{multline}
 An elementary estimate shows that the integral on the r.h.s. is
 bounded by $m_{\alpha,K}x_2^{-\frac 12}.$ There is a similar estimate arising
 from~\eqref{eqn219.65} for the
 integral from $-\infty$ to $-d.$

 As
 \begin{multline}
   D\tau(x_2)=\int_{-\infty}^{-d}k_D(x_2,y_2)\psi_-\tau(y_2)dy_2+
   \int^{\infty}_{d} k_D(x_2,y_2)\psi_+\tau(y_2)dy_2+\\
   \int_{-(d+1)}^{d+1}k_D(x_2,y_2)(1-\psi_+(y_2)-\psi_-(y_2))\tau(y_2)dy_2,
 \end{multline}
from~\eqref{eqn222.66} and the analogous estimate for $\psi_-\tau$  we conclude that
 \begin{equation}
   |D\tau(x_2)|\leq \frac{m_{\alpha}K(|\tau|_{\alpha+\frac 12}+M)}{(1+|x_2|)^{\frac 12}},
 \end{equation}
 and therefore~\eqref{eqn213.65} and~\eqref{eqn20.211} imply that
 \begin{equation}\label{eqn206.30}
   |\sigma(x_2)|\leq \frac{m_{\alpha}K(|\tau|_{\alpha+\frac 12}+M)+M}{(1+|x_2|)^{\frac 12}}
 \end{equation}
 as well.

 We can apply this argument for the integrals appearing in $C\sigma$ to
 conclude that
 \begin{equation}
   \begin{split}
     &\left|\int_{d}^{\infty}k_C(x_2,y_2)
     [\pa_{y_2}(\psi_+\sigma)(y_2)+ik_1\psi_+\sigma(y_2)]dy_2\right|\leq
     \frac{m_{\alpha,K}(|\sigma|_{\alpha}+M)}{x_2^2},\\
     &\left|\int_{-\infty}^{-d}k_C(x_2,y_2)
          [\pa_{y_2}(\psi_-\sigma)(y_2)-ik_1\psi_-\sigma(y_2)]dy_2\right|\leq
          \frac{m_{\alpha,K}(|\sigma|_{\alpha}+M)}{x_2^2}.
   \end{split}
 \end{equation}
 Combining these estimates with the estimate in~\eqref{eqn218.66} we 
 see that
 \begin{equation}
   |C\sigma(x_2)|\leq \frac{m_{\alpha,K}(|\sigma|_{\alpha}+M)}{(1+|x_2|)^{\frac 32}},
 \end{equation}
 and therefore~\eqref{eqn213.65} and~\eqref{eqn20.211} imply that
 \begin{equation}\label{eqn209.30}
   |\tau(x_2)|\leq \frac{m_{\alpha,K}(|\sigma|_{\alpha}+M)+M}{(1+|x_2|)^{\frac 32}}
 \end{equation}
 as well. This completes the proof of the proposition.
\end{proof}
 \subsection{The asymptotic expansion}
 From the form of the kernels it is reasonable to expect that $\sigma$ and $\tau$  have asymptotic
 expansions of the form
 \begin{equation}\label{eqn200.61}
   \sigma(x_2)\sim\frac{e^{ik_1|x_2|}}{|x_2|^{\frac 12}}\left[\sum_{l=0}^N\frac{
     a^{\pm}_l}{|x_2|^l}\right]\text{ and }
\tau(x_2)\sim\frac{e^{ik_1|x_2|}}{|x_2|^{\frac 32}}\left[\sum_{l=0}^N\frac{ b^{\pm}_l}{|x_2|^l}\right],
 \end{equation}
 for a choice of $N$ largely determined by the data $(g,h).$
 Choose a non-negative $\varphi\in\cC^{\infty}((-d-2\epsilon,d+2\epsilon))$ with
 \begin{equation}\label{eqn202.76}
    \varphi(x_2)=1 \text{ for }|x_2|\leq d+\epsilon;
 \end{equation}
 define
  \begin{equation}
   \begin{split}
     C_0\sigma(x_2)&=\int_{-d-\epsilon}^{d+\epsilon}k_C(x_2,y_2)\varphi(y_2)\sigma(y_2)dy_2,\quad C_1\sigma=(C-C_0)\sigma,\\
     D_0\tau(x_2)&=\int_{-d-\epsilon}^{d+\epsilon}k_D(x_2,y_2)\varphi(y_2)\sigma(y_2)dy_2,
     \quad D_1\tau=(D-D_0)\tau.
\end{split}
  \end{equation}
  
From~\eqref{eqn14.206},~\eqref{eqn171.65},~\eqref{eqn122.56} and the smoothness
of the kernels outside of $B_d$ it follows that $C_0\sigma(x_2), D_0\tau(x_2)$
have asymptotic expansions of exactly the sort given in~\eqref{eqn200.61}.
 To prove that $\sigma$ and $\tau$ do as well we
 first remove the oscillations from these functions by defining
 \begin{equation}
   \tsigma(x_2)=e^{-ik_1|x_2|}\sigma(x_2),\quad \ttau(x_2)=e^{-ik_1|x_2|}\tau(x_2).
 \end{equation}
 To obtain the asymptotic expansion we first  show that these functions
 satisfy symbolic estimates
 \begin{equation}
   | \pa_{x_2}^l\tsigma(x_2)|\leq \frac{C_l}{(1+|x_2|)^{l+\frac 12}}\text{ and }
    | \pa_{x_2}^l\ttau(x_2)|\leq \frac{C'_l}{(1+|x_2|)^{l+\frac 32}}.
 \end{equation}
 With these estimates in hand we show that expansions like those
 in~\eqref{eqn200.61} are in fact correct.

 We begin with an estimate on $D\tau.$
 \begin{lemma}\label{lem3.62} 
   Suppose that $\tau\in L^1(\bbR),$ then for each $l\geq 0,$ there is a $C_{l}$
   so that, for large $|x_2|,$ we have the estimate
   \begin{equation}
    \left| \pa_{x_2}^l[e^{-i|x_2|k_1}D\tau(x_2)]\right|\leq\frac{C_l\|\tau\|_{L^1}}{(1+|x_2|)^{l+\frac 12}}.
   \end{equation}
 \end{lemma}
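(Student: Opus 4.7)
The plan is to exploit the two asymptotic expansions of the kernel $k_D(x_2,y_2)$ given in \eqref{eqn171.65} and \eqref{eqn122.56}, together with the exponential decay of the wave-guide contributions $w^g_{l,r}$ outside $\supp q_{l,r}$, to show that $e^{-ik_1|x_2|}k_D(x_2,y_2)$ is a symbol of order $-\frac12$ in $x_2$, \emph{uniformly} in $y_2$, against which we can integrate $\tau\in L^1$.

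First I would split the integral
\begin{equation*}
  e^{-ik_1|x_2|}D\tau(x_2)=\int_{|y_2|\le d+\epsilon}e^{-ik_1|x_2|}k_D(x_2,y_2)\tau(y_2)dy_2
   +\int_{|y_2|>d+\epsilon}e^{-ik_1|x_2|}k_D(x_2,y_2)\tau(y_2)dy_2,
\end{equation*}
and treat the two pieces separately. For the first piece, since $|x_2|\to\infty$ and $|y_2|$ is bounded, the second line of \eqref{eqn122.56} gives
\begin{equation*}
  e^{-ik_1|x_2|}\fw^{[0]}(x_2,y_2)\sim \frac{1}{|x_2|^{\frac12}}\sum_{l\ge 0}\frac{c^{\pm}_{0l}(y_2)}{|x_2|^l},
\end{equation*}
with coefficients smooth in $y_2$, while $e^{-ik_1|x_2|}(w^g_{l}-w^g_{r})(0,x_2;0,y_2)$ is exponentially decaying in $|x_2|$. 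Hence $e^{-ik_1|x_2|}k_D(x_2,y_2)$ is a classical symbol of order $-\frac12$ in $x_2$, with derivative estimates uniform in $y_2$ on the bounded interval; this contributes at most $C_l(1+|x_2|)^{-l-\frac12}\|\tau\|_{L^1}$ after $l$ derivatives.

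For the second piece, I use \eqref{eqn171.65} to write
\begin{equation*}
  e^{-ik_1|x_2|}k_D(x_2,y_2)=\frac{e^{ik_1|y_2|}}{(|x_2|+|y_2|)^{\frac12}}\,a^\pm(x_2,y_2),
\end{equation*}
where $a^\pm$ is a classical symbol of order $0$ in the variable $(|x_2|+|y_2|)$ (the exponentially decaying wave-guide contribution is again negligible). Crucially, applying $\partial_{x_2}^l$ to this expression kills the cancellation-dangerous phase $e^{ik_1|x_2|}$ that has already been removed, so each derivative gains one power of $(|x_2|+|y_2|)^{-1}$, yielding
\begin{equation*}
  \bigl|\partial_{x_2}^l\bigl[e^{-ik_1|x_2|}k_D(x_2,y_2)\bigr]\bigr|\le\frac{C_l}{(|x_2|+|y_2|)^{l+\frac12}},\quad \pm y_2>d+\epsilon.
\end{equation*}
Integrating against $|\tau(y_2)|$ and using the elementary bound $(|x_2|+|y_2|)^{-l-\frac12}\le (1+|x_2|)^{-l-\frac12}$ completes the estimate for the second piece.

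The only potential obstacle is verifying that differentiation under the integral sign is justified and that the symbolic bounds for $\partial_{x_2}^l[e^{-ik_1|x_2|}k_D]$ hold uniformly as $|y_2|$ crosses from the bounded to the unbounded regime; this is handled by the fact that both asymptotic descriptions match in the overlap region $|y_2|\sim d+\epsilon$, and the derivatives of the smooth, compactly-supported cut-off contribute only lower-order (in fact $O(|x_2|^{-\infty})$ after the exponential is stripped off) terms since $k_D$ is smooth in $y_2$ on $[-d-\epsilon,d+\epsilon]$. Once the two pieces are combined, adding the two bounds gives the stated inequality.
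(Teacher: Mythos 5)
Your proof is correct and follows essentially the same route as the paper: split $D\tau$ with a cutoff at the support of the potential, handle the bounded-$y_2$ piece via the expansion \eqref{eqn122.56} (plus exponential decay of the guided-mode terms), and bound the unbounded piece by differentiating the phase-stripped kernel using \eqref{eqn171.65} to get $|\pa_{x_2}^l[e^{-ik_1|x_2|}k_D(x_2,y_2)]|\leq C_l(|x_2|+|y_2|)^{-l-\frac 12}$, then integrate against $|\tau|\in L^1$. No substantive differences from the paper's argument.
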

 \begin{remark}
   If $g,h$ satisfy the hypotheses of Proposition~\ref{prop1.206},
   then~\eqref{eqn42.557} implies that $\tau\in L^1(\bbR),$ \Rd but note that this
   is not generally true for functions in $\cC_{\alpha+\frac 12}(\bbR).$\Bk
 \end{remark}
 \begin{proof}
   With $\varphi$ as above, we write
   \begin{multline}
     e^{-ik_1|x_2|}D\tau(x_2)=\int_{-d-\epsilon}^{d+\epsilon}e^{-ik_1|x_2|}k_D(x_2,y_2)\tau(y_2)\varphi(y_2)dy_2+\\
     \int_{|y_2|>d}e^{-ik_1|x_2|}k_D(x_2,y_2)\tau(y_2)(1-\varphi(y_2))dy_2.
   \end{multline}
   The estimates for the compactly supported part follow immediately from~\eqref{eqn171.65}
   and~\eqref{eqn122.56}  and the fact that $k_D(x_2,y_2)$ is smooth outside of
   $B_d.$ Now suppose that $|x_2|>d;$ using~\eqref{eqn171.65}, \Rd and the fact that
   we can differentiate these asymptotic expansions, \Bk  shows that
   \begin{equation}\begin{split}
     \left|\pa_{x_2}^l\left[
       \int_{|y_2|>d}e^{-ik_1|x_2|}k_D(x_2,y_2)\tau(y_2)(1-\varphi(y_2))dy_2\right]\right|\leq&
     \int_{|y_2|>d}\frac{C_l|\tau(y_2)|dy_2}{(|x_2|+|y_2|)^{l+\frac 12}}\\
    \leq &\frac{C_l\|\tau\|_{L^1}}{|x_2|^{l+\frac 12}}.
    \end{split}
   \end{equation}
   The contribution of the wave-guide modes decays exponentially as $|x_2|\to\infty.$
 \end{proof}

 With slightly different hypotheses, we estimate $C\sigma.$
  \begin{lemma}\label{lem4.62} 
   Suppose that $\sigma\in\cC^1(\bbR)\cap L^{\infty}(\bbR),$ and
   $\pa_{x_2}\tsigma\in L^1(\bbR),$ then for each $l\geq 0,$ there is a $C_{l}$
   so that, for large $|x_2|,$ we have the estimate
   \begin{equation}
     \left|\pa_{x_2}^l[e^{-i|x_2|k_1}C\sigma(x_2)]\right|\leq\frac{C_l(\|\tsigma\|_{L^{\infty}}+\|\tsigma'\|_{L^1})}{(1+|x_2|)^{l+\frac 32}}.
   \end{equation}
  \end{lemma}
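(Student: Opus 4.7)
The plan is to adapt the argument of Lemma~\ref{lem3.62}, but since $\sigma$ is now only assumed bounded rather than integrable, we must pay for the extra factor of $(1+|x_2|)^{-1}$ in the claimed decay by an integration by parts in $y_2$ that exploits the outgoing character of $\sigma$ together with the outgoing property~\eqref{eqn174.65} of the kernel $k_C$. As a preliminary step, split $C\sigma(x_2)$ using $\varphi$ into a compactly supported piece, handled immediately via the asymptotic expansions~\eqref{eqn122.56} and~\eqref{eqn14.206} together with $\sigma\in L^\infty$, and a tail over $|y_2|>d$ which requires the main argument. As in the previous lemma the two half-line tails $y_2>d$ and $y_2<-d$ are treated symmetrically, so it suffices to describe the former.

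On $y_2>d+\epsilon$ write $\sigma(y_2)=e^{ik_1 y_2}\tsigma(y_2)$ and rewrite the kernel using the algebraic identity
$$2ik_1\, k_C(x_2,y_2)\,e^{ik_1 y_2}=\pa_{y_2}\bigl[k_C(x_2,y_2)e^{ik_1 y_2}\bigr]-\bigl[(\pa_{y_2}-ik_1)k_C(x_2,y_2)\bigr]e^{ik_1 y_2}.$$
Integrating the divergence term by parts against $\tsigma(1-\varphi)$ moves a derivative off the kernel: the boundary contribution at $y_2=+\infty$ vanishes because $|k_C|=O(|y_2|^{-3/2})$ and $\tsigma\in L^\infty$, while the boundary contribution at $y_2=d+\epsilon$ vanishes identically since $\varphi\equiv 1$ there. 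The resulting integral is controlled by the crude bound $|k_C(x_2,y_2)|\leq C|x_2|^{-3/2}$ on $|y_2|>d$ together with the hypothesis $\tsigma'\in L^1(\bbR)$ (the $\varphi'$-piece is a harmless compact contribution, handled with $\|\tsigma\|_{L^\infty}$). The remaining term, involving $(\pa_{y_2}-ik_1)k_C$, enjoys the improved decay $(|x_2|+|y_2|)^{-5/2}$ by~\eqref{eqn174.65}, and integrated directly against $\tsigma\in L^\infty$ contributes another $O(|x_2|^{-3/2})$. This establishes the $l=0$ case.

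For $l\geq 1$ apply the same scheme to $(\pa_{x_2}-ik_1)^l k_C$ in place of $k_C$. Indeed, for $x_2>d$ one has the identity $\pa_{x_2}^l\bigl[e^{-ik_1 x_2}f(x_2)\bigr]=e^{-ik_1 x_2}(\pa_{x_2}-ik_1)^l f(x_2)$, and differentiation under the integral sign transfers $(\pa_{x_2}-ik_1)^l$ onto the kernel; by~\eqref{eqn173.65} applied inductively (or read off from~\eqref{eqn171.65}), the $l$-fold $x_2$-differentiated kernel decays like $(|x_2|+|y_2|)^{-3/2-l}$, and the associated outgoing remainder $(\pa_{y_2}-ik_1)(\pa_{x_2}-ik_1)^l k_C$ gains an additional factor $(|x_2|+|y_2|)^{-1}$. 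Repeating the integration-by-parts step in $y_2$ then yields the claimed bound $C_l(\|\tsigma\|_{L^\infty}+\|\tsigma'\|_{L^1})(1+|x_2|)^{-l-3/2}$. The principal technical point—and the reason the proof of Lemma~\ref{lem3.62} does not transcribe directly—is the single integration by parts in $y_2$: under the weaker $L^\infty$ hypothesis on $\sigma$ this is the only mechanism available to convert the outgoing character of $\sigma$ into the extra power of decay, and it is the careful balancing of the two terms it produces against the outgoing estimates~\eqref{eqn173.65}--\eqref{eqn174.65} that delivers the symbol bound at the sharp rate.
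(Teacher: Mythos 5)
Your argument is correct and is essentially the paper's proof: the same $\varphi$-splitting with the compact part read off from the kernel expansions, a single integration by parts in $y_2$ on the tail that trades the oscillation $e^{ik_1|y_2|}$ for one extra power of decay, balanced between $\|\tsigma'\|_{L^1}$ (derivative on $\tsigma(1-\varphi)$) and $\|\tsigma\|_{L^{\infty}}$ (the term with improved kernel decay), with the higher-$l$ cases obtained by letting the $x_2$-derivatives, in the form $(\pa_{x_2}-ik_1)^l,$ fall on the kernel and gain $(|x_2|+|y_2|)^{-l}.$ The only cosmetic difference is that the paper substitutes the explicit expansion~\eqref{eqn171.65} and integrates the resulting factor $e^{2ik_1|y_2|}$ by parts, whereas you organize the identical step through the identity $2ik_1k_Ce^{ik_1y_2}=\pa_{y_2}[k_Ce^{ik_1y_2}]-[(\pa_{y_2}-ik_1)k_C]e^{ik_1y_2}$ and the outgoing estimate~\eqref{eqn174.65}, in the style of the proof of Proposition~\ref{prop1.206}.
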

  \begin{remark}
    Since
    \begin{equation}
      \tsigma(x_2)=e^{-ik_1|x_2|}g(x_2)-e^{-ik_1|x_2|}D\tau(x_2),
    \end{equation}
    if $\tau$  and $\pa_{x_2}[e^{-ik_1|x_2|}g]$ are  in $L^1(\bbR)$ then $\pa_{x_2}\tsigma\in L^1(\bbR)$ as well.
  \end{remark}
 \begin{proof}  With $\varphi$ defined in~\eqref{eqn202.76},
   we write
   \begin{multline}
     e^{-ik_1|x_2|}C\sigma(x_2)=\int_{-d-\epsilon}^{d+\epsilon}e^{-ik_1|x_2|}k_C(x_2,y_2)\sigma(y_2)\varphi(y_2)dy_2+\\
     \int_{|y_2|>d}e^{-ik_1|x_2|}k_C(x_2,y_2)\sigma(y_2)(1-\varphi(y_2))dy_2.
   \end{multline}
   The estimates for the compactly supported part follow as before.

  Now suppose that $|x_2|>d.$ We first consider the $l=0$ case;
  using~\eqref{eqn171.65} we integrate by parts once
  \begin{equation}\label{eqn208.62}
    \begin{split}
          &\int_{|y_2|>d}e^{2ik_1|y_2|}[e^{-ik_1(|x_2|+|y_2|)}k_C(x_2,y_2)]\tsigma(y_2)(1-\varphi(y_2))dy_2=\\
   &\frac{e^{2ik_1 |y_2|}}{2ik_1}\tsigma(y_2)(1-\varphi(y_2))[e^{-ik_1(|x_2|+|y_2|)}k_C(x_2,y_2)]
         \Bigg|_{-\infty}^{-d}+\Bigg|^{\infty}_{d}\\
         &+\int_{|y_2|>d} \frac{e^{2ik_1 |y_2|}}{2ik_1}\pa_{y_2}\left[\tsigma(y_2)(1-\varphi(y_2))
           [e^{-ik_1(|x_2|+|y_2|)}k_C(x_2,y_2)]\right]dy_2.
    \end{split}
   \end{equation}
   The boundary terms on the second line are zero.  From~\eqref{eqn171.65} it follows
   that the term with derivative placed on
   $[e^{-ik_1(|x_2|+|y_2|)}k_C(x_2,y_2)]$ is bounded by
   $C\|\sigma\|_{L^{\infty}}|x_2|^{-\frac 32}.$  If the
   derivative is placed on $\tsigma(y_2)(1-\varphi(y_2)),$ then the argument
   used in the previous proof shows that this term is
   $O(\|\tsigma'\|_{L^1}|x_2|^{-\frac 32}).$ Applying $\pa_{x_2}^l$ to the
   expression on the last line in~\eqref{eqn208.62} we easily establish the
   remaining estimates.
 \end{proof}

 Using these estimates we can now prove symbolic estimates on $\tsigma,\ttau;$
 we begin with $\tsigma.$ The equations in~\eqref{eqn213.65} imply that
 \begin{equation}
   \tsigma(x_2)=e^{-ik_1|x_2|}g(x_2)-e^{-ik_1|x_2|}D\tau(x_2).
 \end{equation}
 As $\tau\in L^1(\bbR),$ Lemma~\ref{lem3.62} shows that
 \begin{equation}
   |\pa_{x_2}^le^{-ik_1|x_2|}D\tau(x_2)|\leq\frac{C_l}{(1+|x_2|)^{l+\frac 12}}.
 \end{equation}
 If we assume that the data also satisfies symbolic estimates
 $$|\pa_{x_2}^le^{-ik_1|x_2|}g(x_2)|\leq C_l'(1+|x_2|)^{-(l+\frac 12)}\text{ for }l=0,1,\dots,N,$$
 then it follows that
 \begin{equation}\label{eqn211.62}
      |\pa_{x_2}^l\tsigma(x_2)|\leq\frac{C''_l}{(1+|x_2|)^{l+\frac 12}},\text{
        for }l=0,\dots,N.
 \end{equation}

 Similarly we use the relation
 \begin{equation}
   \ttau(x_2)=e^{-ik_1|x_2|}h(x_2)-e^{-ik_1|x_2|}C\sigma(x_2),
 \end{equation}
 and Lemma~\ref{lem4.62} to prove symbolic estimates for $\ttau.$ We must assume
 that $\tsigma'$ is integrable (which holds if~\eqref{eqn211.62} is valid for $l=1$) and
 that
 $$|\pa_{x_2}^l[e^{-ik_1|x_2|}h(x_2)]|=O((1+|x_2|)^{-(l+\frac 32)}),\text{ for }l=0,\dots,N. $$
In this case we have the estimates
 \begin{equation}\label{eqn214.63}
        |\pa_{x_2}^l\ttau(x_2)|\leq\frac{C''_l}{(1+|x_2|)^{l+\frac 32}},\text{
        for }l=0,\dots,N.
 \end{equation}
 Using these estimates we can now prove that $(\sigma,\tau)$ have asymptotic
 expansions.

 As noted above, it is obvious that $C_0\sigma$ and
 $D_0\tau$ have the desired asymptotic expansions.  To prove the
 existence of the asymptotic expansions for $\sigma,\tau,$ we consider the functions
 \begin{equation}\label{eqn77.557}
   \begin{split}
   s(w)&=w^{-\frac 12}[e^{-ik_1|x_2|}D_1\tau(x_2)]_{x_2=w^{-1}}\\
   t(w)&=w^{-\frac 32}[e^{-ik_1|x_2|}C_1\sigma(x_2)]_{x_2=w^{-1}}.
   \end{split}
   \end{equation}
 To prove the existence of asymptotic expansions of order $N$ as $\pm
 x_2\to\infty,$ it suffices to show the existence of the limits
 \begin{equation}
   \lim_{w\to 0^{\pm}}\pa_{w}^ls(w) \text{ resp. }
    \lim_{w\to 0^{\pm}}\pa_{w}^lt(w)\text{ for }l=0,\dots,N+1.
 \end{equation}
 The proof that these limits exist makes usage of the asymptotic expansions
 in~\eqref{eqn171.65}; as before the contribution of the wave-guide modes decays
 exponentially as $|x_2|\to\infty.$ We assume that the data have asymptotic expansions:
 \begin{equation}\label{eqn217.63}
   \begin{split}
     g(x_2)&=\frac{e^{ik_1|x_2|}}{|x_2|^{\frac 12}}\sum_{j=0}^N\frac{g_j^{\pm}}{|x_2|^j}+O\left(|x_2|^{-N-\frac 32}\right),\\
       h(x_2)&=\frac{e^{ik_1|x_2|}}{|x_2|^{\frac 32}}\sum_{j=0}^N\frac{h_j^{\pm}}{|x_2|^j}+O\left(|x_2|^{-N-\frac 52}\right).
   \end{split}
 \end{equation}
 If the incoming data is defined by wave-guide modes, then this hypothesis is
   trivially satisfied, with all coefficients zero.

 We now prove the following theorem:
 \begin{theorem}\label{thm2.75}
   Let $(\sigma,\tau)\in\cC_{\alpha}(\bbR)\oplus\cC_{\alpha+\frac 12}(\bbR)$
   solve~\eqref{eqn143.35}. Assuming~\eqref{eqn217.63} holds for some $N\geq 1,$
   we have the following asymptotic expansions
   \begin{equation}\label{eqn230.63}
     \begin{split}
   &\sigma(x_2)=\frac{e^{ik_1|x_2|}}{|x_2|^{\frac
       12}}\sum_{l=0}^N\frac{a^{\pm}_l}{|x_2|^l}+O\left(|x_2|^{-N-\frac
         32}\right),\\
       &\tau(x_2)=\frac{e^{ik_1|x_2|}}{|x_2|^{\frac
       32}}\sum_{l=0}^{N-1}\frac{b^{\pm}_l}{|x_2|^l}+O\left(|x_2|^{-N-\frac
     32}\right), \text{ as }\pm x_2\to\infty.
     \end{split}
     \end{equation}
 \end{theorem}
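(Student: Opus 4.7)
The plan is to verify that the auxiliary functions $s(w)$ and $t(w)$ introduced just before the theorem extend to $\cC^{N+1}$ functions on neighborhoods of $w=0^{\pm}$. By Taylor's theorem this translates into the asymptotic expansions of order $N$ for $D_1\tau$ and $C_1\sigma$ announced in~\eqref{eqn230.63}, since $D_0\tau$ and $C_0\sigma$ visibly admit expansions of the same form directly from the second line of~\eqref{eqn122.56}. Combining with the transmission relations~\eqref{eqn213.65} and the hypotheses~\eqref{eqn217.63} on $g,h$ then yields~\eqref{eqn230.63} for $\sigma$ and $\tau$ themselves.

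The first step is preparatory. The leading behaviour in~\eqref{eqn217.63} implies the hypothesis~\eqref{eqn20.211}, so Proposition~\ref{prop1.206} provides the basic decay~\eqref{eqn164.76}. Iterating Lemmas~\ref{lem3.62} and~\ref{lem4.62} along the scheme sketched in~\eqref{eqn211.62}--\eqref{eqn214.63}, and using the fact that~\eqref{eqn217.63} translates to symbolic estimates of all relevant orders on $e^{-ik_1|x_2|}g$ and $e^{-ik_1|x_2|}h$, one obtains
\[
  |\pa_{x_2}^l\tsigma(x_2)|\leq \frac{C_l}{(1+|x_2|)^{l+\frac 12}},\quad |\pa_{x_2}^l\ttau(x_2)|\leq \frac{C_l'}{(1+|x_2|)^{l+\frac 32}},\quad l=0,\dots,N+1.
\]

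For the main step, restrict to $x_2>d$ and to $D_1\tau$; the treatment of $C_1\sigma$ and of the sign $x_2\to-\infty$ is symmetric. Writing $\tau(y_2)=e^{ik_1|y_2|}\ttau(y_2)$ and inserting~\eqref{eqn171.65} with $j=0$ expresses $e^{-ik_1 x_2}D_1\tau(x_2)$, modulo a remainder $O(x_2^{-N-\frac 32})$, as a finite sum of oscillatory integrals
\[
  I_l^{\pm}(x_2)=\int_{\pm y_2>d}\frac{e^{2ik_1|y_2|}(1-\varphi(y_2))\ttau(y_2)}{(x_2+|y_2|)^{l+\frac 12}}\,dy_2,\qquad l=0,\dots,N.
\]
Since $e^{2ik_1|y_2|}$ is non-stationary on each integration region and $1-\varphi$ vanishes at the inner endpoint, repeated integration by parts in $y_2$, exactly as in the proof of Lemma~\ref{lem4.62}, yields an asymptotic expansion of each $I_l^{\pm}$ in descending half-integer powers of $x_2$; the expansion coefficients are oscillatory moments of the form $\int e^{2ik_1|y_2|}(1-\varphi(y_2))\ttau(y_2)|y_2|^{k}\,dy_2$, finite under iterated IBP by virtue of the symbolic estimates above. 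Summing, and applying $\pa_{x_2}^m$ to the integrand and repeating the argument, gives $s(w)\in\cC^{N+1}([0,\delta))$. The $j=2$ line of~\eqref{eqn171.65}, inserted into the kernel $k_C$ of~\eqref{eqn14.206}, gives the identical conclusion for $t(w)$.

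The principal obstacle is that~\eqref{eqn171.65} is only valid for $\pm y_2>d$, so one cannot integrate by parts cleanly across the seam $|y_2|\sim d$. This is circumvented by the splitting $C=C_0+C_1$, $D=D_0+D_1$: the compactly-supported pieces $C_0\sigma$ and $D_0\tau$ are handled directly by the second line of~\eqref{eqn122.56}, which supplies expansions of exactly the form required and is term-by-term differentiable in $x_2$; the pieces $C_1\sigma$ and $D_1\tau$ involve only $|y_2|>d$, where~\eqref{eqn171.65} applies and the IBP scheme goes through. A secondary technical point is commuting $\pa_{x_2}^m$ past the integral sign in the IBP argument for $D_1\tau$ and $C_1\sigma$, which is legitimate because the differentiated kernel still satisfies the schema~\eqref{eqn171.65} term-by-term.
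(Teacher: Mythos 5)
Your proposal is correct and follows essentially the same route as the paper: the same splitting $C=C_0+C_1,$ $D=D_0+D_1,$ the basic decay from Proposition~\ref{prop1.206} and the symbolic estimates via Lemmas~\ref{lem3.62} and~\ref{lem4.62}, and then the reduction of the tail integrals to regularized oscillatory moments using the non-stationary factor $e^{2ik_1|y_2|}$ and integration by parts, with the compact pieces handled by~\eqref{eqn122.56}. The only difference is presentational: the paper carries out the key step in the variable $w=1/x_2,$ differentiating under the integral, integrating by parts, and invoking dominated convergence to show $\lim_{w\to 0^{\pm}}\pa_w^l s$ and $\pa_w^l t$ exist (with a preliminary integration by parts for $C_1\sigma$ because $\tsigma\notin L^1$), whereas you assert the expansion of the $I_l^{\pm}$ directly in $x_2;$ your moment-regularization scheme absorbs both of these points.
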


 \begin{proof}
With $\varphi$ defined in~\eqref{eqn202.76}, we let
   \begin{equation}\label{eqn220.76}
      \varphi_{\pm}(y_2)=\chi_{(d,\infty)}(\pm y_2)[1-\varphi(y_2)].
   \end{equation}
   We first consider $s(w),$ defined in~\eqref{eqn77.557}, for $w>0.$ Using the
   expansions for the kernel function in~\eqref{eqn171.65}
   and~\eqref{eqn122.56}, we see that $s(w)$ has an expansion with terms that
   are constant multiples of
 \begin{equation}\label{eqn80.558}
   s_{\pm}^k(w)\overset{d}{=}\int_{|y_2|>d}\frac{e^{2ik_1
       y_2}\ttau(y_2)\varphi_{\pm}(y_2)w^kdy_2}{(1\pm w y_2)^{k+\frac
       12}},\quad k=0,1,\dots
 \end{equation}
 We give the details for $y_2>0,$ the other cases are essentially identical.

 Terms with $k>l$ do not contribute to
 \begin{equation}
   \lim_{w\to 0^+}\pa_w^ls(w).
 \end{equation}
 \Rd
 Applying the Leibniz formula to the integrand in~\eqref{eqn80.558} we see that,
 for $k>l,$
 \begin{equation}
   \pa_{w}^l\left(\frac{w^k}{(1\pm wy_2)^{k+\frac 12}}\right)=\sum_{j=0}^l
   C_{kl}\frac{w^{k-l}(wy_2)^{l-j}}{(1\pm wy_2)^{l-j+k+\frac 12}}\leq Cw^{k-l},
 \end{equation}
 showing that these terms tend to zero as $w\to 0^{\pm}.$
 \Bk

 The error term from truncating the expansion for $k_D$ after $N$
 terms is
 \begin{multline}
   R_N(x_2,y_2)=\\
   [\fw_l^{[0]}(x_2,y_2)-\fw_r^{[0]}(x_2,y_2)]-\frac{e^{ik_1(x_1+y_2)}}{(x_2+y_2)^{\frac
       12}}\left[\sum_{n=0}^{N}\frac{M^{+,+}_{0n;l}-M^{+,+}_{0n;r}}
   {(x_2+y_2)^n}\right],
 \end{multline}
 which is a function of $x_2+y_2,$ that is $O((x_2+y_2)^{-(N+\frac
   32)}).$ With $\tR_N(x_2,y_2)=e^{-ik(x_2+y_2)}R_N(x_2,y_2),$ its contribution to
 $s(w)$ is
 \begin{equation}
  E_N(w)= \int_{d}^{\infty}\tR_N(w^{-1}+y_2)e^{2ik_1y_2}\varphi_+(y_2)\ttau(y_2)dy_2;
 \end{equation}
if we differentiate w.r.t. $w$ we obtain
 \begin{equation}
   \pa_w E_N(w)=-\int_{d}^{\infty}w^{-2}\tR^{[1]}_N(w^{-1}+y_2)e^{2ik_1y_2}\varphi_+(y_2)\ttau(y_2)dy_2.
 \end{equation}
 Lemma~\ref{lem0} shows that $\tR_N^{[1]}(x)=O(x^{-(N+\frac 52)}),$ so the integral is
 bounded by
 \begin{equation}
   C\int_{d}^{\infty}\frac{w^{N+\frac 12}}{(1+wy_2)^{N+\frac 52}}\varphi_+(y_2)|\ttau(y_2)|dy_2.
 \end{equation}
 As $\ttau\in L^1$ we see that the limit is zero as $w\to 0^+.$ Using the Leibniz formula, and
 Lemma~\ref{lem0} repeatedly we can show that $\tR_N^{[l]}(x)=O(x^{-(N+\frac 32+l)}),$ and therefore
 \begin{equation}
   \lim_{w\to 0^+}\pa_w^l E_N(w)=0\text{ for }l\leq N+1.
 \end{equation}

 It is clear that for $w>0$ we can differentiate under the integral sign and
 apply the Leibniz rule to obtain
 \begin{equation}
   \pa_w^ls_+^k(w)=
   \sum_{m=0}^kC^k_{lm}\int_{d}^{\infty}e^{2ik_1y_2}\frac{w^{k-m}y_2^{l-m}\ttau(y_2)\varphi_+(y_2)dy_2}{(1+wy_2)^{k+l-m+\frac 12}},
 \end{equation}
 for some constants $\{C^k_{lm}\}.$ As noted above, we only need to consider
 $k\leq l.$ Using the fact that $\pa_{y_2}e^{2ik_1y_2}=2ik_1e^{2ik_1y_2},$ we
 integrate this expression by parts $(l-k)$-times to obtain
 \begin{equation}\label{eqn220.63} 
   \pa_w^ls_+^k(w)=
   \frac{1}{(-2ik_2)^{l-k}}
   \sum_{m=0}^kC^k_{lm}\int_{d}^{\infty}e^{2ik_1y_2}\pa_{y_2}^{l-k}\left[\frac{w^{k-m}y_2^{l-m}\ttau(y_2)
       \varphi_+(y_2)dy_2}{(1+wy_2)^{k+l-m+\frac 12}}\right].
 \end{equation}
 Applying the Leibniz formula again we see that
 \begin{equation}\label{eqn221.63}
   \pa_{y_2}^{l-k}\left[\frac{w^{k-m}y_2^{l-m}\ttau(y_2)
       \varphi_+(y_2)}{(1+wy_2)^{k+l-m+\frac 12}}\right]=
   \sum_{j=0}^{l-k}b_j\frac{w^{k-m+j}\pa_{y_2}^{l-k-j}[y_2^{l-m}\ttau(y_2)\varphi_+(y_2)]}{(1+wy_2)^{k+l+j-m+\frac 12}}.
 \end{equation}
 Using the estimates in~\eqref{eqn214.63} we can show that
 \begin{equation}
   |\pa_{y_2}^{l-k-j}[y_2^{l-m}\ttau(y_2)\varphi_+(y_2)]|\leq Cy_2^{k-m+j-\frac 32}.
 \end{equation}
 This estimate shows that the $j$th term in~\eqref{eqn221.63} is bounded by
 \begin{equation}
   C\frac{(wy_2)^{k-m+j}}{(1+wy_2)^{k-m+j+l+\frac 12}}\cdot\frac{1}{y_2^{\frac
       32}}\leq C \frac{1}{(1+wy_2)^{l+\frac 12}}\cdot\frac{1}{y_2^{\frac
       32}},
 \end{equation}
 which shows that the integrand in~\eqref{eqn220.63} is uniformly integrable as
 $w\to 0^+.$ The Lebesgue dominated convergence theorem then implies that
 \begin{equation}
   \lim_{w\to 0^+}\pa_w^ls_+(w)\text{ exists for }l=0,\dots,N.
 \end{equation}

 Applying the same argument to $s_-(w),$ and for $w<0,$ we conclude that $s(w)$ has  order $N$
 Taylor expansions at $w=0^{\pm},$ and therefore, assuming~\eqref{eqn217.63}, we have
 \begin{equation}\label{eqn226.63}
   \sigma(x_2)=\frac{e^{ik_1|x_2|}}{|x_2|^{\frac
       12}}\sum_{l=0}^N\frac{a^{\pm}_l}{|x_2|^l}+O\left(|x_2|^{-N-\frac
     32}\right),
   \text{ as }\pm x_2\to\infty.
 \end{equation}

 A very similar argument applies to analyze $\pa_w^lt(w).$ An additional step is
 needed as $\tsigma(y_2)$ is not in $L^1(\bbR).$ We integrate the terms in the asymptotic
 expansion of $C_1\sigma$ by parts to obtain
 \begin{equation}\label{eqn87.45}
   \begin{split}
    & \int_{\pm y_2>d}\frac{e^{2ik_1y_2}\tsigma(y_2)\varphi_{\pm}(y_2)dy_2}{(x_2\pm
       y_2)^{l+\frac 32}}=\\
     & -\int_{\pm y_2>d}\frac{e^{2ik_1y_2}}{2ik_1}\left[\frac{\pa_{y_2}(\tsigma(y_2)\varphi_{\pm}(y_2))}{(x_2\pm
       y_2)^{l+\frac 32}}\mp \left(l+\frac 32\right)\frac{\tsigma(y_2)\varphi_{\pm}(y_2)}{(x_2\pm
         y_2)^{l+\frac 52}}\right]dy_2.
   \end{split}
 \end{equation}
 Since $|\pa_{y_2}(\tsigma(y_2)\varphi_{\pm}(y_2))|\leq C/|y_2|^{\frac 32}$ the
 arguments used to analyze $s_{\pm}$ apply to these terms as well. The
 contribution of the second term on the r.h.s.  to $t_{\pm}(w)$  takes the form
 \begin{multline}
   \int_{\pm y_2>d}\frac{e^{2ik_1y_2}}{2ik_1}\left[\frac{w^{l+\frac 52}\tsigma(y_2)\varphi_{\pm}(y_2)}{(1\pm
       wy_2)^{l+\frac 52}}\right]dy_2=\\
    \int_{\pm y_2>d}\frac{e^{2ik_1y_2}}{2ik_1}\left[\frac{w^{l+\frac 32}\tsigma(y_2)\varphi_{\pm}(y_2)}{y_2(1\pm
         wy_2)^{l+\frac 32}}\cdot\frac{wy_2}{(1\pm wy_2)}\right]dy_2.
 \end{multline}
 In the end we loose one term in the asymptotic expansion, and, with small
 modifications, our earlier arguments apply to show that
 \begin{equation}
  \lim_{w\to 0^{\pm}} \pa_w^lt(w)\text{ exists for }l=0,\dots,N-1.
 \end{equation}
 This completes the proof of the theorem.
 \end{proof}

  \begin{remark}
   If $(\sigma,\tau)$ solve~\eqref{eqn143.35} with $(g,h)=(0,0),$ then they
   automatically have asymptotic expansions like those in~\eqref{eqn230.63} for
   any $N>0.$ This proves useful in Part III when we prove that the null-space
   of~\eqref{eqn143.35} is trivial. \Rd The expansions for $\sigma$ and $\tau$
   can be differentiated, which follows from the fact that the expansions for
   the kernels $k_C(x_2,y_2), k_D(x_2,y_2),$ can be differentiated.  To prove
   these expansions for derivatives of the densities we replace $s(w), t(w)$ in
   the proof of Theorem~\ref{thm2.75} with
   \begin{equation}
     \begin{split}
     &s^{(l)}(w)=w^{-(l+\frac
       12)}\left[\pa_{x_2}^l(e^{-ik_1|x_2|}D_1\tau(x_2))\right]_{x_2=w^{-1}},\\
      &t^{(l)}(w)=w^{-(l+\frac
         32)}\left[\pa_{x_2}^l(e^{-ik_1|x_2|}C_1\sigma(x_2))\right]_{x_2=w^{-1}}.
       \end{split}
   \end{equation}
   \Bk
 \end{remark}

 \section{Asymptotics for The Free Space Part of
   $u^{l,r}$}\label{sec3.207}

 Theorem~\ref{thm2.75} shows that the asymptotic expansions satisfied by the
 kernels $k_C,k_D$ as $|x_2|+|y_2|\to\infty$ force the sources, $(\sigma,\tau),$
 to also have asymptotic expansions, provided that the data allows it. Using
 these expansions, the representation formul{\ae} for the solutions
  \begin{equation}\label{eqn160.08}
    u^{l,r}(x)=
    \cS_{k_1}\tau(x)- \cD_{k_1}\sigma(x)+\int\limits_{-\infty}^{\infty}[w^{l,r}(x;0,y_2)\tau(y_2)-\pa_{y_1}w^{l,r}(x;0,y_2)\sigma(y_2)]dy_2,
   \end{equation}
  and the Sommerfeld formula, see~\cite{MorseFeshbach},
  \begin{equation}\label{eqn26.51}
  \cF_{x_2}[(i/4)H^{(1)}_0(k_1|x|)](\xi)=\frac{ie^{i|x_1|\sqrt{k_1^2-\xi^2}}}{2\sqrt{k_1^2-\xi^2}},
\end{equation}
   we derive asymptotic expansions for $u^{l,r}(r\eta)$ along radial lines
   through the origin. As a consequence of  these expansions we see that,
   except for the guided modes, our solutions satisfy a standard Sommerfeld
   radiation condition.  Within the channels, $\{\eta_2=0\},$ the guided modes
   satisfy appropriate outgoing conditions.

      Let $\eta\in S^1\subset\bbR^2$ be a unit vector, and set
 \begin{equation}\label{eqn235.85}
   \begin{split}
   u^{l,r}_0(r\eta)&=\cS_{k_1}\tau(r\eta)=\frac{i}{4}\int_{-\infty}^{\infty}
   H^{(1)}_0(k_1|r\eta-(0,y_2)|)\tau(y_2)dy_2,\\
    u^{l,r}_1(r\eta)&=\cD_{k_1}\sigma(r\eta)=-\frac{ik_1}{4}\int_{-\infty}^{\infty}
    \frac{r\eta_1 [\pa_z H^{(1)}_0](k_1|r\eta-(0,y_2)|)\sigma(y_2)dy_2}{\sqrt{r^2-2r\eta_2y_2+y_2^2}}.
    \end{split}
 \end{equation}
 As usual,  the sub- or super-scripts $l\leftrightarrow
 \{r\eta:\:\eta_1<0\}$ and $r\leftrightarrow \{r\eta:\:\eta_1>0\}.$

 In this section we analyze the behavior of the free space contribution to the solution,
 $-\cD_{k_1}\sigma(r\eta)+\cS_{k_1}\tau(r\eta),$ assuming that $\eta_1\neq 0.$
 The more difficult case to analyze, where $\eta_2\to {\pm 1},$ is deferred to
 Section~\ref{sec7.4.83}.  These estimates are summarized in the following
 theorem.
 \begin{theorem}\label{thm3.75}
   Suppose that the data $(g,h)$ satisfies~\eqref{eqn217.63} for an
   $N\geq 2,$ then, for an $M$ depending on $N,$ and $\eta_2\in [-1,1],$ we have the uniform
   asymptotic expansions
   \begin{equation}\label{eqn237.81}
     \begin{split}
       u_0^{l,r}(r\eta)&=\frac{e^{irk_1}}{\sqrt{r}}\sum_{j=0}^{M}\frac{a_{0j}^{l,r}(\eta)}{r^j}+o(r^{-(M+1)}),\\
         u_1^{l,r}(r\eta)&=\frac{e^{irk_1}}{\sqrt{r}}\sum_{j=0}^{M}\frac{a_{1j}^{l,r}(\eta)}{r^j}+o(r^{-(M+1)}).
     \end{split}
   \end{equation}
  The coefficients are smooth where $\eta_1\neq 0,$ and have smooth extensions
  to $\eta_2\to \pm 1.$ The functions $\pa_r^j u^{l,r}_{k}(r\eta),$ with
  $j\in\bbN,$ have similar
  expansions obtained by differentiating the expansions
  in~\eqref{eqn237.81}. Here $M\to\infty$ as $N\to\infty.$
 \end{theorem}
 \begin{remark}
  For simplicity we assume that $N$ in~\eqref{eqn217.63}, and therefore $M$
  in~\eqref{eqn237.81}, can be taken arbitrarily large. \Rd  A similar result, but
  without the uniformity as $\eta_1\to 0,$ appears in~\cite{BBD_CW_F2022}.\Bk
 \end{remark}

 \Rd The proof of this theorem is given in the following two sections. In
 Section~\ref{sec4} we prove the analogous results for the portion of the
 solution coming from the perturbation terms,
 $\cW^{l,r}\tau-\cW^{l,r\,'}\sigma.$ As expected these proofs rely on stationary
 phase calculations. There are several difficulties that arise.  The first is
 that the integral extends over the whole real line, and it is difficult to
 control the $r\to\infty$ asymptotics of the unbounded part of the integral. To
 handle this we use the Plancherel formula to replace the unbounded portion of
 the integral with its Fourier transform, which effectively handles this
 problem.

 The Fourier transforms of the sources $\sigma$ and $\tau$ are computed using
 their asymptotic expansions, see Lemma~\ref{lem4.201}. The Fourier
 representation presents a somewhat different difficulty, which is that the
 integrand in the Fourier representation has square root singularities at
 $\xi=\pm k_1.$ Upon changing variables to regularize the integrand, the domain
 of integration is replaced with a pair of line segments, in what we now
 recognize as a complex contour integral. The image of the singularity in the $\xi$-variable is the
 intersection of these line segments, see Figure~\ref{fig1.2}.  In this representation the stationary
 phase point occurs in the interior of one of these segments, but as $\eta_1\to
 0,$ the stationary point moves to the intersection point of the two
 segments. The integrand has an analytic continuation to a region bounded by
 these segments. In order to prove asymptotics, with uniform errors as
 $\eta_1\to 0$ we need to deform the contour. Somewhat counterintuitively, we can
 obtain uniform estimates by deforming the contour so that the stationary point
 {\em always} occurs at the intersection of two segments, see
 Lemma~\ref{lem5.202}. There are several cases requiring different contour
 deformations, which largely explains the length of this and the following
 section. While not much detail is given, the idea of using contour deformation
 as a way to establish uniformity in asymptotic expansions for layered media
 appears in~\cite{Christiansen}.

 \Bk

  \subsection{Estimates with $\eta_1\neq 0$}
In this section we assume that $\eta_1\neq 0;$ in the next section we show that
the coefficients have smooth extensions to $\eta_2\to \pm 1,$ with uniform
error estimates.  To obtain the desired asymptotics, we split the
$y_2$-integrals in~\eqref{eqn235.85} into a part with $|y_2|$ small and parts
with $|y_2|$ unbounded. The arguments are quite different for these 2 cases.  We
now drop the $l,r$ sub- and super-scripts, and focus on the $r$-case, i.e.~$\eta_1>0.$

Assume that $\supp q\subset [-d,d];$ let $\varphi_{\pm}\in\cC^{\infty}(\bbR),$ be monotone increasing,
with $\varphi_-(y_2)=\varphi_+(-y_2),$ where
 \begin{equation}\label{eqn1.100}
   \varphi_{+}(y_2)=\begin{cases} &0\text{  for }y_2<d+1,\\
   &1\text{ for }y_2>d+2.
   \end{cases}
 \end{equation}
 Let $\varphi_0(y_2)=1-(\varphi_+(y_2)+\varphi_-(y_2));$ with $\epsilon\in\{0,+,-\},$ define
 \begin{equation}\label{eqn238.200}
   \begin{split}
     &u_{0}^{\epsilon}(r\eta)=\int_{-\infty}^{\infty}H^{(1)}_0(|r\eta-(0,y_2)|)\varphi_{\epsilon}(y_2)\tau(y_2)dy_2,\\
   &u_{1}^{\epsilon}(r\eta)=\int_{-\infty}^{\infty}\pa_{y_1}H^{(1)}_0(|r\eta-(0,y_2)|)\varphi_{\epsilon}(y_2)\sigma(y_2)dy_2,
   \end{split}
 \end{equation}
 so that
 \begin{equation}
   u_j(r\eta)=u_j^-(r\eta)+u_j^0(r\eta)+u_j^+(r\eta),\text{ for }j=0,1.
 \end{equation}

   We begin with $u^0_{j}(r\eta),$ $j=0,1,$ for which this result is standard. To obtain the expansion we use
   the large $|z|$ asymptotics of the Bessel function:
 $$H^{(1)}_0(z)\sim C\frac{e^{iz}}{\sqrt{z}}\cdot
   \sum_{j=0}^{\infty}\frac{a_j}{z^j}.$$
   Using this expansion we see that
   \begin{equation}
     \begin{split}
       u^0_0(r\eta)&=\int_{-(d+2)}^{d+2}H^{(1)}_0(|r\eta-(0,y_2)|)\varphi_{0}(y_2)\tau(y_2)dy_2\\ &\sim\frac{1}{\sqrt{r}}
       \int_{-(d+2)}^{d+2}\frac{e^{ik_1r\left(1-\frac{2\eta_2y_2}{r}+\frac{y_2^2}{r^2}\right)^{\frac
             12}}}{ \sqrt{1-\frac{2\eta_2y_2}{r}+\frac{y_2^2}{r^2}}
       }\sum\limits_{j=0}^{\infty}\frac{a_j}{r^j\left(1-\frac{2\eta_2y_2}{r}+\frac{y_2^2}{r^2}\right)^{\frac j2}}\varphi_{0}(y_2)\tau(y_2)dy_2
     \end{split}
   \end{equation}
   Using the convergent expansion
   \begin{equation}
      \sqrt{1-\frac{2\eta_2y_2}{r}+\frac{y_2^2}{r^2}}=1+\sum_{j=0}^{\infty}C_j\left(\frac{2\eta_2y_2}{r}-\frac{y_2^2}{r^2}\right)^j,
   \end{equation}
   the expansion for its reciprocal, and the fact that the integral is over a
   bounded interval, we easily obtain the desired expansion for this term. The
   smooth dependence on $\eta_2\in[-1,1]$ and the uniformity of the error terms
   is also clear. A similar argument, using the fact that
\begin{equation}
  \pa_{y_1}H^{(1)}_0(|r\eta-(0,y_2)|)=-\frac{r\eta_1}{\sqrt{r^2-2r\eta_2y_2+y_2^2}}\pa_zH^{(1)}_0(|r\eta-(0,y_2)|),
\end{equation}
applies to $u^0_1(r\eta).$ Altogether we conclude that
   \begin{equation}
     u^0_{k}(r\eta)\sim
     \frac{e^{ik_1r}}{\sqrt{r}}\sum_{j=0}^{\infty}\frac{a^0_{kj}(\eta)}{r^j},\quad k=0,1,
   \end{equation}
   uniformly as $\eta_2\to\pm 1.$

   To treat the unbounded terms we use the Fourier representations for the single
   and double layer kernels, $H^{(1)}_0(|x-(0,y_2)|), \pa_{y_1}H^{(1)}_0(|x-(0,y_2)|):$
   \begin{equation}\label{eqn113.763}
     \begin{split}
       H^{(1)}_0(k_1|x-(0,y_2)|)&=\frac{i}{4\pi}\int_{-\infty}^{\infty}
       \frac{e^{i\xi(x_2-y_2)+i\sqrt{k_1^2-\xi^2}|x_1|}d\xi}{\sqrt{k_1^2-\xi^2}},\\
     \pa_{y_1} H^{(1)}_0(k_1|x-(0,y_2)|)&=\frac{1}{4\pi}
     \int_{-\infty}^{\infty}e^{i\xi(x_2-y_2)+i\sqrt{k_1^2-\xi^2}|x_1|}d\xi,
     \end{split}
   \end{equation}
   \Rd see~\cite[Chap. 7.2]{MorseFeshbach}. \Bk 
   To use this representation, requires the Fourier transforms of the sources
$\varphi_{\pm}(y_2)\sigma(y_2),$ $\varphi_{\pm}(y_2)\tau(y_2),$
   \begin{equation}\label{eqn89.42}
     \begin{split}
       \hsigma_{\pm}(\xi)&=\lim_{R\to\infty}\int_{-R}^{R}e^{-iy_2\xi}\varphi_{\pm}(y_2)\sigma(y_2)dy_2,\\
       \htau_{\pm}(\xi)&=\int_{-\infty}^{\infty}e^{-iy_2\xi}\varphi_{\pm}(y_2)\tau(y_2)dy_2.
     \end{split}
   \end{equation}
   The integrals defining $\htau_{\pm}(\xi)$ are absolutely convergent, whereas
   $\hsigma_{\pm}$ is defined as the indicated limit. These are computed using
   the asymptotic expansions in~\eqref{eqn230.63}, and the following lemma.
   \begin{lemma}\label{lem4.201}
     For $j\in\bbN\cup \{0\},$ let
     \begin{equation}
       F^{\pm}_j(\xi)=\int_{-\infty}^{\infty}\frac{e^{-iy_2\xi}\varphi_{\pm}(y_2)dy_2}{|y_2|^{j+\frac 12}}.
     \end{equation}
     The functions $F^{\pm}_j(\xi)$ are smooth away from $\xi=0,$ rapidly
     decaying along with all derivatives as $|\xi|\to\infty.$ They have analytic continuations to
     $\mp\Im\xi\geq 0,$ which decay like $e^{-d|\Im\xi|}.$  There are constants
     $a_j^{\pm}$ so that,
     near to $\xi=0,$ they have expansions of the form
     \begin{equation}
       F^{\pm}_j(\xi)=\frac{a_j^{\pm}\xi^{j}}{\sqrt{\xi}}+\psi^{\pm}_{j}(\xi),
     \end{equation}
     Here $\psi^{\pm}_{j}$ are entire functions.
     The $\sqrt{\xi}>0,$ for $0<\xi;$ for $F_j^+,$ the  $\sqrt{\xi}=-i\sqrt{-\xi},$ for
     $0>\xi,$ and for $F_j^-,$ the  $\sqrt{\xi}=i\sqrt{-\xi},$ for
     $0>\xi.$ 
   \end{lemma}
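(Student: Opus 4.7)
The plan is to reduce the general $j$ case to $j = 0$ by an integration-by-parts identity, and then to analyze $F^+_0(\xi)$ near $\xi = 0$ by an explicit decomposition. I treat $F^+_j$; the $F^-_j$ statement follows from the substitution $y_2 \mapsto -y_2$, which gives $F^-_j(\xi) = F^+_j(-\xi)$ and swaps the half-plane of continuation and the branch convention for $\sqrt{\xi}$ in exactly the manner stated in the lemma.

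Smoothness of $F^+_j$ away from $\xi = 0$ and rapid decay (with all $\xi$-derivatives) as $|\xi|\to\infty$ follow from iterated integration by parts: since $\varphi_+$ vanishes identically on $[0, d+1]$, no boundary terms appear and
\[ F^+_j(\xi) = (i\xi)^{-N}\!\int_0^\infty e^{-iy_2\xi}\, \partial_{y_2}^{N}\!\left[\frac{\varphi_+(y_2)}{y_2^{j+1/2}}\right] dy_2 \]
for every $N$. The analytic continuation to $\Im\xi \leq 0$ is likewise standard: on $\supp \varphi_+ \subset [d+1,\infty)$ one has $|e^{-iy_2\xi}| = e^{y_2\Im\xi} \leq e^{(d+1)\Im\xi}$, which provides a uniform $L^1$ dominator for all $\Im\xi \leq 0$, giving both analyticity (differentiation under the integral) and the bound $|F^+_j(\xi)| \leq C e^{-d|\Im\xi|}$.

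The substantive step is the near-$\xi = 0$ structure. Using $y^{-(j+1/2)} = (-1)^j c_j^{-1}\partial_y^j y^{-1/2}$ on $y > 0$, with $c_j = (1/2)(3/2)\cdots (j-1/2)$, and integrating by parts $j$ times (all boundary terms vanish because $\varphi_+$ is flat at the origin and decays at infinity), the Leibniz rule produces
\[ F^+_j(\xi) = \frac{(-i\xi)^j}{c_j}\, F^+_0(\xi) + E_j(\xi), \]
where $E_j$ collects the terms in which at least one derivative falls on $\varphi_+$. Each such term is the Fourier transform of $\partial_{y_2}^k \varphi_+(y_2)/\sqrt{y_2}$ with $k \geq 1$, and this is smooth with compact support in $(d+1, d+2)$, so $E_j$ is entire by Paley--Wiener. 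It therefore suffices to prove the lemma for $F^+_0$.

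For $F^+_0$, I would split $\varphi_+ = \chi + (\varphi_+ - \chi)$, where $\chi \in \CI(\bbR)$ equals $0$ on $y \leq 1/4$ and $1$ on $y \geq 1/2$. The $(\varphi_+ - \chi)$ piece contributes the Fourier transform of a smooth compactly supported function, hence entire. Writing $\chi(y)/\sqrt{y} = 1/\sqrt{y} - (1-\chi(y))/\sqrt{y}$ on $y > 0$, the $(1-\chi)/\sqrt{y}$ term is in $L^1$ with compact support, so its Fourier transform is again entire. The remaining piece $\int_0^\infty e^{-iy\xi}/\sqrt{y}\, dy$ equals $\sqrt{\pi}\, e^{-i\pi/4}/\sqrt{\xi}$ for $\xi > 0$ by the Gamma identity $\int_0^\infty e^{-iu} u^{s-1}\, du = \Gamma(s) e^{-i\pi s/2}$ at $s = 1/2$, and extends to $\Im\xi \leq 0$ with $\sqrt{\xi}$ in the branch continuous from the positive real axis through the lower half-plane --- precisely the branch specified in the lemma. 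Assembling the pieces yields $a^+_j = (-i)^j\sqrt{\pi}\, e^{-i\pi/4}/c_j$ together with an entire remainder $\psi^+_j$. The main delicacy is really the bookkeeping of branches: one must verify that the rotation of contour used to evaluate the Fresnel integral stays in the same half-plane as the prescribed analytic continuation, so that the branch of $\sqrt{\xi}$ produced matches the one in the lemma.
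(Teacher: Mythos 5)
Your proposal is correct and follows essentially the same route as the paper: the symmetry $F^-_j(\xi)=F^+_j(-\xi)$, iterated integration by parts for smoothness and decay, analytic continuation off the support of $\varphi_+$, the identity $y^{-(j+\frac12)}=c_j^{-1}(-1)^j\partial_y^j y^{-\frac12}$ to reduce to the $j=0$ model integral modulo entire functions, and explicit evaluation of $\int_0^\infty e^{-iy\xi}y^{-\frac12}dy$ with the stated branch of $\sqrt{\xi}$ (the paper simply subtracts $1-\varphi_+$ rather than introducing your extra cutoff $\chi$). The only caveat is cosmetic: for $j=0$ there is no uniform $L^1$ dominator on $\Im\xi=0$, so the boundary values there must be read as the conditionally convergent (oscillatory) integrals you in fact use, a point the paper glosses over in the same way.
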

   %% \begin{remark}\label{rmk6.42}
%%      A similar result holds for the type of data considered in Remark~\ref{rmk4.213}. One can show that the functions
%%      \begin{equation}
%%         H^{\pm}_j(\xi)=\int_{-\infty}^{\infty}\frac{e^{-iy_2\xi}\varphi_{\pm}(y_2)dy_2}{|y_2|^{j+1}}.
%%      \end{equation}
%%      have analytic extensions to $\mp\Im\xi>0,$ and singularities at $0$ of the
%%      form $\xi^j\log\xi.$  
%%    \end{remark}
   \begin{proof}
     We first observe that
     \begin{equation}
       F^-_j(\xi)=\int_{-\infty}^{0}\frac{e^{-iy_2\xi}\varphi_{-}(y_2)dy_2}{(-y_2)^{j+\frac
           12}}=\int^{\infty}_{0}\frac{e^{iy_2\xi}\varphi_{+}(y_2)dy_2}{y_2^{j+\frac
           12}}=F_j^+(-\xi).
     \end{equation}
     so it suffices to consider $F^+_j(\xi).$ From the formula it is clear that the function
     $F^+_j(\xi)$ extends analytically to $\Im\xi\leq 0,$ and decays like
     $e^{-d|\Im\xi|}.$ For $\xi\neq 0$ we can integrate by parts arbitrarily
     often
     \begin{equation}
       F^+_j(\xi)=\frac{1}{(i\xi)^l}\int_{0}^{\infty}e^{-iy_2\xi}\pa^l_{y_2}\left(\frac{\varphi_+(y_2)}{y_2^{j+\frac 12}}\right)dy_2,
     \end{equation}
     from which the smoothness away from $\xi=0,$ and rapid decay statements are clear.

     If $j>0,$ then we observe that $y_2^{-(j+\frac 12)}=C_j\pa_{y_2}^jy_2^{-\frac
       12},$ and therefore integration by parts shows that
     \begin{equation}
       F_j^+(\xi)=C_j(-i\xi)^j\lim_{R\to\infty}\int_0^{R}\frac{e^{-iy_2\xi}\varphi_+(y_2)dy_2}{y_2^{\frac 12}}+\psi_j(\xi),
     \end{equation}
     where $\psi_j(\xi)$ is an entire function. To compute this limit observe
     that
     \begin{equation}
       \int_0^{R}\frac{e^{-iy_2\xi}(1-\varphi_+(y_2))dy_2}{y_2^{\frac 12}}
     \end{equation}
     is an entire function, independent of $R>d+2,$ and therefore
     \begin{equation}
         F_j^+(\xi)=C_j(-i\xi)^j\lim_{R\to\infty}\int_0^{R}\frac{e^{-iy_2\xi}dy_2}{y_2^{\frac 12}}+\tpsi_j(\xi),
     \end{equation}
     for $\tpsi_j(\xi)$  an entire function. For $\xi>0,$
     \begin{equation}
       \lim_{R\to\infty}\int_0^{R}\frac{e^{-iy_2\xi}dy_2}{y_2^{\frac 12}}=
       \lim_{R\to\infty}\frac{1}{\sqrt{\xi}}
       \int_0^{R\xi}\frac{e^{-iw}dw}{w^{\frac 12}}=e^{-\frac{\pi i}{4}} \sqrt{\frac{\pi}{\xi}},
     \end{equation}
     and for $\xi<0,$
     \begin{equation}
       \lim_{R\to\infty}\int_0^{R}\frac{e^{-iy_2\xi}dy_2}{y_2^{\frac 12}}=
       e^{\frac{\pi i}{4}} \sqrt{\frac{\pi}{|\xi|}}= e^{-\frac{\pi i}{4}} \sqrt{\frac{\pi}{\xi}},
     \end{equation}
     with $\sqrt{\xi}$ as defined above.
   \end{proof}

   Using this lemma and the asymptotic expansions we see that $\hsigma_+(\xi)$
   and $\htau_+(\xi)$ are smooth away from $\xi=k_1,$ have exponentially
   decaying, analytic continuations to the lower half plane, for $\xi$ near to $k_1,$ and any $N,$ 
   \begin{equation}\label{eqn252.101}
     \begin{split}
     \hsigma_+(\xi)&=\frac{p^{N}(\xi-k_1)}{\sqrt{\xi-k_1}}+\psi^{[N]}(\xi),\\
     \htau_+(\xi)&= \sqrt{\xi-k_1}q^{N}(\xi-k_1)+\theta^{[N]}(\xi),
     \end{split}
   \end{equation}
   for polynomials $p^{N},q^{N}$ and $\cC^{N+1}$-functions $\psi^{[N]},
   \theta^{[N]}.$ Similarly $\hsigma_-(\xi)$
   and $\htau_-(\xi)$ are smooth away from $\xi=-k_1,$ have exponentially
   decaying, analytic continuations to the upper half plane,  for $\xi$ near to $-k_1,$ and any $N,$ 
   \begin{equation}\label{eqn253.101}
     \begin{split}
     \hsigma_-(\xi)&=\frac{p^{N}(\xi+k_1)}{\sqrt{\xi+k_1}}+\psi^{[N]}(\xi),\\
     \htau_-(\xi)&= \sqrt{\xi+k_1}q^{N}(\xi+k_1)+\theta^{[N]}(\xi),
     \end{split}
   \end{equation}
     for polynomials $p^{N},q^{N}$ and $\cC^{N+1}$-functions $\psi^{[N]},
   \theta^{[N]}.$  With these computations, we can now analyze
   $u^{\pm}_0(r\eta), u^{\pm}_{1}(r\eta).$ 

   We use stationary phase to analyze the unbounded terms, which, in the Fourier
   representation, take the form:
   \begin{equation}\label{eqn130.666}
     \begin{split}
        u^{\pm}_0(r\eta)=\frac{i}{4\pi}\int_{-\infty}^{\infty}\frac{e^{ir(\eta_2\xi+\eta_1\sqrt{k_1^2-\xi^2})}\htau_{\pm}(\xi)
          d\xi}{\sqrt{k_1^2-\xi^2}},\\
        u^{\pm}_1(r\eta)=\frac{1}{4\pi}\int_{-\infty}^{\infty}e^{ir(\eta_2\xi+\eta_1\sqrt{k_1^2-\xi^2})}\hsigma_{\pm}(\xi)
        d\xi.
        \end{split}
   \end{equation}
   The change in the order of integrations needed to prove this formula for
   $u^{\pm}_0$ is easily justified: inserting the Fourier representation for
   $H^{(1)}_0$ from~\eqref{eqn113.763} leads an absolutely convergent double
   integral, to which Fubini's theorem immediately applies. The justification
   for the formula for $u^{\pm}_1$ is given after equation~\eqref{eqn133.40}.
   
Fix an $0<\epsilon\ll 1;$ if we assume that $|\eta_1|>\epsilon>0,$ then we can
choose $\mu>0,$ so that the stationary phase occurs at $k_1\eta_2\in
(-k_1+\mu,k_1-\mu).$ We divide these integrals into  parts, $u^{\pm}_{k0},$
supported in $(-k_1+\mu/2,k_1-\mu/2),$ which contains the stationary phase;
parts $u^{\pm}_{k\pm},$ supported in $[\pm k_1-\mu,\pm k_1+\mu],$ and  parts
$u^{\pm}_{k\infty}$ supported where $|\xi|>k_1+\mu/2.$ \Rd The unbounded parts have
smooth rapidly decaying integrands, that do not contain any points of stationary
phase, \Bk  hence it is not difficult to
show that
\begin{equation}
  u^{\pm}_{k\infty}(r\eta)=O(r^{-N})\text{ for any }N>0.
\end{equation}
It is also standard to show that the contributions from the stationary phase are
given by the  asymptotic expansions
\begin{equation}
     u^{\pm}_{k0}(r\eta)\sim\frac{e^{ik_1r}}{\sqrt{r}}\sum_{j=0}^{\infty}\frac{a^{\pm}_{kj}(\eta)}{r^j},
\end{equation}
with the coefficients $\{a^{\pm}_{kj}(\eta)\}$ smooth functions of $\eta,$ and
uniformly bounded error terms so long as $|\eta_1|>\epsilon>0.$ This leaves the
contributions from the singularities of the integrand at $\xi=\pm k_1.$

   We begin with $u^+_0(r\eta).$ With $\mu>0$ as above, let $\psi$ be a smooth function supported in
$[k_1-\mu,k_1+\mu],$ equal to 1 in $[k_1-\mu/2,k_1+\mu/2],$ and set
   \begin{equation}
      u^+_{0+}(r\eta)=\frac{i}{4\pi}\int_{-\infty}^{\infty}\frac{e^{ir(\eta_2\xi+\eta_1\sqrt{k_1^2-\xi^2})}\htau_+(\xi)
       \psi(\xi)d\xi}{\sqrt{k_1^2-\xi^2}}.
   \end{equation}
   We let $t=\sqrt{k_1^2-\xi^2},$ where $\xi<k_1,$ and $s=\sqrt{\xi^2-k_1^2},$
   where $\xi>k_1,$ to obtain
   \begin{equation}
     \begin{split}
       u^+_{0+}(r\eta)=&\frac{i}{4\pi}\int_{0}^{\sqrt{2\mu k_1-\mu^2}}\frac{e^{ir(\eta_2\sqrt{k_1^2-t^2}+\eta_1t)}
         \htau_+(\sqrt{k_1^2-t^2})
          \psi(\sqrt{k_1^2-t^2})dt}{\sqrt{k_1^2-t^2}}+\\
       &\frac{i}{4\pi}\int^{0}_{\sqrt{2k_1\mu+\mu^2}}
       \frac{e^{r(i\eta_2\sqrt{k_1^2+s^2}-\eta_1s)}\htau_+(\sqrt{k_1^2+s^2})
       \psi(\sqrt{k_1^2+s^2})ids}{\sqrt{k_1^2+s^2}}.
     \end{split}
   \end{equation}
   A moment's consideration shows that this sum of integrals is simply the
   contour integral of the ``$t$''-term on the contour in the complex plane:
   \begin{equation}
     \Lambda_{\mu}=[i\sqrt{2k_1\mu+\mu^2}, i0]\cup[0,\sqrt{2\mu k_1-\mu^2}],
   \end{equation}
   that is:
   \begin{equation}\label{eqn259.101}
      u^+_{0+}(r\eta)=\frac{i}{4\pi}\int_{\Lambda_{\mu}}\frac{e^{ir(\eta_2\sqrt{k_1^2-t^2}+\eta_1t)}
         \htau_+(\sqrt{k_1^2-t^2})
          \psi(\sqrt{k_1^2-t^2})dt}{\sqrt{k_1^2-t^2}}
   \end{equation}
   In Figure~\ref{fig1.2} $\Lambda_{\mu}$ is shown as the thick black ``L.'' The
   stationary point, shown in Figure~\ref{fig1.2}[a] as a black dot,  lies along
   the real axis where $t=k_1\eta_1.$

\begin{figure}[h]
  \centering
  \begin{subfigure}[t]{.45\textwidth}
     \includegraphics[width= 5.5cm]{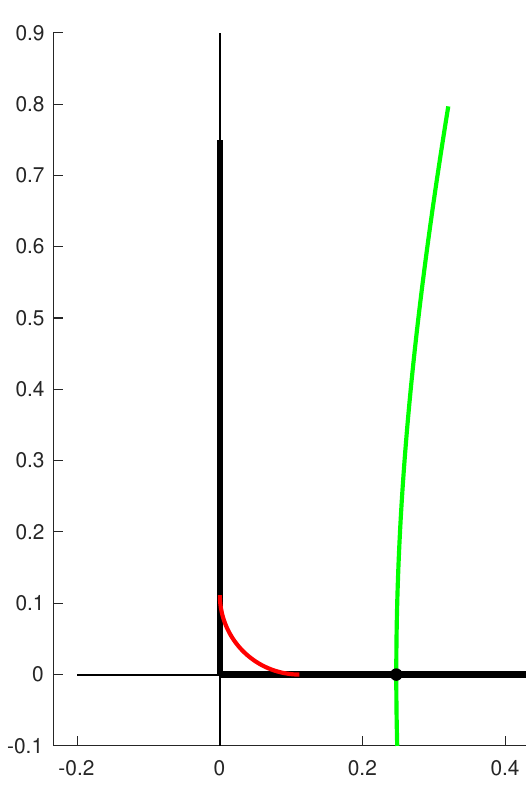}
    \caption{Deformation for the integral in~\eqref{eqn259.101},
      $\theta=0.25.$ The deformation $\Lambda_{\mu 1}$ includes the
      red curve.}
  \end{subfigure}\quad
   \begin{subfigure}[t]{.45\textwidth}
     \includegraphics[width= 5.5cm]{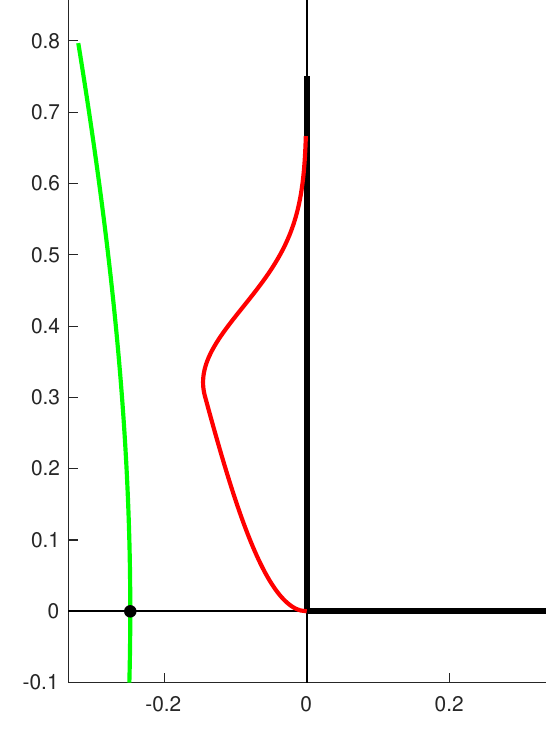}
    \caption{Deformation for the integral in~\eqref{eqn267.200}. The
      deformation $\Lambda_{\mu 2}$ includes the red curve}
    \end{subfigure}
    \caption{Plots of $\Lambda_{\mu}$ and the deformations needed to analyze the
      integrals in~\eqref{eqn259.101} and~\eqref{eqn267.200}. The green curves
      are the right, resp. left boundary of $Q_+.$}
   \label{fig1.2}
\end{figure}
To understand the asymptotics of
   this term we need to deform the contour, being careful to keep the real
   part of the phase, $i(t\eta_1+\sqrt{k_1^2-t^2}\eta_2),$ non-positive. We
   introduce the following change of variables $t=k_1\sin(\theta+z),$ where
\begin{equation}\label{eqn98.207}
  (\eta_1,\eta_2)=(\sin\theta,\cos\theta).
\end{equation}
\Rd Note that this definition of the polar angle differs from the usual choice: here
the angle measured clockwise from the positive $x_2$-axis. \Bk For $\eta_2>0,$ we
take $\theta\in (0,\frac{\pi}{2}).$ The hypothesis that $\eta_1$ is bounded away
from zero implies that $\theta$ is bounded away from $0.$

Note that $z=x+iy$ takes complex values. Using the fact that
\begin{equation}\label{eqn101.206}
  \sin(\theta+x+iy)=\sin(\theta+x)\cosh y+i\cos(\theta+x)\sinh y,
\end{equation}
we see that the contour $\Lambda_{\mu}$ corresponds to
$[-\theta+i\phi_0,-\theta+i0]\cup[-\theta,\theta_0-\theta],$ for a $\phi_0,\theta_0>0.$ In terms of these
variables the phase is
$$i(t\eta_1+\sqrt{k_1^2-t^2}\eta_2)=ik_1\cos(z),$$
which satisfies
\begin{equation}
 i \cos(x+iy)=i\cos x\cosh y+\sin x\sinh y.
\end{equation}
We see that the real part is non-positive for $t$ in the set
\begin{equation}
\left\{k_1\sin(\theta+x+iy):\:  x\in [-\pi,0],\quad y\in
[0,\infty)\right\}.
\end{equation}
On the other hand, the argument of $\htau_+$ is
\begin{equation}
  \sqrt{k_1^2-t^2}=k_1\cos(\theta+x+iy)=k_1[\cos(\theta+x)\cosh(y)-i\sin(\theta+x)\sinh(y)],
\end{equation}
which has non-positive imaginary part for $x\in [-\theta,\pi-\theta], y\geq 0.$ The intersection is
\begin{equation}
Q_+=\left\{k_1\sin(\theta+x+iy):\:  x\in [-\theta,0],\quad y\in
[0,\infty)\right\}
\end{equation}

We consider  deformations, $\Lambda_{\mu 1},$ of $\Lambda_{\mu},$ which replace
the corner of $\Lambda_{\mu}$ near $0$ with a smooth interpolant between the
$x$-axis and the $y$-axis, lying in $Q_+.$ An example is shown as the red curve in Figure~\ref{fig1.2}[a]. The
green curve is the right boundary of $Q_+.$ We can assume that $\psi\equiv
1$ in the support of the deformation, and therefore Cauchy's theorem implies
that the integral on the right hand side of~\eqref{eqn259.101} can be replaced
with
  \begin{equation}\label{eqn265.101}
      u^+_{0+}(r\eta)=\frac{i}{4\pi}\int_{\Lambda_{\mu 1}}\frac{e^{ir(\eta_2\sqrt{k_1^2-t^2}+\eta_1t)}
         \htau_+(\sqrt{k_1^2-t^2})
          \psi(\sqrt{k_1^2-t^2})dt}{\sqrt{k_1^2-t^2}}.
   \end{equation}
This is the integral of a smooth compactly supported function on a smooth arc
and there is no stationary phase within the support of the integrand, which is
therefore $O(r^{-N}),$ for any $N.$

We next consider the part of the integral near to $-k_1,$
   \begin{equation}
      u^+_{0-}(r\eta)=\frac{i}{4\pi}\int_{-(k_1+\mu)}^{\mu-k_1}\frac{e^{ir(\eta_2\xi+\eta_1\sqrt{k_1^2-\xi^2})}\htau_+(\xi)
       \psi(-\xi)d\xi}{\sqrt{k_1^2-\xi^2}}.
   \end{equation}
We can change variables as before and rewrite this integral as a contour
integral over $\Lambda_{\mu}:$
\begin{equation}\label{eqn267.200}
    u^+_{0-}(r\eta)=-\frac{i}{4\pi}\int_{\Lambda_{\mu}}\frac{e^{ir(\eta_ 1t-\eta_2\sqrt{k_1^2-t^2})}\htau_+(-\sqrt{k_1^2-t^2})
       \psi(\sqrt{k_1^2-t^2})dt}{\sqrt{k_1^2-t^2}};
\end{equation}
the phase is stationary at $-k_1\eta_1\notin\Lambda_{\mu}.$ As before,
we let $t=k_1\sin(\theta+z),$ 
the contour $\Lambda_{\mu}$ corresponds to $[-\theta+i\phi_0,-\theta+i0]\cup
[-\theta,\theta_0-\theta]. $ In this variable the phase is
\begin{equation}
 i(\eta_ 1t-\eta_2\sqrt{k_1^2-t^2})=-ik_1\cos(2\theta+z);
\end{equation}
with $z=x+iy,$ 
\begin{equation}\label{eqn269.200}
  -ik_1\cos(2\theta+z)=-ik_1\cos(2\theta+x)\cosh(y)-k_1\sin(2\theta+x)\sinh(y).
\end{equation}
If we deform the contour keeping
\begin{equation}
  -2\theta\leq x\leq -2\theta+\pi\text{ and }y\geq 0,
\end{equation}
then the real part of the phase remains non-positive.

The function $\htau_+(\xi)$ has an analytic extension to the lower half plane;
its argument is
\begin{equation}\label{eqn271.200}
  -\sqrt{k_1^2-t^2}=-k_1\cos(\theta+x)\cosh(y)+ik_1\sin(\theta+x)\sinh(y).
\end{equation}
If we deform the contour to a smooth curve, $\Lambda_{\mu 2},$ keeping 
 \begin{equation}
  -2\theta\leq x\leq -\theta\text{ and }y\geq 0,
 \end{equation}
 then $\Im-\sqrt{k_1^2-t^2}\leq 0,$ the integrand is analytic and the
 real part of the phase remains non-positive, see the red curve in
 Fig.~\ref{fig1.2}[b]. The stationary phase occurs where
 $t=-k_1\sin\theta,$ which lies outside the domain of integration, and
 therefore we conclude that this term is $O(r^{-N})$ for any $N.$ As
 noted earlier, all other portions of the integral defining
 $u^+_0(r\eta)$ are easily seen to be $O(r^{-N}).$

 We now turn to
  \begin{equation}
     u^-_0(r\eta)=\frac{i}{4\pi}\int_{-\infty}^{\infty}\frac{e^{ir(\eta_2\xi+\eta_1\sqrt{k_1^2-\xi^2})}\htau_-(\xi)
       d\xi}{\sqrt{k_1^2-\xi^2}}.
   \end{equation}
   The phase is stationary where $\xi=k_1\eta_2,$ and
   the integrand has singularities where $\xi=\pm k_1.$ If $|\eta_1|>\epsilon,$
   then the stationary point remains separated from the singularities, and
   contributes a standard asymptotic expansion
   \begin{equation}
     u^-_{00}(r\eta)\sim\frac{e^{ik_1r}}{\sqrt{r}}\sum_{j=0}^{\infty}\frac{a^-_{0j}(\eta)}{r^j}.
   \end{equation}
   We again need to examine the contributions from small neighborhoods of $\pm
   k_1.$ The principal differences with the previous case are that
   $\htau_-(\xi)$ has an analytic extension to $\Im \xi>0,$ and is singular at
   $\xi=-k_1.$

   As before, the contribution from near to $k_1$ is given by the contour
   integral
   \begin{equation}
      u^-_{0+}(r\eta)=\frac{i}{4\pi}\int_{\Lambda_{\mu}}\frac{e^{ir(\eta_2\sqrt{k_1^2-t^2}+\eta_1t)}
         \htau_-(\sqrt{k_1^2-t^2})
          \psi(\sqrt{k_1^2-t^2})dt}{\sqrt{k_1^2-t^2}}.
   \end{equation}
   Setting $t=\sin(\theta+z)$ we see that the phase is given by
   \begin{equation}
     i(\eta_2\sqrt{k_1^2-t^2}+\eta_1t)=ik_1\cos(x+iy)=ik_1\cos(x)\cosh(y)+k_1\sin(x)\sinh(y).
   \end{equation}
   Hence, the real part of the phase is non-positive if
   \begin{equation}
     -\pi\leq x\leq 0\text{ and }0\leq y,\text{ or }y=0.
   \end{equation}
   The argument of $\htau_-$ is
   \begin{equation}
     \sqrt{k_1^2-t^2}=k_1\cos(\theta+x+iy)=k_1[\cos(\theta+x)\cosh(y)-i\sin(\theta+x)\sinh(y)].
   \end{equation}
   To have $\Im\sqrt{k_1^2-t^2}\geq 0,$ we need to take
   \begin{equation}
     -\pi-\theta\leq x\leq-\theta\text{ and }0\leq y,\text{ or }y=0.
   \end{equation}
   In the intersection $-\pi\leq x\leq -\theta,\, 0\leq y.$ We can deform 
   $\Lambda_{\mu}$ to a smooth curve, like $\Lambda_{\mu 2},$ see the red curve in
   Figure~\ref{fig1.2}[b], along which the
   real part of the phase is non-positive, and the $\Im \sqrt{k_1^2-t^2}$ is
   non-negative, and therefore
   \begin{equation}
       u^-_{0+}(r\eta)=\frac{i}{4\pi}\int_{\Lambda_{\mu 2}}\frac{e^{ir(\eta_2\sqrt{k_1^2-t^2}+\eta_1t)}
         \htau_-(\sqrt{k_1^2-t^2})
          \psi(\sqrt{k_1^2-t^2})dt}{\sqrt{k_1^2-t^2}},
   \end{equation}
which is $O(r^{-N}).$

This leaves the contribution from near to $-k_1:$
\begin{equation}
    u^-_{0-}(r\eta)=-\frac{i}{4\pi}\int_{\Lambda_{\mu}}\frac{e^{ir(\eta_ 1t-\eta_2\sqrt{k_1^2-t^2})}\htau_-(-\sqrt{k_1^2-t^2})
       \psi(\sqrt{k_1^2-t^2})dt}{\sqrt{k_1^2-t^2}}.
\end{equation}
Using the calculations above for the phase,~\eqref{eqn269.200} and argument of
$\htau_-,$~\eqref{eqn271.200}, we see that we can deform this contour,
$t=\sin(\theta+x+iy),$ keeping $x\geq -\theta, y\geq 0,$ to a smooth curve, so
that the real part of the phase is non-positive and the argument of $\htau_-$
lies in the upper half plane. Contours of this type are shown in red in
Fig.~\ref{fig1.2}[a]. The integral on the deformed contour is the integral of a
smooth function on a smooth curve, which avoids the stationary phase and is
therefore $O(r^{-N}).$ This completes the proof that we get a complete
asymptotic expansion
\begin{equation}
  u_0(r\eta)\sim\frac{e^{ik_1r}}{\sqrt{r}}\sum_{j=0}^{\infty}\frac{a_{0j}(\eta)}{r^j},
\end{equation}
with smooth coefficients provided $\eta_1, \eta_2>0.$

We now consider the contributions of $u_1^{\epsilon}(r\eta),$ for
$\epsilon\in\{+,-\}.$ We use the Fourier transform to represent these two
terms
  \begin{equation}\label{eqn147.45}
     u^{\pm}_1(r\eta)=\frac{1}{4\pi}\int_{-\infty}^{\infty}e^{ir(\eta_2\xi+\eta_1\sqrt{k_1^2-\xi^2})}\hsigma_{\pm}(\xi)
       d\xi.
   \end{equation}
The change of order of integrations to get from the formula
in~\eqref{eqn238.200} to this representation requires some justification as the
double integral, using the Fourier representation of $
\pa_{y_1}H^{(1)}_0(|r\eta-(0,y_2)|),$ is no longer absolutely convergent. From
the asymptotic expansion of $\sigma(y_2),$ it suffices to show that
\begin{multline}\label{eqn133.40}
  \lim_{R\to\infty}\int_{-R}^R\left[\int_{-\infty}^{\infty}e^{i\xi(x_2-y_2)+ix_1\sqrt{k_1^2-\xi^2}}d\xi\right]\frac{\varphi_{\pm}(y_2)e^{\pm
    ik_1y_2}}{\sqrt{|y_2|}}dy_2=\\
  \int_{-\infty}^{\infty}
  \left[\int_{-\infty}^{\infty}\frac{e^{-iy_2\xi}\varphi_{\pm}(y_2)}{\sqrt{|y_2|}}dy_2\right]e^{i\xi
    x_2+ix_1\sqrt{k_1^2-\xi^2}}d\xi.
\end{multline}
To that end, for each fixed $R,$ we change the order of the integrations in the
left hand side. Using analyticity and Cauchy's theorem we can
then deform the contour  of the $\xi$-integrand into the lower  half
plane near to $\xi=k_1,$  for $+,$  or into the upper half
plane near to $\xi=-k_1,$  for $-.$  Let the deformed contours be denoted
$\Gamma_{\pm}.$ We consider the $+$ case where  we need to estimate
\begin{equation}
  \int_{\Gamma_+}
  \left[\int_{R}^{\infty}\frac{e^{i(k_1-y_2)\xi}}{\sqrt{|y_2|}}dy_2\right]e^{i\xi
    x_2+ix_1\sqrt{k_1^2-\xi^2}}d\xi.
\end{equation}
The inner integral is uniformly bounded by $C/\sqrt{R};$ for $x_1>0$ the
$\xi$-integral is absolutely convergent and therefore this term is bounded by
$C/\sqrt{R},$ which justifies the change in the order integrations
\begin{multline}\label{eqn133.401}
  \lim_{R\to\infty}\int_{-R}^R\left[\int_{-\infty}^{\infty}e^{i\xi(x_2-y_2)+ix_1\sqrt{k_1^2-\xi^2}}d\xi\right]\frac{\varphi_{\pm}(y_2)e^{\pm
    ik_1y_2}}{\sqrt{|y_2|}}dy_2=\\
\int_{\Gamma_+}
  \left[  \int_{0}^{\infty}\frac{e^{-iy_2\xi}\varphi_{+}(y_2)}{\sqrt{|y_2|}}dy_2\right]e^{i\xi
    x_2+ix_1\sqrt{k_1^2-\xi^2}}d\xi.
\end{multline}
Using the analyticity of the $\xi$-integrand and the results of Lemma~\ref{lem4.201} we
can deform the integral back to the real axis. The $-$ case is essentially the
same, which completes the justification of~\eqref{eqn147.45}.

 There are several minor differences differences between $u^{\pm}_0$ and
 $u^{\pm}_1,$  which require comment, but the foregoing analyses of $u^{\pm}_0$
 apply, {\em mutatis mutandis}, to  $u^{\pm}_1$ as well. The main differences
 between the integrands defining these functions are near $\xi=\pm k_1:$
 \begin{equation}
   \frac{\htau_{\pm}(\xi)d\xi}{\sqrt{k_1^2-\xi^2}}=\frac{\sqrt{k_1\mp \xi}\,\gamma_{\pm}(\xi)d\xi}{\sqrt{k^2_1-\xi^2}}
 \end{equation}
 whereas
 \begin{equation}
   \hsigma_{\pm}(\xi)d\xi=\frac{\sqrt{k_1\pm\xi}\,\tgamma_{\pm}(\xi)d\xi}{\sqrt{k^2_1- \xi^2}},
 \end{equation}
 where $\gamma_{\pm}(\xi), \tgamma_{\pm}(\xi)$ are smooth near to $\xi=\pm k_1.$ We change variables
 near these points setting $t=\sqrt{k_1^2-\xi^2},$ so that these terms become
 the contour integrals
 \begin{equation}\label{eqn290.203}
   \begin{split}
     u^{+}_{1+}(r\eta)&=\int_{\Lambda_{\mu}}\frac{e^{ir(\eta_2\sqrt{k_1^2-t^2}+\eta_1t)}\hsigma_+(\sqrt{k_1^2-t^2})
       \psi(\sqrt{k_1^2-\mu^2})tdt}{\sqrt{k_1^2-t^2}},\\
     u^{+}_{1-}(r\eta)&=-\int_{\Lambda_{\mu}}\frac{e^{ir(\eta_1t-\eta_2\sqrt{k_1^2-t^2})}\hsigma_+(-\sqrt{k_1^2-t^2})
       \psi(\sqrt{k_1^2-\mu^2})tdt}{\sqrt{k_1^2-t^2}},\\
     u^{-}_{1+}(r\eta)&=\int_{\Lambda_{\mu}}\frac{e^{ir(\eta_2\sqrt{k_1^2-t^2}+\eta_1t)}\hsigma_-(\sqrt{k_1^2-t^2})
       \psi(\sqrt{k_1^2-\mu^2})tdt}{\sqrt{k_1^2-t^2}},\\
     u^{-}_{1-}(r\eta)&=-\int_{\Lambda_{\mu}}\frac{e^{ir(\eta_1t-\eta_2\sqrt{k_1^2-t^2})}\hsigma_-(-\sqrt{k_1^2-t^2})
       \psi(\sqrt{k_1^2-\mu^2})tdt}{\sqrt{k_1^2-t^2}}.
   \end{split}
 \end{equation}
 
 First note that, with the square root as defined in Lemma~\ref{lem4.201}, for
 $t\in\Lambda_{\mu},$ we have for $\hsigma_+$ that
 \begin{equation}\label{eqn128.210}
   \sqrt{\sqrt{k_1^2-t^2}-k_1}=\frac{-it}{\sqrt{\sqrt{k_1^2-t^2}+k_1}},
 \end{equation}
and for $\hsigma_-$ that
 \begin{equation}\label{eqn128.2100}
   \sqrt{k_1-\sqrt{k_1^2-t^2}}=\frac{t}{\sqrt{\sqrt{k_1^2-t^2}+k_1}}.
 \end{equation}
 These functions extend smoothly to a neighborhood, $W,$ of $\Lambda_{\mu}$
 in $\Re t\geq 0.$ Using~\eqref{eqn252.101} and~\eqref{eqn253.101}, we see that
 this implies that both functions
 $\hsigma_+(\sqrt{k_1^2-t^2})t$ and $\hsigma_-(-\sqrt{k_1^2-t^2})t$ are smooth
 in $W;$ for $\hsigma_+$
 \begin{equation}
   \frac{t}{\sqrt{\sqrt{k_1^2-t^2}-k_1}}=\frac{t\sqrt{\sqrt{k_1^2-t^2}+k_1}}{-it}=i\sqrt{\sqrt{k_1^2-t^2}+k_1},
 \end{equation}
  and for $\hsigma_-$
 \begin{equation}
   \frac{t}{\sqrt{k_1-\sqrt{k_1^2-t^2}}}=\frac{t\sqrt{\sqrt{k_1^2-t^2}+k_1}}{t}=\sqrt{\sqrt{k_1^2-t^2}+k_1}.
 \end{equation}
 This shows that there are no issues of integrability or smoothness near $t=0$ in the
 integrals defining $ u^{+}_{1+}(r\eta)$ and $ u^{-}_{1-}(r\eta).$
 
 We deform the contours in the formul{\ae} for $u^{\pm}_{1\pm}(r\eta)$
 to conclude that these terms are $O(r^{-N}),$ for any $N.$ This
 completes the proof that
 \begin{equation}
   u_1(r\eta)\sim\frac{e^{ik_1r}}{\sqrt{r}}\sum_{j=0}^{\infty}\frac{a_{1j}(\eta)}{r^j},
 \end{equation}
 with the coefficients smooth functions of $\eta,$ where $\eta_1,\eta_2 >0.$
 
 Similar arguments apply to analyze these terms where $\eta_2<0.$ We
 now take $\eta=(\sin\theta,\cos\theta),$ for $\theta\in
 (\frac{\pi}{2},\pi).$ After changing to the $t$-variable, the
 stationary phases for $u^+_{0+}, u^-_{0+}$ occur at $t=-k_1\eta_1,$
 which is outside the contour $\Lambda_{\mu},$ whereas, for $u^+_{0-},
 u^-_{0-}$ occur at $k_1\eta_1,$ which lies within
 $\Lambda_{\mu}.$ Using contour deformation arguments like those used
 above, we can show that the contributions of all of these terms are
 $O(r^{-N}),$ for any $N>0.$ The details of these arguments are given
 in the next section, where we show that the asymptotic expansions are
 uniformly valid as $\eta_2\to \pm 1.$ There is no essential difference
 if $\eta_1<0.$

 \subsection{Uniform Estimates as $\eta_1\to 0$}\label{sec7.4.83}
 To complete the analysis of $u_0^{l,r}(r\eta), u_1^{l,r}(r\eta)$ we need to
 consider what happens as $\eta_1\to 0^{+};$ as
 $\eta=(\sin\theta,\cos\theta),$ this corresponds to $\theta\to 0^+,$ or
 $\theta\to\pi^-,$ depending on whether $\eta_2>0,$ or $\eta_2<0.$ \Rd  As in
 the previous section, we only consider the $x_1>0$-case. The other case is
 proved by essentially identical arguments.\Bk

 The contour deformations from the previous section can be adapted to prove
 uniform estimates. In the integrals considered above, the stationary phase, in
 the $t$-variable, occurs at either $k_1\eta_1$ or $-k_1\eta_1,$ where the
 contour of integration, $\Lambda_{\mu}$ consists of 2 line segments meeting at
 $0.$ It is complicated to prove uniform estimates because the stationary phase
 moves as $\eta$ varies, and, as $\eta_1\to 0,$ ($\theta\to 0,$ or $\pi$)
 it converges to the point on $\Lambda_{\mu}$ where the two line segments meet.  If
 the stationary phase always occurred at such an intersection, then 
 Lemma~\ref{lem5.202} below can be used  to prove uniform estimates. As we show
 below, this can be accomplished by deforming the contours. 

 We begin with the lemma.
 \begin{lemma}\label{lem5.202}
   Let $f(x,\theta)\in\cC^{\infty}_c([0,1)\times [0,\delta]),$  for a
     $\delta>0,$ the integral
     \begin{equation}
       F(r,\theta)=\int_{0}^{1}e^{ir x^2}f(x,\theta)dx\sim
       \sum_{j=0}^{\infty}\frac{a_j(\theta)}{r^{\frac{(j+1)}{2}}},
     \end{equation}
     defines a $\cC^{\infty}$-function. The coefficients satisfy
     $a_{j}(\theta)\in\cC^{\infty}([0,\delta]).$ \Rd The error terms
     are uniformly bounded for $\theta\in [0,\delta].$ For any
     $l,m$ the derivatives of $F,$ $\pa_{\theta}^l\pa_r^mF(r,\theta),$
     have uniform asymptotic expansions as $r\to\infty$ obtained by
     differentiating the expansion for $F$ term-by-term.\Bk
 \end{lemma}

 \begin{proof}
   The smoothness of $F$ is immediate from the formula defining it.  For each
   $N>0$ we have the Taylor expansion, in $x,$ for $f$
     \begin{equation}
       f(x,\theta)=\sum_{j=0}^{2N} \frac{\pa_x^jf(0,\theta)}{j!}x^j+x^{2N+1}R_N(x,\theta),
     \end{equation}
     where $R_N\in\cC^{\infty}([0,1)\times [0,\delta]).$ Let
       $\psi\in\cC^{\infty}_c([0,1))$ equal $1$ on $\supp f(\cdot,\theta),$ for
         $\theta\in[0,\delta].$ We can rewrite $F(r,\theta)$ as
       \begin{equation}\label{eqn171.263}
         F(r,\theta)=\sum_{j=0}^{2N}\frac{\pa_x^jf(0,\theta))}{j!}\int_{0}^1e^{irx^2}x^j\psi(x)dx+
         \int_{0}^1 e^{irx^2}x^{2N+1}R_N(x,\theta)\psi(x)dx.
       \end{equation}
       \Rd 

        It is a classical result,
       proved by letting $y=x^2$ and integration by parts that, for $j\in\bbN\cup\{0\},$
       \begin{equation}
         \begin{split}
           \int_{0}^1e^{irx^2}x^{2j+1}\psi(x)dx&=
           \left(\frac{1}{i}\pa_r\right)^j\int_{0}^1e^{irx^2}x\psi(x)dx\\
        &=
           \left(\frac{1}{i}\pa_r\right)^j\left[\frac{i}{2r}+\frac{i}{2r}\int_0^1e^{iry}\tpsi(y)dy\right],
           \end{split}
       \end{equation}
       where $\tpsi\in \cC^{\infty}_c(0,1).$ The error term is therefore
       $O(r^{-M})$ for any $M>0.$ For even order terms we use a
       standard contour deformation argument, see~\cite{ZworskiSA2012}, to
       conclude that
        \begin{equation}\begin{split}
          \int_{0}^1e^{irx^2}x^{2j}\psi(x)dx&=
          \left(\frac{1}{i}\pa_r\right)^j\left[\int_{0}^1e^{irx^2}\psi(x)dx\right]\\
          &=
          \left(\frac{1}{i}\pa_r\right)^j\left[\frac{c_0}{\sqrt{r}}+\int_{\Upsilon}
            (\psi(x)-1)e^{irx^2}dx\right],
          \end{split}
        \end{equation}
        where $\Upsilon$ is a contour in the complex plane starting at $0$ that lies along the
        real axis within the support of $\psi$ and smoothly interpolates, within
        the first quadrant, to the line
        $z=\{(t+1)e^{\frac{\pi i}{4}}:t\in[0,\infty)\},$ so that for large
          arguments the integrand is $-e^{-r(t+1)^2}.$ From this formula it is
          clear that, for any $j,$ the remainder terms are $O(r^{-M})$ for any $M>0.$

          These formul{\ae} show that there are universal
          constants $\{\tc_j\}$ so that
        \begin{equation}\label{eqn174.263}
          F(r,\theta)=\sum_{j=0}^{2N}\frac{\tc_j
            \pa_x^jf(0,\theta)}{r^{\frac{j+1}{2}}}+
           \int_{0}^1 e^{irx^2}x^{2N+1}R_N(x,\theta)\psi(x)dx+O(r^{-M})\text{
             for any }M.
        \end{equation}
              The smoothness of the coefficients of the expansion for
              $\theta\in[0,\delta]$ is immediate from~\eqref{eqn174.263}. To
              complete the proof of the lemma we need to estimate the 
              $R_N$-term in~\eqref{eqn174.263}.

                Integrating by parts $(N+1)$ times, we argue as in the proof of
                Lemma~\ref{lem0} to show that
       \begin{equation}
       \int_{0}^1 e^{irx^2}x^{2N+1}R_N(x,\theta)\psi(x)dx=O(r^{-(N+1)}),
       \end{equation}
     uniformly for $\theta\in [0,\delta].$ From~\eqref{eqn171.263} and the
     smoothness of $R_N(x,\theta)$ it is clear that
       \begin{multline}
         \pa_{\theta}^l\pa_r^{m} F(r,\theta)=
         \sum_{j=0}^{2N}\frac{\pa_{\theta}^l\pa_x^jf(0,\theta))}{j!}\int_{0}^1(ix^2)^me^{irx^2}x^j\psi(x)dx+\\
         \int_{0}^1
         (ix^2)^me^{irx^2}x^{2N+1}\pa_{\theta}^lR_N(x,\theta)\psi(x)dx.
       \end{multline}
       The final statement of the lemma follows easily from this formula, the
       theorem of Coddington and Levinson quoted at the end of
       Section~\ref{ss.asympt}, and the foregoing observations. \Bk
 \end{proof}
 \begin{remark}\label{rmk7.43} We apply this lemma to integrals of analytic functions over a
   smooth family of contours $\{\Lambda^{\theta}_{\mu\pm}:\:\theta\in
   [0,\delta]\}.$ By explicitly parametrizing the contours, removing the phase
   factor $e^{irk_1},$ and choosing changes
   of variable, which depend smoothly on $\theta,$ the phase can be normalized to
   $x^2.$ Hence these integrals can be reduced to the form in the lemma, that is  to  integrals
   of functions depending smoothly on $\theta$ over a fixed interval. We leave
   the details of this reduction to the interested reader.  In our applications
   of this lemma we have a pair of contours meeting at a point; we are always
   able to conclude that the coefficients of terms with integral exponents, 
   $\{r^{-j}:\: j\in\bbN\},$ cancel and these terms are therefore absent.
 \end{remark}

 In the case we are now considering the
 stationary point, in the original $\xi$ variable,   is moving toward an endpoint of the interval
 $[-k_1,k_1].$ In light of that, we modify the definitions of
 $u^{\epsilon}_j,$ for $\epsilon\in \{0,+,-\}.$ We assume that
 $|\eta_1|<\epsilon\ll 1,$ and choose $\mu>0,$ so that
 \begin{equation}
   |k_1\eta_1|<\frac{\mu}{2}.
 \end{equation}
 With this choice we easily show that
 \begin{equation}
   u^0_{j}(r\eta)=O(r^{-N})\text{ for any }N,
 \end{equation}
 uniformly as $\eta_1\to 0^{+}.$

 The integrals defining $u^{\pm}_j$ now contain the stationary
 point. It  turns out that some parts of these terms are still uniformly
 $O(r^{-N}),$ whereas others  have expansions like those in
 Theorem~\ref{thm3.75}.  For example, we easily show that the unbounded contributions,
 \begin{equation}
   u^{\pm}_{j\infty}(r\eta)=O(r^{-N})\text{ for any }N>0, j=0,1,
 \end{equation}
 uniformly as $\eta_1\to 0^+.$ Below we show that the contour in the integrals
 defining the functions $u^{\pm}_{j\pm},$ $j=0,1$ can be deformed either to a
 fixed smooth curve, containing, or avoiding the stationary phase point, or to a
 smooth family of intersecting curves passing through the stationary phase
 points. Applying Lemma~\ref{lem5.202} in the latter case gives the desired
 uniform expansions as $\eta_1\to 0^{+}.$
 In this section we follow the order of terms considered in the previous section,
 with more detailed discussions for the $\eta_2<0$ case.

 We start with $\eta_2>0,$ and $u^{+}_{0+}:$
 \begin{equation}\label{eqn300.203}
   u^{+}_{0+}(r\eta)=\frac{i}{4\pi}\int_{\Lambda_\mu}\frac{e^{ir(\eta_2\sqrt{k_1^2-t^2}+
         \eta_1t)}\htau_{+}(\sqrt{k_1^2-t^2})\psi(\sqrt{k_1^2-t^2})dt}{\sqrt{k_1^2-t^2}}.
 \end{equation}
 For $\eta_1>0,$ integration on the deformed contour $\Lambda_{\mu 1}$
 shows that
 \begin{equation}\label{eqn301.203}
   u^{+}_{0+}(r\eta)\sim\frac{e^{ik_1r}}{\sqrt{r}}\sum_{j=0}^{\infty}\frac{a_{0j}^+(\eta)}{r^j},
 \end{equation}
 with $a^+_{0j}\in\cC^{\infty}((0,\delta])$ for a $\delta>0.$ We can
   further deform the contour, setting
   \begin{equation}
     t=k_1\sin(\theta(1+x(s))+iy(s)).
   \end{equation}

   \begin{figure}[h]
  \centering
  \begin{subfigure}[t]{.45\textwidth}
     \includegraphics[height= 8cm]{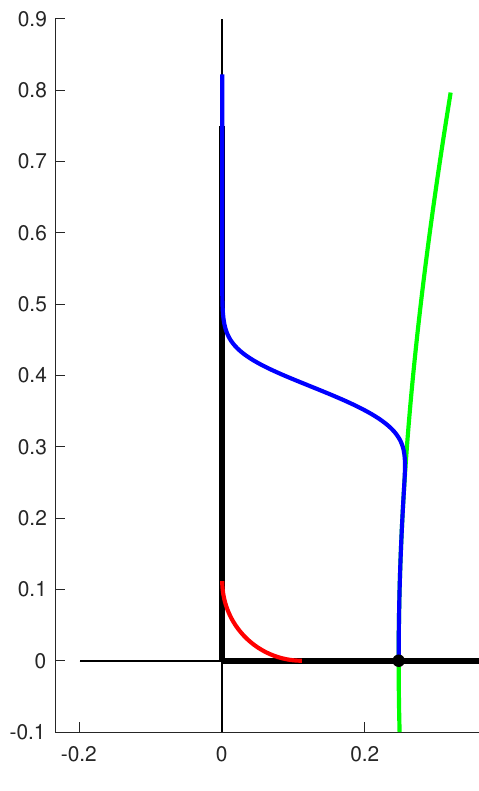}
    \caption{
      $\theta=0.25.$ }
  \end{subfigure}\quad
   \begin{subfigure}[t]{.45\textwidth}
     \includegraphics[height= 8cm]{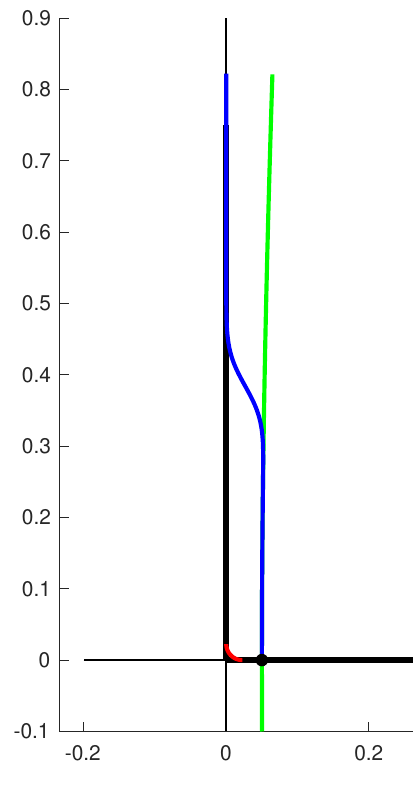}
    \caption{$\theta=.05.$}
    \end{subfigure}
    \caption{Contour deformations for the integral
      in~\eqref{eqn300.203}. The deformation $\Lambda^{\theta}_{\mu+}$
      includes the blue curve; the stationary point is the black dot.}
   \label{fig6.203}
\end{figure}

If $-\theta\leq x\leq 0$ and $0\leq y,$ then the real part of the
phase is non-positive and $\Im\sqrt{k_1^2-t^2}\leq 0.$ More
explicitly, we choose $x\in\cC^{\infty}([0,1]),$ a monotone increasing
functions with
\begin{equation}
  x(s)=
  \begin{cases}
    -1\text{ for }s\in [0,\frac 14],\\
     0\text{ for }s\in [\frac 34,1],
  \end{cases}
\end{equation}
and $y(s)=(1-s)\phi_0,$ where $k_1\sinh(\phi_0)=\mu.$ The smooth family of
deformed curves, $\Lambda^{\theta}_{\mu+},$
is then given by
\begin{equation}
  t\in\{k_1\sin(\theta(1+x(s))+iy(s)):\: s\in [0,1]\}\cup
  [k_1\sin\theta,\mu],\text{ for }\theta\in [0,\delta].
\end{equation}
Examples are shown as the blue curves in Figure~\ref{fig6.203}.

Cauchy's theorem implies that
\begin{equation}\label{eqn305.203}
   u^{+}_{0+}(r\eta)=\frac{i}{4\pi}\int_{\Lambda^{\theta}_{\mu+}}\frac{e^{ir(\eta_2\sqrt{k_1^2-t^2}+
         \eta_1t)}\htau_{+}(\sqrt{k_1^2-t^2})\psi(\sqrt{k_1^2-t^2})dt}{\sqrt{k_1^2-t^2}}.
\end{equation}
As follows from~\eqref{eqn128.210}, $\htau_+(\sqrt{k_1^2-t^2})$ is a smooth
function in a closed set, $W$ which includes the region swept out by the
deformations, $\{\Lambda_{\mu+}^{\theta}:\: \theta\in [0,\delta]\}.$ Therefore, with appropriate changes of variable,
we can apply Lemma~\ref{lem5.202} to the two segments of
$\Lambda_{\mu+}^{\theta}$ that meet at $\eta_1=k_1\sin\theta,$ to conclude that
the asymptotic expansion in~\eqref{eqn301.203} holds uniformly down to
$\theta=0,$ and the coefficients are smooth functions of $\theta\in[0,\delta].$

We next consider
\begin{equation}\label{eqn306.203}
  u^+_{0-}(r\eta)=-\frac{i}{4\pi}\int_{\Lambda_{\mu}}
  \frac{e^{ir(\eta_ 1t-\eta_2\sqrt{k_1^2-t^2})}\htau_+(-\sqrt{k_1^2-t^2})
       \psi(\sqrt{k_1^2-t^2})dt}{\sqrt{k_1^2-t^2}}.
\end{equation}
From our previous analysis, we know that, if $\eta_1>0,$ then
$u^+_{0-}(r\eta)=O(r^{-N})$ for any $N,$ as the stationary point lies
at $-k_1\eta_1,$ which does not belong to the contour $\Lambda_{\mu 2}.$ For this case we set
\begin{equation}
  t=k_1\sin(\theta(1+x(s))+iy(s));
\end{equation}
if $-2\leq x(s)\leq -1$ and $0\leq y(s),$ then the real
part of the phase is non-positive and $\Im-\sqrt{k_1^2-t^2}\leq 0.$
The smooth family of deformed contours,
\begin{equation}
\Lambda_{\mu-}^{\theta}=\{k_1\sin(\theta(1+x(s))+iy(s)):\: s\in
       [0,1]\}\cup [-k_1\sin\theta,\mu],\text{ with }\theta\in [0,\delta],
\end{equation}
with corner at $-k_1\sin\theta,$ is given by $x\in\cC^{\infty}([0,1]),$ a
monotone decreasing function with
\begin{equation}
  x(s)=
  \begin{cases}
    &-1\text{ for }s\in [0,\frac 14],\\
      &-2\text{ for }s\in [\frac 34,1],
  \end{cases}
\end{equation}
and $y(s)=(1-s)\phi_0.$ Examples are shown in Figure~\ref{fig7.203};
the blue curves are parts of $\Lambda_{\mu -}^{\theta}.$
 \begin{figure}[h]
  \centering
  \begin{subfigure}[t]{.45\textwidth}
     \includegraphics[height= 8cm]{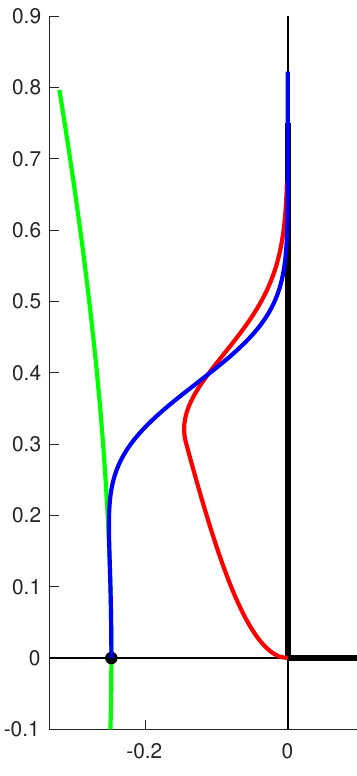}
    \caption{
      $\theta=0.25.$ }
  \end{subfigure}\quad
   \begin{subfigure}[t]{.45\textwidth}
     \includegraphics[height= 8cm]{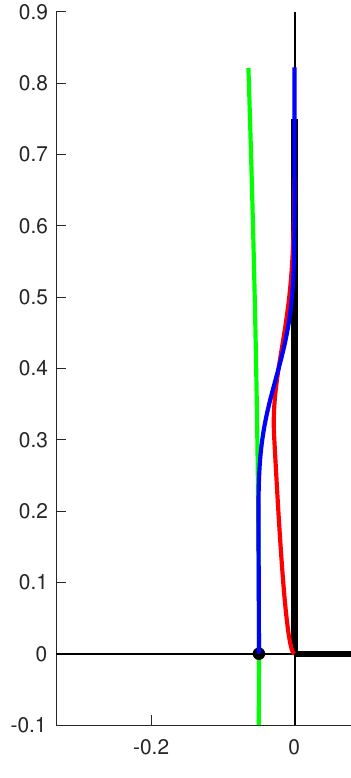}
    \caption{$\theta=.05.$}
    \end{subfigure}
    \caption{Contour deformations for the integral
      in~\eqref{eqn306.203}. The deformation $\Lambda^{\theta}_{\mu-}$
      includes the blue curve.}
   \label{fig7.203}
 \end{figure}

 The function $\htau_+(\xi)$ is singular where $\xi=k_1,$ so
 $\htau_+(-\sqrt{k_1^2-t^2})$ is smooth throughout the regions swept out by the
 deformations, $\{\Lambda_{\mu-}^{\theta}:\: \theta\in [0,\delta]\}.$ Hence we
 can once again apply Lemma~\ref{lem5.202} to the smooth family of integrals
 \begin{equation}\label{eqn310.204}
   u^+_{0-}(r\eta)=-\frac{i}{4\pi}\int_{\Lambda_{\mu-}^{\theta}}
  \frac{e^{ir(\eta_ 1t-\eta_2\sqrt{k_1^2-t^2})}\htau_+(-\sqrt{k_1^2-t^2})
       \psi(\sqrt{k_1^2-t^2})dt}{\sqrt{k_1^2-t^2}},
 \end{equation}
 to conclude that $u^{+}_{0-}(r\eta)$ has asymptotic expansions
 uniformly down to $\eta_1=0.$ The novelty here is that 
 $u^{+}_{0-}(r\eta)=O(r^{-N})$ for any $N.$ Hence for $\eta_1>0,$ all terms of
 the expansions from the two segments cancel, and we conclude that
 $u^{+}_{0-}(r\eta)=O(r^{-N}),$ for any $N>0,$ uniformly as $\eta_1\to 0^+.$

 We now turn to
 \begin{equation}
       u^-_{0+}(r\eta)=\frac{i}{4\pi}\int_{\Lambda_{\mu}}\frac{e^{ir(\eta_2\sqrt{k_1^2-t^2}+\eta_1t)}
         \htau_-(\sqrt{k_1^2-t^2})
          \psi(\sqrt{k_1^2-t^2})dt}{\sqrt{k_1^2-t^2}}.
 \end{equation}
 We deform the contour with $t=k_1\sin(\theta+x(s)+iy(s)).$ To keep
 the real part of the phase non-positive we need to have
 \begin{equation}
   -\pi\leq x(s)\leq 0\text{ and }0\leq y(s);
 \end{equation}
 to insure that $\Im\sqrt{k_1^2-t^2}\geq 0,$ we need to have
 \begin{equation}
   -\pi-\theta\leq x(s)\leq-\theta\text{ and }0\leq y(s).
 \end{equation}
 We now use a \emph{fixed} contour like $\Lambda_{\mu 2},$ (see the red curve in
 Figure~\ref{fig1.2}][b]) with $-\frac{\pi}{2}<x(s)+\theta\leq 0,$ as described
     above to obtain
  \begin{equation}
       u^-_{0+}(r\eta)=\frac{i}{4\pi}\int_{\Lambda_{\mu2}}\frac{e^{ir(\eta_2\sqrt{k_1^2-t^2}+\eta_1t)}
         \htau_-(\sqrt{k_1^2-t^2})
          \psi(\sqrt{k_1^2-t^2})dt}{\sqrt{k_1^2-t^2}}.
  \end{equation}
  The stationary points, $\{k_1\sin\theta:\:\theta\in[0,\delta]\},$ are now
  interior points of the contour, $\Lambda_{\mu2},$ and therefore we have a
  standard asymptotic expansion
  \begin{equation}
    u^-_{0+}(r\eta)=\frac{e^{ik_1r}}{\sqrt{r}}\sum_{j=0}^{\infty}\frac{a_{0j}^{-}(\eta)}{r^j},
  \end{equation}
  where the coefficients are smooth as $\eta_1\to 0^+.$

  The leaves
  \begin{equation}
    u^-_{0-}(r\eta)=-\frac{i}{4\pi}\int_{\Lambda_{\mu}}
    \frac{e^{ir(\eta_ 1t-\eta_2\sqrt{k_1^2-t^2})}\htau_-(-\sqrt{k_1^2-t^2})
       \psi(\sqrt{k_1^2-t^2})dt}{\sqrt{k_1^2-t^2}};
  \end{equation}
  the stationary point lies at $t=-k_1\sin\theta,$ which is outside
  the contour $\Lambda_{\mu}.$ If we deform the contour with
  $t=k_1\sin(\theta+x(s)+iy(s)),$ keeping $-\theta\leq x(s)\leq
  \pi-2\theta,$ and $0\leq y(s),$ then the real part of the phase will
  remain non-positive and the $\Im -\sqrt{k_1^2-t^2}\geq 0.$ Hence we
  can fix the choice of a curve like $\Lambda_{\mu 1},$ lying in the
  first quadrant, such that, for all $\eta_1>0,$ we have
  \begin{equation}
    u^-_{0-}(r\eta)=-\frac{i}{4\pi}\int_{\Lambda_{\mu 1}}
    \frac{e^{ir(\eta_ 1t-\eta_2\sqrt{k_1^2-t^2})}\htau_-(-\sqrt{k_1^2-t^2})
       \psi(\sqrt{k_1^2-t^2})dt}{\sqrt{k_1^2-t^2}}.
  \end{equation}
  The stationary point is a fixed positive distance from this contour
  as $\theta\to 0,$ and therefore
  \begin{equation}
    u^-_{0-}(r\eta)=O(r^{-N})\text{ for any }N>0,
  \end{equation}
  uniformly as $\eta_1\to 0^+.$

  Recalling that $t\hsigma_+(\sqrt{k_1^2-t^2})$ and
  $t\hsigma_-(-\sqrt{k_1^2-t^2})$ have the same regularity properties as
  $\htau_+(\sqrt{k_1^2-t^2})$ and $\htau_-(-\sqrt{k_1^2-t^2})$ near to $t=0,$ it
  follows from the formul{\ae} in~\eqref{eqn290.203} and the foregoing arguments
  that we have expansions
  \begin{equation}
    \begin{split}
      &u^+_{1+}(r\eta)\sim
   \frac{e^{ik_1r}}{\sqrt{r}}\sum_{j=0}^{\infty}\frac{a^+_{1j}(\eta)}{r^j},\quad
   u^+_{1-}(r\eta)=O(r^{-N})\text{ for any } N>0,\\
   &u^-_{1+}(r\eta)\sim
   \frac{e^{ik_1r}}{\sqrt{r}}\sum_{j=0}^{\infty}\frac{a^-_{1j}(\eta)}{r^j},\quad
   u^-_{1-}(r\eta)=O(r^{-N})\text{ for any } N>0,
    \end{split}
  \end{equation}
  with all expansions uniformly valid as $\eta_1\to 0^+.$

  This completes the proof of Theorem~\ref{thm3.75}, assuming the $\eta_2>0,$
  and $\eta_1\to 0^+.$ In the remainder of this section we explain the
  modifications that are needed if $\eta_2<0,$ and $\eta_1\to 0^+.$
  The proof with $\eta_1<0$ is essentially identical, and is left to the
  reader.

    The most significant difference that arises if $\eta_2<0$  is that phase functions,
  $\eta_2\sqrt{k_1^2-t^2}\pm \eta_1t,$ have  their stationary points at
  $\mp k_1\eta_1,$ rather than $\pm k_1\eta_1.$ We begin with
  \begin{equation}
    u^+_{0+}(r\eta)=\int_{\Lambda_{\mu}}\frac{e^{ir(\eta_2\sqrt{k_1^2-t^2}+\eta_1t)}\htau_+(\sqrt{k_1^2-t^2})dt}
    {\sqrt{k_1^2-t^2}}.
  \end{equation}
  As before $(\eta_1,\eta_2)=(\sin\theta,\cos\theta),$ but now
  $\theta\in [\frac{\pi}{2},\pi),$ and we are interested in the limit
    $\theta\to\pi^-.$ For the discussion that follows we fix a
    $0<\delta\ll 1,$ and consider $\theta\in[\pi-\delta,\pi].$

    We let $t=\sin(\theta+z),$ the phase is
    \begin{equation}\label{eqn321.204}
      i(\eta_2\sqrt{k_1^2-t^2}+\eta_1t)=ik_1\cos(x)\cosh(y)+k_1\sin(x)\sinh(y),
    \end{equation}
    and therefore the real part of the phase is non-positive if
    $-\pi\leq x\leq 0,$ and $0\leq y.$ On the other hand
    \begin{equation}\label{eqn322.204}
      \sqrt{k_1^2-t^2}=k_1\cos(\theta+x+iy)=k_1\cos(\theta+x)\cosh(y)-ik_1\sin(\theta+x)\sinh(y),
    \end{equation}
 so $\Im\sqrt{k_1^2-t^2}\leq 0,$ if $-\theta\leq x\leq \pi-\theta,$
 and $0\leq y.$ For $\theta\in[\pi-\delta,\pi],$ we can therefore
 deform the contour to a fixed contour like $\Lambda_{\mu 1},$ (see the red
 curve in Fig.~\ref{fig1.2}[a]) where
 the corner at $0$ is replaced by a smooth curve lying in the first
 quadrant. This contour is a positive distance to the stationary point, $-k_1\sin\theta,$ for
 $\theta\in [\pi-\delta,\pi],$ and therefore
 \begin{equation}
   u^+_{0+}(r\eta)=O(r^{-N})\text{ for any }N>0,
 \end{equation}
 uniformly as $\eta_1\to 0^+.$

 We next consider
 \begin{equation}
    u^+_{0-}(r\eta)=\int_{\Lambda_{\mu}}\frac{e^{ir(\eta_1 t-\eta_2\sqrt{k_1^2-t^2})}\htau_+(-\sqrt{k_1^2-t^2})dt}
    {\sqrt{k_1^2-t^2}};
  \end{equation}
the stationary phase occurs at $k_1\eta_1\in\Lambda_{\mu}.$ With
$t=\sin(\theta+z)$ the phase is
\begin{equation}
  i(\eta_1 t-\eta_2\sqrt{k_1^2-t^2})=-ik_1\cos(2\theta+x)\cosh(y)-k_1\sin(2\theta+x)\sinh(y).
\end{equation}
If $-2\theta\leq x\leq \pi-2\theta,$ and $0\leq y,$ then the real part
of the phase is non-positive. We also have
\begin{equation}
  -\sqrt{k_1^2-t^2}=-ik_1\cos(\theta+x)\sinh(y)+ik_1\sin(\theta+x)\sinh(y)
\end{equation}
To have $\Im -\sqrt{k_1^2-t^2}\leq 0$ we need to have $-\theta-\pi\leq x\leq
-\theta,$ so both conditions hold if $-2\theta\leq x\leq -\theta.$ For
$\theta\in [\pi-\delta,\pi]$ we fix a deformation of $\Lambda_{\mu}$ into the
second quadrant of the type $\Lambda_{\mu 2};$ (see the red curve in
Fig.~\ref{fig1.2}[b]). It is easy to see that along such a curve we can arrange
to have $-2\theta\leq x\leq -\theta,$ and $0\leq y.$ Cauchy's theorem implies
that
  \begin{equation}
    u^+_{0-}(r\eta)=\int_{\Lambda_{\mu 2}}\frac{e^{ir(\eta_1 t-\eta_2\sqrt{k_1^2-t^2})}\htau_+(-\sqrt{k_1^2-t^2})dt}
    {\sqrt{k_1^2-t^2}};
  \end{equation}
the stationary points at $k_1\sin\theta$ are interior points of this
contour, and therefore we have a asymptotic expansion
\begin{equation}
  u^+_{0-}(r\eta)\sim\frac{e^{ik_1r}}{\sqrt{r}}\sum_{j=0}^{\infty}\frac{a_{0j}^+(\eta)}{r^j},
\end{equation}
with coefficients that are smooth as $\eta_1\to 0^+,$ and uniformly bounded
error terms.

We next turn to
\begin{equation}
       u^-_{0+}(r\eta)=\frac{i}{4\pi}\int_{\Lambda_{\mu}}\frac{e^{ir(\eta_2\sqrt{k_1^2-t^2}+\eta_1t)}
         \htau_-(\sqrt{k_1^2-t^2})
          \psi(\sqrt{k_1^2-t^2})dt}{\sqrt{k_1^2-t^2}};
 \end{equation}
with $t=\sin(\theta+z),$ we use the computations in~\eqref{eqn321.204}
and~\eqref{eqn322.204} to see that the real part of the phase is
non-positive if $-\pi\leq x\leq 0,$ and
$\Im\sqrt{k_1^2-t^2}\geq 0$ if $-\theta-\pi\leq x\leq -\theta.$ The
stationary point is at $t=-k_1\sin\theta=k_1\sin(\theta-\pi).$ If
$\eta_1>0,$ then the stationary point does not lie on $\Lambda_{\mu}$
and therefore $u^-_{0+}(r\eta)=O(r^{-N}).$

To see that this statement
is true uniformly down to $\eta_1=0^+,$ we let
$x\in\cC^{\infty}([0,1])$ be a monotone decreasing function with
\begin{equation}
  x(s)=\begin{cases}
  &1\text{ for }s\in [0,\frac 14],\\
  &0\text{ for }s\in [\frac 34,1],
  \end{cases}
\end{equation}
and $y(s)=(1-s)\phi_0.$ We then define the smooth family of contours
$\Lambda^{\theta}_{\mu-}:$
\begin{equation}
  \Lambda^{\theta}_{\mu-}=\{\sin(\theta-\pi+(\pi-\theta)x(s)+iy(s):\:s\in
         [0,1]\}\cup [-k_1\sin\theta,\mu],
\end{equation}
with the segments meeting at the stationary point $-k_1\sin\theta.$ These are
like the blue curves shown in Figure~\ref{fig7.203}. The function
$\htau_-(\sqrt{k_1^2-t^2})$ is smooth in the set swept out by these contours, $\{\Lambda^{\theta}_{\mu-}:\:\theta\in[\pi-\delta,\pi]\}.$
This contour satisfies the conditions above, hence we can apply Cauchy's
theorem to conclude that, for $\eta_1\neq 0,$
\begin{equation}
       u^-_{0+}(r\eta)=\frac{i}{4\pi}\int_{\Lambda^{\theta}_{\mu-}}\frac{e^{ir(\eta_2\sqrt{k_1^2-t^2}+\eta_1t)}
         \htau_-(\sqrt{k_1^2-t^2})
          \psi(\sqrt{k_1^2-t^2})dt}{\sqrt{k_1^2-t^2}}.
\end{equation}
Using Lemma~\ref{lem5.202}, as in the analysis
of~\eqref{eqn310.204}, we conclude that
\begin{equation}
  u^-_{0+}(r\eta)=O(r^{-N})\text{ for any }N>0,
\end{equation}
uniformly as $\eta_1\to 0^+.$

The final case is
\begin{equation}
       u^-_{0-}(r\eta)=\frac{i}{4\pi}\int_{\Lambda_{\mu}}\frac{e^{ir(\eta_1t-\eta_2\sqrt{k_1^2-t^2})}
         \htau_-(-\sqrt{k_1^2-t^2})
          \psi(\sqrt{k_1^2-t^2})dt}{\sqrt{k_1^2-t^2}},
\end{equation}
with stationary point at $t=k_1\sin\theta.$ Using the estimates from
above we see that, if $t=k_1\sin(\theta+x+iy),$ then the real part of the phase is non-positive if
$-2\theta\leq x\leq \pi-2\theta,$ $0\leq y,$ and $\Im-\sqrt{k_1^2-t^2}\geq 0$ if
$-\theta\leq x\leq \pi-\theta,$ and both conditions hold if
$-\theta\leq x\leq \pi-2\theta.$ As before we can construct a smooth
family of contours $\Lambda^{\theta}_{\mu+},$ which satisfy these
conditions and contain 2 segments that meet at
$k_1\sin\theta=k_1\sin(\theta+(\pi-2\theta)).$ These are like the blue curves
shown in Figure~\ref{fig6.203}. By Cauchy's theorem
\begin{equation}
       u^-_{0-}(r\eta)=\frac{i}{4\pi}\int_{\Lambda^{\theta}_{\mu+}}\frac{e^{ir(\eta_1t-\eta_2\sqrt{k_1^2-t^2})}
         \htau_-(-\sqrt{k_1^2-t^2})
          \psi(\sqrt{k_1^2-t^2})dt}{\sqrt{k_1^2-t^2}}.
\end{equation}
It follows from~\eqref{eqn128.210} and~\eqref{eqn253.101} that the function
$\htau_-(-\sqrt{k_1^2-t^2})$ is smooth in the set swept out by these contours,
$\{\Lambda^{\theta}_{\mu+}:\:\theta\in[\pi-\delta,\pi]\};$ arguing as above, it
follows from Lemma~\ref{lem5.202} that
\begin{equation}
  u^-_{0-}(r\eta)=\frac{e^{ik_1r}}{\sqrt{r}}\sum_{j=1}^{\infty}\frac{a^{-}_{0j}(\eta)}{r^j},
\end{equation}
where the coefficients are smooth as $\eta_1\to 0^+,$  and the error
estimates are uniform down to $\eta_1=0^+.$
Essentially identical arguments apply to estimate $u_1(r\eta),$
and to the case $\eta_1<0.$ Altogether this completes the proof of
Theorem~\ref{thm3.75}.

\Rd

We finish the proof of Theorem~\ref{thm3.75} by showing that the asymptotic expansions
for the free space contributions to $u^{l,r}(r\eta)$ can be differentiated with
respect to $r$ to obtain asymptotic expansions for $\pa_r^ju^{l,r}_k(r\eta),$
for any $j\in\bbN$ and $k=0,1.$ These contributions to $u^{l,r}(r\eta)$ satisfy
the Helmholtz equation in a half space, which can be written in polar
coordinates as
\begin{equation}
  \pa_r^2u^{l,r}_{k}(r\eta)+\frac{1}{r}\pa_r
  u^{l,r}{k}(r\eta)+\frac{1}{r^2}\pa_{\eta}^2u^{l,r}_{k}(r\eta)+k_1^2u^{r,l}_k(r\eta)=0.
\end{equation}
We have already shown that the expansions can be differentiated with respect to $\eta.$
Using this equation and a simple inductive argument it suffices to show that
$\pa_ru^{l,r}_k(r\eta)$ has such an expansion.

This claim is immediate from the proof of Theorem~\ref{thm3.75} with $j=0,$ and
the fact that we can differentiate the various formul{\ae} for $u^{l,r}_0,
u^{l,r}_1$ under the integral sign leading to integrals of exactly the same type
as those estimated in the proof of this theorem.  This is immediately clear for
the compactly supported part of the integrals, that is $u^{l,r; 0}_k.$
Differentiating the formul{\ae} for $u^{l,r; \pm}_k$ in~\eqref{eqn130.666} with
respect to $r$ introduces a factor of $i(\eta_2\xi+\eta_1\sqrt{k_1^2-\xi^2})$ in
the numerator.  This changes nothing about the analysis of the bounded parts of
these integrals. The functions $\htau_{\pm}(\xi),\hsigma_{\pm}(\xi)$ are rapidly
decreasing and therefore the unbounded parts of these integrals, with the
additional factor of $i(\eta_2\xi+\eta_1\sqrt{k_1^2-\xi^2}),$ remain uniformly
$O(r^{-N})$ for any $N>0$ and $\eta_2\in [-1,1].$ Since these functions have asymptotic
expansions, applying the theorem of Coddington and Levinson to the functions
$\sqrt{r}e^{-ik_1r}u^{l,r}_{k}(r\eta)$ shows that these expansions can only
be obtained by differentiating the expansions for $u^{l,r}_0(r\eta), u^{l,r}_1(r\eta).$

\Bk

\begin{remark}
  Polar coordinates in this section are defined by
  $r=\sqrt{x_1^2+x_2^2},\,\tan\theta=\frac{x_1}{x_2}.$  We can equally well
  define polar coordinates with respect to any point $(0,D)$ on the $x_2$-axis,
  that is $r_D=\sqrt{x_1^2+(x_2-D)^2},\,\tan\theta_D=\frac{x_1}{x_2-D},$ and
  prove essentially identical asymptotic expansions with respect to this choice
  of polar coordinates. This proves useful in the next section when we consider
  the perturbation of the continuous spectral part of the solution.
\end{remark}

%% \begin{remark}\label{rmk9.43}
%%    For data of the type considered in Remark~\ref{rmk4.213}, where the
%%    asymptotic expansion of $h$ involves the powers $\{|x_2|^{-j}\},$
%%    Lemma~\ref{lem5.202} is not applicable as the integrands in the $t$-variable
%%    are not smooth in the closure of the union of contours, but have a $\log t$
%%    singularity at the corner point where $t=0.$ It seems that the expansions we
%%    obtain for $u_0(r\eta)$ are not uniform, in this case, as $\eta_1\to
%%    0^{\pm}.$ Applying $ (\pa_r-ik_1)$ leads to integrals of the form
%%    \begin{equation}
%%      \int_{\Lambda_{\mu}}e^{ik_1 r t^2}t^2\log(t+\theta) \psi(t)dt,
%%    \end{equation}
%%   and therefore  these terms continue to satisfy an outgoing radiation condition
%%    with
%%    \begin{equation}
%%      (\pa_r-ik_1)u_0(r\eta)=O\left( r^{-1}\right),
%%    \end{equation}
%%    uniformly as $\eta_1\to 0^{\pm}.$ In Part III we see that any rate of decay
%%    for $ (\pa_r-ik_1)u_0(r\eta),$ of the form $r^{-(\delta+\frac 12)},$ for a $\delta>0,$
%%    suffices for the proof of uniqueness.
%% \end{remark}

\section{Estimates for $\cW^{l,r}\tau-\cW^{l,r\,'}\sigma$}\label{sec4}

We now prove estimates for the perturbation terms
$\cW^{l,r}\tau-\cW^{l,r\,'}\sigma.$ This requires consideration of many special
cases as the kernels of $\cW^{l,r}$ depend in a non-trivial way on whether or
not the arguments belong to $\supp q_{l,r}.$ To prove that the solutions satisfy
the radiation conditions within the channel, we also need to consider the
behavior of the solution as $|x_1|\to\infty,$ with $|x_2|$ bounded. \Rd In this
section we use standard polar coordinates
\begin{equation}
  \eta=(\cos\theta,\sin\theta).
\end{equation}
\Bk

 The operator $\cW$ can be split into a continuous spectral part, $w^c_{0+},$
 and a guided mode part, $w^g_{0+}.$ The guided mode contribution is a finite
 sum, and, where $x_1>0,$ the guided modes take the very simple form
 $\{v_n(x_2)e^{ i\xi_n x_1}\},$ with $k_1<\xi_n<k_2$ and $\{v_n(x_2)\}$
 exponentially decaying. Estimating these contributions is quite simple.
 We consider the contributions of the continuous spectrum,
 $u_{c0}(x)-u_{c1}(x),$ where
 \begin{equation}\label{eqn0.1}
   \begin{split}
     &u_{c0}(x)\overset{d}{=}
     \int_{-\infty}^{\infty}w^c_{0+}(x;0,y_2)\tau(y_2)dy_2\\
     &u_{c1}(x)\overset{d}{=}
     \int_{-\infty}^{\infty}\pa_{y_1}w^c_{0+}(x;0,y_2)\sigma(y_2)dy_2.
     \end{split}
 \end{equation}
 The kernel $w^c_{0+}$ has a representation in terms of the Fourier transform
 \begin{equation}
   w^c_{0+}(x;0,y_2)=\frac{1}{2\pi}\int_{\Gamma_{\nu}^+}\tw(\xi,x_2;0,y_2)e^{ix_1\xi}d\xi,
 \end{equation}
 where, for $x_1>0,$ the integral is over the contour $\Gamma_{\nu}^+,$ see
 Figure~\ref{FTcontour}, which includes semi-circles in the upper half plane
 centered on the zeros of the Wronskian. In the following pages we prove
 asymptotic results for $u_{c0}$ and $u_{c1},$ which rely on the asymptotic
 expansions for the sources,~\eqref{eqn230.63}, which are proved above in
 Theorem~\ref{thm2.75}.  The maximum $N$ for which such an expansion holds
 depends on the data $g(x_2), h(x_2).$
   \begin{figure}
  \centering
    \includegraphics[width= 10cm]{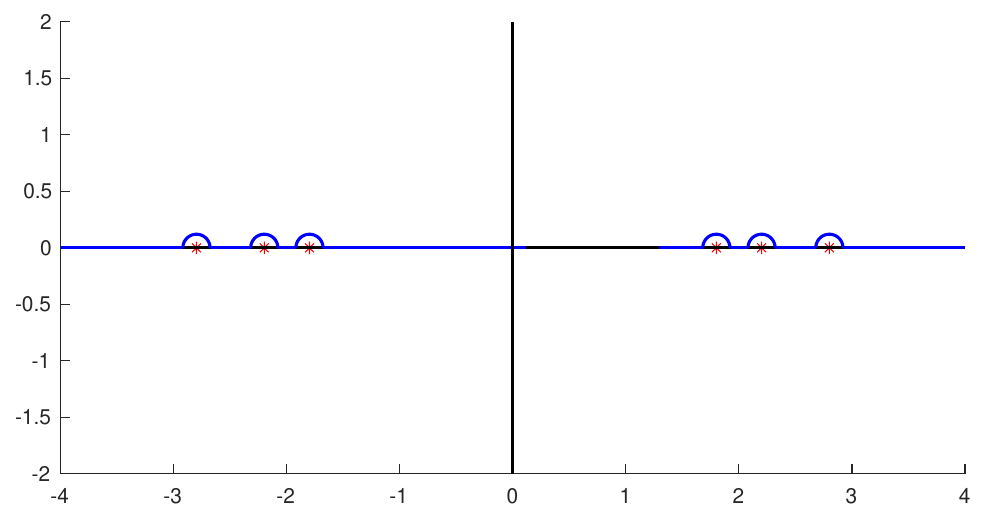}
    \caption{The contour $\Gamma^+_{\nu}$ shown in blue. The roots of
      Wronskian $\{\pm\xi_n\}$ are shown as red asterisks.}
   \label{FTcontour}
   \end{figure}

  We prove the following asymptotic expansions for the perturbation terms. 
 \begin{theorem}\label{thm3.200}
   Suppose that $\sigma,\tau$ have asymptotic expansions as in~\eqref{eqn230.63}
   for a sufficiently large $N.$ If $\eta_2\neq 0,$ then for $M$ depending on $N,$
   \begin{equation}\label{eqn7.7}
     \begin{split}
       u^{l,r}_{c0}(r\eta)=\frac{e^{ik_1r}}{r^{\frac
           12}}\sum_{j=0}^M\frac{a^{l,r}_j(\eta)}{r^{j}}+O(r^{-(M+1)}),\\
         u^{l,r}_{c1}(r\eta)=\frac{e^{ik_1r}}{r^{\frac
           12}}\sum_{j=0}^M\frac{b^{l,r}_j(\eta)}{r^{j}}+O(r^{-(M+1)}).
     \end{split}
   \end{equation}
   The coefficients are smooth functions of $(\eta_1,\eta_2)$ where
   $l\leftrightarrow \eta_1< 0,$ $r\leftrightarrow \eta_1> 0,$ resp. They have
   smooth extensions as $\eta_2\to 0^{\pm},$ and the error terms are bounded
   uniformly as $\eta_2\to 0^{\pm}.$\footnote{See Remark~\ref{rmk9.200}.} The
   functions $\pa_r^j u^{l,r}_{c0}(r\eta), \pa_r^j u^{l,r}_{c1}(r\eta)$ with
   $j\in\bbN,$ have similar expansions obtained by differentiating the
   expansions in~\eqref{eqn7.7}.

   For fixed $x_2$ we have the expansions
    \begin{equation}\label{eqn8.7}
     \begin{split}
       u^{l,r}_{c0}(x_1,x_2)=\frac{e^{ik_1x_1}}{x_1^{\frac
           12}}\sum_{j=0}^M\frac{\ta^{l,r}_j(x_2)}{x_1^{ j}}+O(x_1^{-(M+1)}),\\
         u^{l,r}_{c1}(x_1,x_2)=\frac{e^{ik_1x_1}}{x_1^{\frac
           12}}\sum_{j=0}^M\frac{\tb^{l,r}_j(x_2)}{x_1^{ j}}+O(x_1^{-(M+1)}).
     \end{split}
    \end{equation}
      The coefficients are continuous functions of $x_2$ and the errors are
   uniformly bounded. The leading coefficients $\ta^{l,r}_0(x_2),\tb^{l,r}_0(x_2)$ are
   constant for $\pm x_2>d.$ The
   functions $\pa_{x_1}^j u^{l,r}_{c0}(x_1,x_2), \pa^j_{x_1}u^{l,r}_{c1}(x_1,x_2)$ with
   $j\in\bbN,$ have similar expansions obtained by differentiating the
   expansions in~\eqref{eqn8.7}.
 \end{theorem}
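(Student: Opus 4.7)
The plan is to reproduce the Fourier-representation plus contour-deformation scheme of Section~\ref{sec7.4.83}, using the representation $w^c_{0+}(x;0,y_2) = (2\pi)^{-1}\int_{\Gamma_\nu^+}\tw(\xi,x_2;0,y_2)e^{ix_1\xi}d\xi$ in place of the Sommerfeld formula~\eqref{eqn26.51}. Substituting into~\eqref{eqn0.1} and splitting the $y_2$-integrals with the partition $1=\varphi_-+\varphi_0+\varphi_+$ from~\eqref{eqn1.100}, the unbounded pieces produce the Fourier transforms $\htau_\pm,\hsigma_\pm$, whose singularities at $\xi=\pm k_1$ are recorded in~\eqref{eqn252.101}--\eqref{eqn253.101} by virtue of Theorem~\ref{thm2.75}. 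The compactly supported $\varphi_0$-piece yields a smooth density in $y_2$ integrated against $\tw$, which is smooth in $\xi$ along $\Gamma_\nu^+$, and is treated by routine stationary phase.

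For the radial expansions~\eqref{eqn7.7} with $\eta_2\neq 0$, I would use the $|x_2|>d$ asymptotics of $\tw(\xi,x_2;0,y_2)$ from Part~I, which factor as $e^{i|x_2|\sqrt{k_1^2-\xi^2}}$ times symbols smooth on $(-k_1,k_1)$ and exponentially decaying outside. Setting $x_2=r\eta_2$, the total oscillation along $\Gamma_\nu^+$ becomes $e^{ir(\eta_1\xi+\Sgn(\eta_2)|\eta_2|\sqrt{k_1^2-\xi^2})}$, precisely the phase analyzed in Section~\ref{sec7.4.83}, with stationary point $\xi=k_1\eta_2\in(-k_1,k_1)$ and branch points at $\pm k_1$. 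I split $\Gamma_\nu^+$ into a neighborhood of the stationary point, neighborhoods of $\pm k_1$, and a far piece (the last being $O(r^{-N})$ by integration by parts). The interior stationary phase contributes integer powers of $1/r$; the endpoint contributions are handled via $t=\sqrt{k_1^2-\xi^2}$ and the contour deformations $\Lambda^\theta_{\mu\pm}$. The half-integer powers in the $u_{c0}$-expansion arise from the $\sqrt{\xi\mp k_1}$ factor in $\htau_\pm$, while for $u_{c1}$ the $(\xi\mp k_1)^{-1/2}$ in $\hsigma_\pm$ is absorbed by the Jacobian $t\,dt$ as in~\eqref{eqn128.210}--\eqref{eqn128.2100}, yielding only integer powers. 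Uniformity as $\eta_2\to 0^\pm$ follows from Lemma~\ref{lem5.202}, applied to smooth families of contours meeting at the stationary point, which migrates toward $\xi=0$, an interior regular point of $\Gamma_\nu^+$.

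For~\eqref{eqn8.7} with $x_2$ fixed and $x_1\to\infty$, the only oscillation is $e^{ix_1\xi}$, so the large-$x_1$ asymptotics are governed entirely by the branch points at $\xi=\pm k_1$. Deforming $\Gamma_\nu^+$ near $\xi=k_1$ and changing to $t=\sqrt{k_1^2-\xi^2}$ gives an endpoint contribution $e^{ik_1 x_1}/\sqrt{x_1}$ times an expansion in integer powers of $1/x_1$, with coefficients $\ta^{l,r}_k(x_2),\tb^{l,r}_k(x_2)$ obtained by evaluating smooth functions of $(t,x_2)$ coming from $\tw$ and the Fourier densities at $t=0$; their continuity in $x_2$ follows from the smoothness of $\tw(\xi,x_2;0,y_2)$ in $x_2$. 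The potential $\xi=-k_1$ contribution, which would produce a non-outgoing $e^{-ik_1 x_1}$ piece, is forced to vanish by the outgoing choice of $\Gamma_\nu^+$ made in Part~I. The constancy of $\ta^{l,r}_0,\tb^{l,r}_0$ for $\pm x_2>d$ follows from the factorization of the leading term of $\tw$ in that range as $e^{ik_1|x_2|}$ times an $x_2$-independent quantity, evaluated at $t=0$.

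The main technical obstacle is justifying the interchange of the $y_2$- and $\xi$-integrations when $\hsigma_\pm$ is only conditionally convergent, and the subsequent deformation of $\Gamma_\nu^+$ through the branch-point structure inherited from both the sources and the kernel $\tw$. The input needed is the analytic continuation properties of $\tw(\xi,x_2;0,y_2)$ into small one-sided neighborhoods of $(-k_1,k_1)$ provided by the Part~I construction, together with Lemma~\ref{lem4.201} to control the endpoint singularities of the Fourier transforms; once the contour is deformed into a regime where all integrals are absolutely convergent, the interchange follows by dominated convergence, exactly as for $u^\pm_1$ at the end of Section~\ref{sec7.4.83}.
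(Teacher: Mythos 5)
Your overall architecture coincides with the paper's (split the $y_2$-integrals with $\varphi_-,\varphi_0,\varphi_+$, use the Fourier representation of $w^c_{0+}$ over $\Gamma_\nu^+$, deform contours near the branch points, invoke Lemma~\ref{lem5.202} for uniformity), but the key geometric input is miscomputed, and this breaks the uniformity argument. For $|y_2|>d$ the $y_2$-dependence of $\tw(\xi,x_2;0,y_2)$ is the factor $e^{i\sqrt{k_1^2-\xi^2}|y_2|}$, so the tails of $\sigma,\tau$ enter through their transforms evaluated at $\sqrt{k_1^2-\xi^2}$, not at $\xi$ (see~\eqref{eqn19.7}); the phase is $ir\bigl(\eta_1\xi+|\eta_2|\sqrt{k_1^2-\xi^2}\bigr)$, i.e.\ the roles of $\eta_1,\eta_2$ are \emph{swapped} relative to Section~\ref{sec3.207}, and the stationary point is at $\xi=k_1\eta_1$, not $\xi=k_1\eta_2$. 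Hence as $\eta_2\to 0^{\pm}$ (so $\eta_1\to\pm 1$) the stationary point migrates into the branch point $\xi=\pm k_1$ --- equivalently into the corner $t=0$ after $t=\sqrt{k_1^2-\xi^2}$ --- and \emph{not} toward the interior regular point $\xi=0$ as you assert. Your claimed mechanism for uniformity (stationary point tending to an interior regular point, hence routine stationary phase with parameters) therefore does not apply; the collision of the stationary point with the branch point is precisely the difficulty, and it is what forces the smooth families of contours $\Lambda^{\theta}_{\mu 1},\Lambda^{\theta}_{\mu 2}$ with segments meeting at $k_1\sin\theta$, to which Lemma~\ref{lem5.202} is then applied. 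Relatedly, since $\Re\sqrt{k_1^2-\xi^2}\ge 0$ on the contour, the singularities of $\hsigma_{\pm},\htau_{\pm}$ recorded in~\eqref{eqn252.101}--\eqref{eqn253.101} play no role in this part of the analysis (the paper says so explicitly), so your explanation of the half-integer powers in $u_{c0}$ via the $\sqrt{\xi\mp k_1}$ factor, and of their absence in $u_{c1}$ via the Jacobian $t\,dt$, is not how these expansions actually arise here.

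For the channel expansions~\eqref{eqn8.7} you also pass over the hardest step. When both $|x_2|<d$ and $|y_2|<d$ the kernel involves $\fD(\xi,\sqrt{\xi^2-k_1^2};x_2,y_2)$, whose large-$\xi$ behavior is a symbol of order $j-2$ multiplied by $e^{-\sqrt{\xi^2-k_i^2}\,h(x_2,y_2)}$ with $h(x_2,y_2)\to 0$ as $x_2\to y_2$; proving that the high-frequency part $w^{[j]}_{01+}$ is $O(x_1^{-N})$ \emph{uniformly} in bounded $x_2,y_2$ is not routine integration by parts, and the paper devotes a symbol-calculus lemma for $\pa_\xi^m e^{-\lambda\sqrt{\xi^2-k_i^2}}$ and the uniform-in-$\lambda$ bounds on the integrals $I_j(\lambda)$ to exactly this point. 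By contrast, the interchange of the $y_2$- and $\xi$-integrations, which you single out as the main obstacle, is dispatched quickly. Finally, the contribution from $\xi=-k_1$ is not ``forced to vanish by the outgoing choice of $\Gamma_\nu^+$''; it is shown to be $O(x_1^{-N})$ by deforming to a fixed contour staying a positive distance from the stationary point.
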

 \begin{remark}\label{rmk9.200}
 For a subtle reason, with the assumption that the potential is supported in
 $[-d,d],$ the error terms in~\eqref{eqn7.7},
   \begin{equation}
     u^{l,r}_{c0}(r\eta)-\frac{e^{ik_1r}}{r^{\frac
           12}}\sum_{j=0}^M\frac{a^{l,r}_j(\eta)}{r^{j}},\quad  u^{l,r}_{c1}(r\eta)-\frac{e^{ik_1r}}{r^{\frac
           12}}\sum_{j=0}^M\frac{b^{l,r}_j(\eta)}{r^{j}},
   \end{equation}
   are not obviously uniform as $\eta_2\to 0^{\pm}.$ The reason is that the formul{\ae}
   used to derive these expansions, essentially~\eqref{eqn19.7}
   and~\eqref{eqn19.711}, do not apply unless $r|\eta_2|>d.$ This problem is
   easily dealt in several ways.  We could simply replace the polar coordinates
   $(r,\theta)$ centered at $(0,0)$  with polar coordinates centered at $(0,\pm d):$
   $(r_{\pm},\theta_{\pm}),$ with $r_{\pm}=\sqrt{x_1^2+(x_2\mp d)^2}.$ See
   Figure~\ref{figpm}.
 \begin{figure}[h]
  \centering
     \includegraphics[height= 6cm]{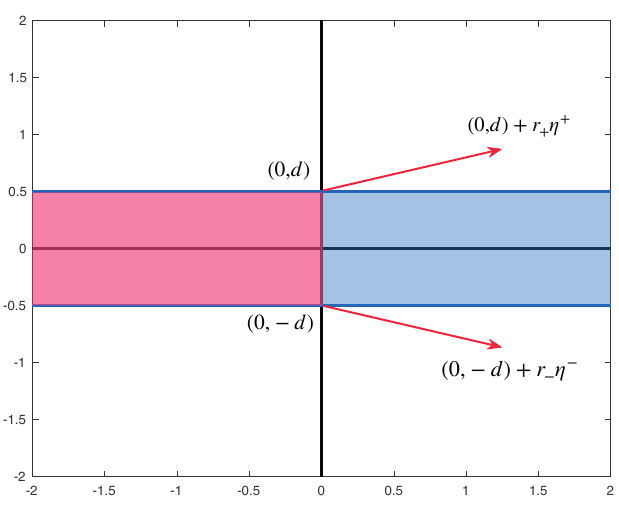}
     \caption{Modified polar coordinates $(r_{\pm},\eta_{\pm}),$
       with $\eta_{\pm}=(\cos(\theta_{\pm}),\sin(\theta_{\pm}).$}
   \label{figpm}
 \end{figure}
 The arguments given to prove Theorem~\ref{thm3.200}
apply equally well in  the new polar coordinates,
   \begin{equation}
     (x_1,x_2)=(r_{\pm}\cos(\theta_{\pm}),r_{\pm}\sin(\theta_{\pm})\pm d).
   \end{equation}
In these coordinates the formul{\ae} used to prove the asymptotic expansions are
valid as soon as $r_{\pm}>0,$ and therefore the error terms are now uniformly
valid as $\theta_{\pm}\to 0^{\pm}.$

Alternatively, for the case of asymptotics in the first quadrant we could assume
that the potential, $q^r$ is supported in $[-2d,0];$ with this assumption the
error terms in~\eqref{eqn8.7} as $\eta_2\to 0^+$ will be uniformly small. We
could equally well have assumed that the potential is supported in $[0,2d]$ to obtain
uniform estimates as $\eta_2\to 0^-.$ In order not to complicate the notation we
prove the theorem using standard polar coordinates.
 \end{remark}
 \begin{remark} The order $M$ in~\eqref{eqn7.7} and~\eqref{eqn8.7} tends to infinity as $N\to\infty.$ For simplicity
   we again assume that the expansions in~\eqref{eqn230.63} are valid with an
   arbitrarily large $N.$
 \end{remark}

 To prove this theorem we split the $y_2$-integrals in~\eqref{eqn0.1},
 as before, into a part with $|y_2|$ small and a part with $|y_2|$
 unbounded. The arguments are quite different for the 2 cases: we
 use direct estimates on the kernels for $|y_2|$ small, and a Fourier
 representation for $|y_2|$ large.  We now drop the $l,r$ sub- and
 super-scripts, and focus on the $r$-case. Assume that $\supp q\subset
 [-d,d];$ let $\varphi_{\pm}\in\cC^{\infty}(\bbR),$ with
 $\varphi_-(y_2)=\varphi_+(-y_2),$ and
 \begin{equation}\label{eqn1.1}
   \varphi_{+}(y_2)=\begin{cases} &0\text{  for }y_2<d+1,\\
   &1\text{ for }y_2>d+2.
   \end{cases}
 \end{equation}
 Let $\varphi_0(y_2)=1-(\varphi_+(y_2)+\varphi_-(y_2));$ with $\epsilon\in\{0,+,-\},$ define
 \begin{equation}
   \begin{split}
     &u_{c0}^{\epsilon}(r\eta)=\int_{-\infty}^{\infty}w^c_{0+}(r\eta;0,y_2)\varphi_{\epsilon}(y_2)\tau(y_2)dy_2,\\
   &u_{c1}^{\epsilon}(r\eta)=\int_{-\infty}^{\infty}\pa_{y_1}w^c_{0+}(r\eta;0,y_2)\varphi_{\epsilon}(y_2)\sigma(y_2)dy_2.
   \end{split}
 \end{equation}

\subsection{Large $|y_2|$ contributions}
If $|y_2|>d,$ then the $y_2$-dependence of $\tw(\xi,x_2;0,y_2)$
reduces to the factor $e^{i\sqrt{k_1^2-\xi^2}|y_2|}.$ We begin with
the large $|y_2|$ estimates. This is accomplished by first integrating
in the $y_2$-variable, which amounts to taking the Fourier transforms
of the `tails' of the sources,
$$\varphi_{\pm}(y_2)\sigma(y_2),  \varphi_{\pm}(y_2)\tau(y_2).$$

Let $\varphi_{\pm}$ be defined in~\eqref{eqn1.1}, and set
\begin{equation}
  \sigma_{\pm}(y_2)=\varphi_{\pm}(y_2)\sigma(y_2),\quad  \tau_{\pm}(y_2)=\varphi_{\pm}(y_2)\tau(y_2).
\end{equation}
In this section we use  slightly different definitions of
$\hsigma_{\pm},\htau_{\pm}$ from that given in~\eqref{eqn89.42}, used in the previous section:
\begin{equation}
  \begin{split}
    \hsigma_{\pm}(\xi)=\int_{-\infty}^{\infty}e^{\pm i\xi
      y_2}\varphi_{\pm}(y_2)\sigma(y_2)dy_2,\\
    \htau_{\pm}(\xi)=\int_{-\infty}^{\infty}e^{\pm i\xi y_2}\varphi_{\pm}(y_2)\tau(y_2)dy_2;
  \end{split}
\end{equation}
this choice of signs simplifies subsequent computations.

As noted, we assume that $N$ can be taken arbitrarily large
in~\eqref{eqn230.63}, then Lemma~\ref{lem4.201} shows that
$\hsigma_{\pm}(\xi),\htau_{\pm}(\xi)$ are smooth and rapidly decreasing as
$|\xi|\to\infty.$ However, we need to shift the frequency by $\pm k_1$ to get
the correct small $|\xi|$ behavior.  Lemma~\ref{lem4.201} shows that, for any
$N,$ there are polynomials $p^N_{\pm}, q^N_{\pm},$ and analytic functions,
$\hsigma^N_{\pm a}(\xi),\tau^N_{\pm a}(\xi),$ so that, for $\xi$ near to $\pm
k_1,$ 
\begin{equation}
  \begin{split}
    \hsigma_{\pm}(\xi)&=\frac{p^N_{\pm}(\xi+k_1)}{(\xi+k_1)^{\frac 12}}+\hsigma^N_{\pm
      a}(\xi)+O((\xi+k_1)^{N}),\\
    \htau_{\pm}(\xi)&=q^N_{\pm}(\xi+k_1)(\xi+k_1)^{\frac 12}+\htau^N_{\pm a}(\xi)+O((\xi+k_1)^{N}).
  \end{split}
\end{equation}
The $O$-terms are $\cC^{N+1}$-functions.
The singularities in these functions occur only at $\xi=-k_1.$ These functions
have analytic continuations to the upper half plane and decay exponentially as
$\Im\xi\to\infty.$

The arguments in this section are very similar to those in
Section~\ref{sec7.4.83}. In addition to the slight change in the definitions of
$\hsigma_{\pm},\htau_{\pm},$  the principal difference is in the definition of the
phase. In the $\xi$-variable, the phase in Section~\ref{sec3.207} is
\begin{equation}
  i(\eta_2\xi+\eta_1\sqrt{k_1^2-\xi^2}),
\end{equation}
whereas in this section it is
\begin{equation}
  i(\eta_2\sqrt{k_1^2-\xi^2}+\eta_1\xi).
\end{equation}
As a consequence we use standard polar coordinates
\begin{equation}
  (\eta_1,\eta_2)=(\cos\theta,\sin\theta),
\end{equation}
rather than the normalization in~\eqref{eqn98.207}.

The precise form of the integral depends on whether $|x_2|$ is greater or less
than $d;$ we start with $\eta_2\neq 0,$ so that $|x_2|=r|\eta_2|.$ \Rd The function
$\fA_0(\xi,\upsilon)$ is defined in  equation (211) in Part I. Assuming
that $q(x_2)=(k_2^2-k_1^2)\chi_{[-d,d]}(x_2),$ it is given by
\begin{equation}
  \fA_0(\xi,\upsilon)=\int_{-d}^{d}q(z_2)e^{\upsilon(z_2-d)}\left[\cos[(d+z_2)\sqrt{k_2^2-\xi^2}]+
    \upsilon\frac{\sin[(d+z_2)\sqrt{k_2^2-\xi^2}]}{\sqrt{k_2^2-\xi^2}}\right]dz_2;
\end{equation}
and the Wronskian $\fW(\xi,\upsilon)$  given in equation (209) of Part I is
\begin{equation}
   \fW(\xi,\upsilon)=-e^{-2i\upsilon d}\left[(2\xi^2-k_1^2-k_2^2)\frac{\sin 2d\sqrt{k_2^2-\xi^2}}{\sqrt{k_2^2-\xi^2}}
    -2i\upsilon\cos 2d\sqrt{k_2^2-\xi^2}\right].
\end{equation}
\Bk These are entire functions of $(\xi,\upsilon).$

In this case we interchange the order of integrations to see that
\begin{equation}\label{eqn19.7}
  \begin{split}
    u^{\pm}_{c0}(r\eta)=& \int\limits_{-\infty}^{\infty}\left[
    \frac{1}{2\pi}\int\limits_{\Gamma_{\nu}^+}\frac{e^{i\left[\sqrt{k_1^2-\xi^2}(|x_2|\pm y_2)+\xi
          x_1\right]}   \fA_0(\xi,-i\sqrt{k_1^2-\xi^2})d\xi }
  {\fW(\xi,\sqrt{k_1^2-\xi^2})\sqrt{k_1^2-\xi^2}}\right]\varphi_{\pm}(y_2)\tau(y_2)dy_2\\
   =& \frac{1}{2\pi}\int\limits_{\Gamma_{\nu}^+}\frac{e^{ir[\sqrt{k_1^2-\xi^2}|\eta_2|+\xi\eta_1]}\fA_0(\xi,-i\sqrt{k_1^2-\xi^2})
  \htau_{\pm}(\sqrt{k_1^2-\xi^2})}
   {\fW(\xi,\sqrt{k_1^2-\xi^2})\sqrt{k_1^2-\xi^2}}d\xi.
   \end{split}
\end{equation}
\Rd The functions $\htau_{\pm}$ extend analytically to the upper half plane; they are
evaluated at $\sqrt{k_1^2-\xi^2}.$ Note that for real $\xi$ with $|\xi|>k_1,$
$\sqrt{k_1^2-\xi^2}=i\sqrt{\xi^2-k_1^2}$ lies on the positive imaginary
axis. For sufficiently small $\nu>0$ and $\xi\in\Gamma^+_{\nu},$ lying over
$[-k_2,-k_1]\cup [k_1,k_2],$ the $\Im(i\sqrt{\xi^2-k_1^2})\geq 0,$ so the
$\xi$-integrand in~\eqref{eqn19.7} makes sense. \Bk To see that this change in the order of
integrations is justified we observe that the integrand in the first line is
estimated for $\xi \in\Gamma_{\nu}$ by
\begin{equation}\label{eqn203.41}
  \frac{M|e^{i\sqrt{k_1^2-\xi^2}(|x_2|+|y_2|-2d)}|\varphi_{\pm}(y_2)}
       {(1+|\xi|)^2|\sqrt{k_1^2-\xi^2}|(1+|y_2|)^{\frac
      32}},
\end{equation}
for a constant $M.$  We have used the estimates, for $\xi \in\Gamma_{\nu},$ 
\begin{equation}\label{eqn204.41}
  \begin{split}
    |\fA_0(\xi,-i\sqrt{k_1^2-\xi^2})|&\leq
    \frac{M|e^{2d\sqrt{\xi^2-k_1^2}}|}{1+|\xi|}\text{ (221) from Part I},\\
    |\fW(\xi,\sqrt{k_1^2-\xi^2})|&\geq  M(1+|\xi|)\text{\phantom{mm} (190) from Part I}.
  \end{split}
\end{equation}
As $|x_2|+|y_2|\geq 2d$ the fact that the expression in~\eqref{eqn203.41} is
integrable in $(\xi,y_2)$ justifies the equality in~\eqref{eqn19.7}. 

We give the details for estimating $u^{+}_{c0}(r\eta);$ the estimates for
$u^{-}_{c0}(r\eta)$ are essentially identical.  Where $|\xi|>k_1,$ the square
root satisfies $\sqrt{k_1^2-\xi^2}=i\sqrt{\xi^2-k_1^2},$ with $\sqrt{x}>0$ for
$x\in (0,\infty).$ There is a stationary phase at $\xi=k_1\eta_1;$ the
contributions from the rest of integral are rapidly decreasing. The
$\Re\sqrt{k_1^2-\xi^2}\geq 0$ throughout the domain of this integral, which
means that the singularity of $\htau_+(t)$ at $-k_1$ plays no role.

There is no difficultly as $\eta_1\to 0.$ To prove uniform estimates as
$\eta_1\to 1,$ we need to examine the portions of this integral near to $\pm
k_1.$ Choose a $0<\mu<k_1/2,$ so that $k_1+\mu<\xi_1-\nu,$ where $\xi_1$ is the
smallest guided mode frequency.  We also assume that $\delta>0$ is fixed so that
if $|\eta_2|<\delta,$ then $k_1\eta_1\in[k_1-\mu/2,k_1].$ Let
$\psi\in\cC^{\infty}_c((k_1-\mu,k_1+\mu))$ which equals $1$ in
$[k_1-\mu/2,k_1+\mu/2].$ Letting $t=\sqrt{k_1^2-\xi^2}$ where $\xi< k_1$ and
$\tau=\sqrt{\xi^2-k_1^2}$ where $\xi>k_1$ we see that
\begin{multline}\label{eqn16.2}
 u^+_{c00}(r\eta)= \frac{i}{2\pi}\int_{k_1-\mu}^{k_1+\mu}\frac{e^{ir[\sqrt{k_1^2-\xi^2}|\eta_2|+\xi\eta_1]}\fA_0(\xi,-i\sqrt{k_1^2-\xi^2})
  \htau_+(\sqrt{k_1^2-\xi^2})\psi(\xi)}
       {\fW(\xi,\sqrt{k_1^2-\xi^2})\sqrt{k_1^2-\xi^2}}d\xi\\=
        \frac{i}{2\pi}\int_{0}^{\sqrt{2k_1\mu-\mu^2}}\frac{e^{ir[t|\eta_2|+\sqrt{k_1^2-t^2}\eta_1]}\fA_0(\sqrt{k_1^2-t^2},-it)
  \htau_+(t)\psi(\sqrt{k_1^2-t^2})}
             {\fW(\sqrt{k_1^2-t^2},t)\sqrt{k_1^2-t^2}}dt+\\
              \frac{i}{2\pi}\int_{0}^{\sqrt{2k_1\mu+\mu^2}}\frac{e^{ir[i\tau|\eta_2|+\sqrt{k_1^2+\tau^2}\eta_1]}\fA_0(\sqrt{k_1^2+\tau^2},\tau)
  \htau_+(i\tau)\psi(\sqrt{k_1^2+\tau^2})}
       {\fW(\sqrt{k_1^2+\tau^2},i\tau)i\sqrt{k_1^2+\tau^2}}d\tau.   
\end{multline}
As before, the sum of two integrals on the right hand side are
simply the complex contour integral ``$t$''-integral over the contour
$$t\in
\Lambda_{\mu}=[i\sqrt{2k_1\mu+\mu^2},i0]\cup [0,\sqrt{2k_1\mu-\mu^2}],$$ shown
as the thick black ``L''s in Figure~\ref{fig6.203}. The stationary point, shown
as a black dot, now lies along the real axis where $t=k_1|\eta_2|.$

To prove uniform asymptotics we need to deform the contour, being careful to
keep the real part of the phase, $i(t|\eta_2|+\sqrt{k_1^2-t^2}\eta_1),$
non-positive. We use the change of variables $t=k_1\sin(\theta+x+iy),$ where $
(\eta_1,|\eta_2|)=(\cos\theta,\sin\theta),$ which, as noted earlier, differs from
the normalization used in the previous section.  Computing as in~\eqref{eqn101.206},
we see that the contour $\Lambda_{\mu}$ corresponds to
$[-\theta+i\phi_0,-\theta+i0]\cup[-\theta,\theta_0-\theta],$ for a
$\phi_0,\theta_0>0.$ As before, the phase is $k_1\cos(z)$ and therefore
\begin{equation}
  i\cos(x+iy)=i\cos x\cosh y+\sin x\sinh y.
\end{equation}
We see that the real part is non-positive for $-\pi\leq x\leq 0,$ and $0< y$ and
where-ever $y=0.$
The $t$ variable
\begin{equation}
  k_1\sin(\theta+x+iy)=k_1[\sin(\theta+x)\cosh(y)+i\cos(\theta+x)\sinh(y)]
\end{equation}
has non-negative imaginary part if
$x\in[-\frac{\pi}{2}-\theta,\frac{\pi}{2}-\theta],$ $0< y,$ or $y=0.$
In the $t$-variable these conditions corresponds to
\begin{equation}
Q_+=\{k_1\sin(\theta+x+iy):\:  x\in [-\frac{\pi}{2}-\theta,0],\quad 0<y\text{ or
  }y=0\}.
\end{equation}

We consider two deformations of $\Lambda_{\mu}.$ The first deformation,
$\Lambda^{\theta}_{\mu 1},$ replaces the corner of $\Lambda_{\mu}$ near $0$ with a smooth
interpolant between the $x$-axis and the $y$-axis, lying in $Q_+$ intersected
with the first quadrant. Examples are shown as the red curves in
Figure~\ref{fig6.203}. The green curves are the right boundaries of $Q_+.$ Using
Cauchy's theorem we see that the integral on the right hand side
of~\eqref{eqn16.2} can be replaced with
\begin{equation}
  u^+_{c00}(r\eta)= \frac{i}{2\pi}\int_{\Lambda^{\theta}_{\mu 1}}\frac{e^{ir[t|\eta_2|+\sqrt{k_1^2-t^2}\eta_1]}\fA_0(\sqrt{k_1^2-t^2},-it)
  \htau_+(t)\psi(\sqrt{k_1^2-t^2})}
             {\fW(\sqrt{k_1^2-t^2},t)\sqrt{k_1^2-t^2}}dt
\end{equation}
This is the integral of a smooth compactly supported function on a smooth arc,
which therefore has a complete asymptotic expansion arising from the stationary
phase at $k_1|\eta_2|=k_1\sin\theta,$
\begin{equation}
  u^+_{c00}(r\eta)=\frac{e^{ik_1 r}}{r^{\frac
      12}}\sum_{j=0}^{N}\frac{a^+_{0j}(\eta)}{r^j}+O\left(r^{-(N+\frac 32)}\right).
\end{equation}
The coefficients are smooth functions of $|\eta_2|,$ hence the best we can hope for is
that they extend smoothly as $\eta_2\to 0^{\pm},$  see Remark~\ref{rmk9.200}.

To show that the coefficients $\{a^+_{0j}(\eta)\}$ extend  to
$\eta_2\to 0^{\pm},$ and the error terms are
uniformly bounded as $\eta_2\to 0^{\pm},$ we use a second deformation,
$\Lambda^{\theta}_{\mu 2},$ given by
\begin{equation}
  \Lambda^{\theta}_{\mu 2}=\{k_1\sin(\theta g(s)+is):\: s\in [0,s_0]\}\cup [k_1\sin\theta,\sqrt{2k_1\mu-\mu^2}].
\end{equation}
Here $\sinh(s_0)=\sqrt{2\mu k_1+\mu^2}$ and $g\in\cC^{\infty}([0,s_0])$ is a monotone function satisfying
\begin{equation}
  g(s)=
  \begin{cases}
    &0\text{ for }s\in [\frac{3s_0}{4},s_0]\\
      &1\text{ for }s\in [0,\frac{s_0}{4}].
  \end{cases}
\end{equation}
Examples are shown in blue in Figure~\ref{fig6.203}.  The arcs
$\Lambda^{\theta}_{\mu 2}$ depend smoothly on $\theta\in [0,\epsilon]$ for an
$\epsilon>0.$ In the integral along $\Lambda^{\theta}_{\mu 2}$ the stationary
phase occurs at $k_1\sin\theta$ which is the common endpoint of the two smooth
arcs making up the contour $\Lambda^{\theta}_{\mu 2}.$ Applying
Lemma~\ref{lem5.202} it is clear that coefficients extend smoothly as
$\eta_2=0^{\pm}.$ Using the modified polar coordinates described in
Remark~\ref{rmk9.200} the implied constants in the error terms are easily seen
to be uniformly bounded as $\eta_2\to 0^{\pm}.$

We also need to consider what happens at the opposite end of the interval,
$[-(k_1+\mu),-k_1+\mu].$ It is not hard to see that this integral can be
represented as the contour integral
\begin{equation}
   -\frac{i}{2\pi}\int_{\Lambda_{\mu}}\frac{e^{ir[t|\eta_2|-\sqrt{k_1^2-t^2}\eta_1]}\fA_0(-\sqrt{k_1^2-t^2},-it)
  \htau_+(t)\psi(-\sqrt{k_1^2-t^2})}
             {\fW(-\sqrt{k_1^2-t^2},t)\sqrt{k_1^2-t^2}}dt.
\end{equation}
The phase has a stationary point at $-k_1|\eta_2|,$ which lies outside the domain
of the integration, but approaches it as $|\eta_2|\to 0.$ Using the
variables $t=k_1\sin(\theta+z),$ we see that the phase
\begin{multline}
  i(t|\eta_2|-\sqrt{k_1^2-t^2}\eta_1)=-ik_1\cos(2\theta+z)=\\
  -i\cos(2\theta+x)\cosh y-\sin(2\theta+x)\sinh y,
\end{multline}
which shows that the real  part is non-positive where
\begin{equation}
  x\in\left[-2\theta,\pi-2\theta\right],\quad 0\leq y,\text{ or }y=0.
\end{equation}
The $\Im t\geq 0$ if $ x\in\left[-2\theta,\frac{\pi}{2}-2\theta\right],\quad 0\leq y.$
We can therefore replace $\Lambda_{\mu}$ with a contour $\Lambda^{\theta_0}_{\mu
  1},$ for a fixed $\theta_0>0,$ along which $\Im t\geq 0.$ In light of this, it
is clear that, as $\eta_2\to 0^{\pm},$ the contribution from this interval is
uniformly $O(r^{-N}),$ for all $N>0.$ The remaining contributions from $\Gamma^+_{\nu}$ are easily shown to be
uniformly $O(r^{-N}),$ for any $N>0.$

As the double integral on the left hand side in~\eqref{eqn19.711} is not
absolutely convergent, a somewhat more involved argument is required to show
that
\begin{equation}\label{eqn19.711}
  \begin{split}
    u^{\pm}_{c1}(r\eta)=& \int\limits_{-\infty}^{\infty}\left[
    \frac{1}{2\pi}\int\limits_{\Gamma_{\nu}^+}\frac{\xi^2e^{i\left[\sqrt{k_1^2-\xi^2}(|x_2|\pm y_2)+\xi
          x_1\right]}   \fA_0(\xi,-i\sqrt{k_1^2-\xi^2})d\xi }
  {\fW(\xi,\sqrt{k_1^2-\xi^2})\sqrt{k_1^2-\xi^2}}\right]\varphi_{\pm}(y_2)\sigma(y_2)dy_2\\
   =& \frac{1}{2\pi}\int\limits_{\Gamma_{\nu}^+}\frac{\xi^2e^{ir[\sqrt{k_1^2-\xi^2}|\eta_2|+\xi\eta_1]}\fA_0(\xi,-i\sqrt{k_1^2-\xi^2})
  \hsigma_{\pm}(\sqrt{k_1^2-\xi^2})}
   {\fW(\xi,\sqrt{k_1^2-\xi^2})\sqrt{k_1^2-\xi^2}}d\xi.
   \end{split}
\end{equation}
\Rd As before, for small enough $\nu>0,$ $\Im\sqrt{k_1^2-\xi^2}\geq 0$ on
$\Gamma_{\nu}^+,$ so the integrand in~\eqref{eqn19.711} make sense.  \Bk The
$\xi$-integral on the first line of~\eqref{eqn19.711} is
$O((|x_2|+|y_2|)^{-\frac 32}),$ and therefore
\begin{equation}
  \begin{split}
    u^{\pm}_{c1}(r\eta)=&\lim_{R\to\infty} \int\limits_{-R}^{R}\left[
    \frac{1}{2\pi}\int\limits_{\Gamma_{\nu}^+}\frac{\xi^2e^{i\left[\sqrt{k_1^2-\xi^2}(|x_2|\pm y_2)+\xi
          x_1\right]}   \fA_0(\xi,-i\sqrt{k_1^2-\xi^2})d\xi }
         {\fW(\xi,\sqrt{k_1^2-\xi^2})\sqrt{k_1^2-\xi^2}}\right]\varphi_{\pm}(y_2)\sigma(y_2)dy_2\\
    =&\lim_{R\to\infty} 
    \frac{1}{2\pi}\int\limits_{\Gamma_{\nu}^+}\left[\int\limits_{-R}^{R}e^{i\sqrt{k_1^2-\xi^2}|y_2|}\varphi_{\pm}(y_2)\sigma(y_2)dy_2\right]\times\\
    &\phantom{mmmmmmmmmmm}\frac{\xi^2e^{i(\sqrt{k_1^2-\xi^2}|x_2|+\xi  x_1)}   \fA_0(\xi,-i\sqrt{k_1^2-\xi^2})d\xi }
         {\fW(\xi,\sqrt{k_1^2-\xi^2})\sqrt{k_1^2-\xi^2}}
  \end{split}
\end{equation}
To handle the limit we first observe that the only term in the asymptotic
expansion of $\sigma(y_2)$ for which the double integral in~\eqref{eqn19.711} is not absolutely
convergent is the $y_2^{-\frac 12}$-term. We treat the $+$-case, the other case is
essentially identical. It suffices to consider
\begin{multline}\label{eqn218.42}
  \int\limits_{d}^{R}e^{i\sqrt{k_1^2-\xi^2}y_2}\varphi_{+}(y_2)\frac{e^{ik_1y_2}dy_2}{\sqrt{y_2}}=\\
  \hsigma_{+}(\sqrt{k_1^2-\xi^2})-
  \lim_{R_1\to\infty} \int\limits_{R}^{R_1}e^{i(\sqrt{k_1^2-\xi^2}+k_1)y_2}\frac{dy_2}{\sqrt{y_2}}+\lot
\end{multline}
Here $\lot$ are terms for which is clear that we can interchange the order of
integrations in~\eqref{eqn19.711}.

An integration by parts shows that the limit of the integral on the right hand
side of~\eqref{eqn218.42}, as $R_1\to\infty,$ is bounded by
\begin{equation}
 \frac{2|e^{i\sqrt{k_1^2-\xi^2}R}|}{\sqrt{R}|\sqrt{k_1^2-\xi^2}+k_1|}.
\end{equation}
Using this estimate and those  in~\eqref{eqn204.41} we see that the error term
\begin{multline}
  \left|\,\int\limits_{\Gamma_{\nu}^+}\left[\int\limits_{R}^{\infty}e^{i(\sqrt{k_1^2-\xi^2}+k_1)|y_2|}\frac{dy_2}{\sqrt{y_2}}\right]
   \frac{\xi^2e^{i(\sqrt{k_1^2-\xi^2}|x_2|+\xi  x_1)}   \fA_0(\xi,-i\sqrt{k_1^2-\xi^2})d\xi }
        {\fW(\xi,\sqrt{k_1^2-\xi^2})\sqrt{k_1^2-\xi^2}}\right|\\
        \leq
        \frac{C}{\sqrt{R}}\int\limits_{\Gamma_{\nu}^+}\frac{\left|e^{i\sqrt{k_1^2-\xi^2}(R+|x_2|-2d)}\right||\xi|^2|d\xi|}
        {(1+|\xi|)^2|\sqrt{k_1^2-\xi^2}||\sqrt{k_1^2-\xi^2}|+k_1|},
\end{multline}
from which it is clear that the limit as $R\to\infty$ is zero, proving that
the identity in~\eqref{eqn19.711} is correct.

The functions $\hsigma_{\pm}(\xi)$ have the same analytic extension properties as
$\htau_{\pm}(\xi),$ and are also singular at $\xi=-k_1.$ The only other
difference in this term is the factor of $\xi^2,$ which has no significant
effects. Thus we easily show that, for $\eta_2\neq 0,$ $u^{\pm}_{1c}(r\eta)$ have
standard asymptotic expansions, which are uniformly valid as $\eta_2\to 0^{\pm}.$ Hence we have
\begin{equation}\label{eqn32.6}
    u^{\pm}_{ck}(r\eta)=\frac{e^{ik_1 r}}{r^{\frac
        12}}\sum_{j=0}^{N}\frac{a^{\pm}_{kj}(\eta)}{r^j}+O\left(r^{-(N+\frac
      32)}\right),\text{ for }k=0,1.
\end{equation}
To get uniform error estimates as $\eta_2\to 0^{\pm}$ we need to use the
modified polar coordinates, $(r_{\pm},\eta^{\pm})$ as described in
Remark~\ref{rmk9.200}.

\subsection{Small $|y_2|$ contributions}
     
To estimate the contributions to the asymptotics from $y_2$ in bounded
intervals, we consider the behavior of the perturbation kernels
$w_{0+}^c(r\eta;0,y_2),$  $\pa_{x_1}w_{0+}^c(r\eta;0,y_2)$ as $r\to\infty,$
with $y_2$ bounded. As noted in Remark~\ref{rmk9.200}, we should replace standard
polar coordinates with $(r_{\pm},\theta_{\pm}),$ where $\pm x_2>d,$ to obtain
uniform error terms as $\theta_{\pm}\to 0^{\pm},$ but to avoid more complicated
notation we continue to use standard polar coordinates.

The principal contribution again comes from the stationary
phase that now occurs where $\xi=\eta_1 k_1.$ We first assume that $\eta_2\neq 0.$
Thus, along the rays $x=r\eta,$ the second component $|x_2|$ is eventually
larger than $d.$ For $x_2,y_2>d,$ $k=0,1,2,$ we start with the integral 
\begin{equation}\label{eqn182.62}
  w_{0+}^{[k]}(r\eta;0,y_2)=\frac{i}{2\pi}\int_{\Gamma_{\nu}^+}\xi^k
  \frac{e^{i[\sqrt{k_1^2-\xi^2}(r\eta_2+y_2)+\xi
      r\eta_1]}\fA_0(\xi,-i\sqrt{k_1^2-\xi^2})}
  {\fW(\xi,\sqrt{k_1^2-\xi^2})\sqrt{k_1^2-\xi^2}}d\xi.
\end{equation}
As noted above, the stationary phase, as $r\to\infty,$  occurs where
\begin{equation}
  \frac{-\xi}{\sqrt{k_1^2-\xi^2}}\eta_2+\eta_1=0,
\end{equation}
which implies that
\begin{equation}
  \xi=\eta_1 k_1.
\end{equation}
Using the analysis from the previous section we can show that this integral has
asymptotic expansion
\begin{equation}\label{eqn182.622}
  w_{0+}^{[k]}(r\eta;0,y_2)=\frac{e^{ik_1r}}{\sqrt{r}}\left[\sum_{l=0}^{N}\frac{a_{kl}(\eta;0,y_2)}{r^l}+O(r^{-(N+1)})\right],
\end{equation}
with coefficients that are smooth functions of $(\eta, y_2),$ uniformly as $\eta_1\to 0^+,1^-.$
Similar considerations apply when  $x_2<0,$ we need to replace $\eta_2$
in~\eqref{eqn182.622} with $-\eta_2,$ which gives the same asymptotic
result. If $y_2<0$ then we replace $y_2$ with $-y_2.$ Altogether we
get an asymptotic expansion with the same form.

The analysis where $|y_2|\leq d$ and $x_2>d,$ begins with
\begin{equation}\label{eqn182.623}
  w_{0+}^{[k]}(r\eta;0,y_2)=\frac{i}{2\pi}\int_{\Gamma_{\nu}^+}\xi^k
  \frac{e^{i[\sqrt{k_1^2-\xi^2}r\eta_2+\xi
      r\eta_1]}\fC(\xi,\sqrt{k_1^2-\xi^2};y_2)}
  {\fW(\xi,\sqrt{k_1^2-\xi^2})\sqrt{k_1^2-\xi^2}}d\xi,
\end{equation}
where $\fC(\xi,\upsilon;y_2)$ is defined in equation (250) of~\cite{EpWG2023_1}, and
is an entire function of $(\xi,\upsilon).$ Once again there is a stationary phase at
$\xi=k_1\eta_1,$ and the argument used above, along with the estimates for $\fC$
from~\cite{EpWG2023_1} apply to show that
\begin{equation}\label{eqn264.666}
    w_{0+}^{[k]}(r\eta;0,y_2)=\frac{e^{ik_1r}}{\sqrt{r}}\left[\sum_{l=0}^{N}\frac{a_{kl}(\eta;0,y_2)}{r^l}+O(r^{-(N+1)})\right],
\end{equation}
with coefficients that are smooth functions of $(\eta_1,y_2)\in
[0,1]\times[-d,d].$ As these estimates are uniform in $y_2$ over bounded
intervals, they show that, for $\eta_2\neq 0,$
\begin{equation}\label{eqn49.6}
  u^0_{cj}(r\eta)=\frac{e^{ik_1r}}{\sqrt{r}}\left[\sum_{k=0}^N\frac{a_{jk}(\eta)}{r^k}+O(r^{-(N+1)})\right],
\end{equation}
uniformly as $\eta_2\to\pm 1.$

\Rd Similarly to the free space contributions, the asymptotic expansions for
$u^{\pm}_{ck}(r\eta)$ can be differentiated with respect to $r$ to obtain
asymptotic expansions for $\pa_r^ju^{\pm}_{ck}(r\eta).$ Once again these
functions satisfy the free space Helmholtz equation, so it is again just a
matter of checking this statement for $j=1.$ The right hand sides of the
formul{\ae}~\eqref{eqn19.7} and~\eqref{eqn19.711} can be differentiated with
respect to $r$ introducing a factor of $i(\sqrt{k_1^2-\xi^2}|\eta_2|+\xi\eta_1)$
into the numerators of these integrands. As before this changes nothing about
the analysis of either the principal terms or the error terms in these
expansions.  The formul{\ae},~\eqref{eqn182.62} and~\eqref{eqn182.623}, for
$w^{[k]}_{0+}(r\eta;0,y_2)$ can be differentiated with respect to $r$
introducing a factor of $i(\sqrt{k_1^2-\xi^2}\eta_2+\xi\eta_1)$ in the numerator
of the integrand. Once again, this has no effect on the analysis. Since these
functions have asymptotic expansions, the theorem of Coddington and Levinson
applies as before to show that these can only be obtained by differentiating the
expansions for $w^{[k]}_{0+}(r\eta;0,y_2),$ leading to the same conclusion for
$u^0_{cj}(r\eta).$ \Bk

\begin{remark}\label{rmk11.46}
 As noted above, in order to get uniform error estimates in the asymptotic
 formul{\ae} they should be stated in terms polar coordinates
 $(r_{\pm},\theta_{\pm})$ centered on $(0,\pm d).$ With these choices the
 functions $u^{\pm}_{cj}((\pm d,0)+r_{\pm}\eta^{\pm})$ are given by a single
 formula (essentially~\eqref{eqn19.7}) for all $r_{\pm}\geq 0;$ the asymptotic
 formul{\ae} are of the same form, assuming that $\pm\eta^{\pm}_2\geq 0,$
   \begin{equation}\label{eqn7.7.1}
     \begin{split}
       u^{l,r}_{c0}((\pm d,0)+r_{\pm}\eta^{\pm})=\frac{e^{ik_1r_{\pm}}}{r_{\pm}^{\frac
           12}}\sum_{k=0}^M\frac{\Ha^{l,r}_k(\eta^{\pm})}{r_{\pm}^{
           k}}+O(r_{\pm}^{-(M+1)}),\\
         u^{l,r}_{c1}((\pm d,0)+r_{\pm}\eta^{\pm})=\frac{e^{ik_1r_{\pm}}}{r_{\pm}^{\frac
           12}}\sum_{k=0}^M\frac{\hb^{l,r}_k(\eta^{\pm})}{r_{\pm}^{
           k}}+O(r_{\pm}^{-(M+1)});
     \end{split}
   \end{equation}
   the coefficients are smooth where $\eta_1^{\pm}\in [0,1]$ in the $r$-case
   and $\eta_1^{\pm}\in [-1,0]$ in the $l$-case; the errors 
   are uniformly bounded  as $\eta^{\pm}_2\to 0^{\pm}.$
\end{remark}

\subsection{Estimates with $x_1\to\infty,$ and $|x_2|$ bounded}
We now prove estimates with $|x_2|$ remaining bounded as $x_1\to\infty.$ We
begin with the contribution of the unbounded part of the integral. We need to
distinguish the cases $|x_2|\leq d$ and $|x_2|>d.$ If $|x_2|>d,$ then the
unbounded part of the integral takes the form
\begin{equation}\label{eqn268.666}
  u^{\pm}_{c0}(x_1,x_2)
  =\frac{1}{2\pi}\int_{\Gamma_{\nu}^+}\frac{e^{i[\sqrt{k_1^2-\xi^2}|x_2|+\xi x_1]}\fA_0(\xi,-i\sqrt{k_1^2-\xi^2})
  \htau_{\pm}(\sqrt{k_1^2-\xi^2})}
  {\fW(\xi,\sqrt{k_1^2-\xi^2})\sqrt{k_1^2-\xi^2}}d\xi.
\end{equation}
This integrand does not appear to have a stationary phase as $x_1\to\infty,$ but is singular at
$\xi=\pm k_1.$ Focusing, as before, on $[k_1-\mu,k_1+\mu],$ the contribution is
given by the contour integral:
\begin{equation}
  \frac{i}{2\pi}\int_{\Lambda_{\mu}}\frac{e^{i[t|x_2|+\sqrt{k_1^2-t^2}x_1]}\fA_0(\sqrt{k_1^2-t^2},-it)
  \htau_{\pm}(t)\psi(\sqrt{k_1^2-t^2})}
             {\fW(\sqrt{k_1^2-t^2},t)\sqrt{k_1^2-t^2}}dt,
\end{equation}
which does have a stationary point, as $x_1\to\infty,$ at $t=0.$  We can deform
$\Lambda_{\mu}$ to a fixed contour
like the red curve in Figure~\ref{fig7.203}, which lies in the upper half
plane, on which $\Re i\sqrt{k_1^2-t^2}\leq 0.$ This implies that these integrals have
asymptotic expansions of the form
\begin{equation}
  \frac{e^{ik_1x_1}}{\sqrt{x_1}}\sum_{j=0}^{\infty}\frac{a_{j}(x_2)}{x_1^{j}},
\end{equation}
with the coefficients smooth functions for $|x_2|\geq d.$ Note that $a_0(x_2)$ is constant
 where $\pm x_2>d.$ 
 
If we examine the contribution from $[-(k_1+\mu),-k_1+\mu]$ in the same way we
get
\begin{equation}
   \frac{i}{2\pi}\int_{\Lambda_{\mu}}\frac{e^{i[t|x_2|-\sqrt{k_1^2-t^2}x_1]}\fA_0(-\sqrt{k_1^2-t^2},-it)
  \htau_+(t)\psi(\sqrt{k_1^2-t^2})}
             {\fW(-\sqrt{k_1^2-t^2},t)\sqrt{k_1^2-t^2}}dt.
\end{equation}
If $\theta\in [0,\frac{\pi}{2}],$ then 
\begin{equation}
  -\Re[i\sqrt{k_1^2-r^2e^{2i\theta}}]\leq 0
\end{equation}
as well.  This shows that we can replace $\Lambda_{\mu}$ with
$\Lambda^{\theta_0}_{\mu 1},$ for a fixed $\theta_0>0,$ and thereby avoid the
stationary point at $t=0.$ Hence this part of the integral is $O(x_1^{-N})$ for
all $N,$ uniformly as $\pm x_2\to d^{+}.$ The remaining parts of the integral over $\Gamma_{\nu}^+$ are similarly
rapidly decreasing and therefore for $|x_2|>d,$ we see that, for any $N>0,$
\begin{equation}
  u^{\pm}_{c0}(x_1,x_2)=\frac{e^{ik_1
      x_1}}{\sqrt{x_1}}\sum_{j=0}^{N}\frac{a^{\pm}_{0j}(x_2)}{x_1^{j}}+O\left(x_1^{-\frac{N+2}{2}}\right).
\end{equation}

We now consider what happens if $|x_2|<d.$ In this case the integral takes a
somewhat different form:
\begin{equation}\label{eqn274.666}
  u^{\pm}_{c0}(x_1,x_2)
  =\frac{1}{2\pi}\int_{\Gamma_{\nu}^+}\frac{e^{i\xi  x_1+2id\sqrt{k_1^2-\xi^2}}
    \fB(\xi,\sqrt{k_1^2-\xi^2};x_2) \htau_{\pm}(\sqrt{k_1^2-\xi^2})\psi(\xi)}
  {\fW(\xi,\sqrt{k_1^2-\xi^2})\sqrt{k_1^2-\xi^2}}d\xi,
\end{equation}
where $\fB(\xi,\upsilon;x_2)$ is defined in equation (234) of~\cite{EpWG2023_1}, and is
an entire function of $(\xi,\upsilon).$  As
before the only possible stationary phase contributions come from neighborhoods
of $\pm k_1,$ with the remainder of the integral rapidly decreasing as
$x_1\to\infty.$ The contribution near to $\xi=
k_1$ is given by the contour integral
\begin{equation}
  \frac{i}{2\pi}\int_{\Lambda_{\mu}}\frac{e^{i\sqrt{k_1^2-t^2}x_1+2idt}\fB(t,\sqrt{k_1^2-t^2};x_2)
  \htau_{\pm}(t)\psi(\sqrt{k_1^2-t^2})}
             {\fW(\sqrt{k_1^2-t^2},t)\sqrt{k_1^2-t^2}}dt.
\end{equation}
We can show that this integral has a stationary phase at $t=0$ and therefore an
asymptotic expansion of the form
\begin{equation}
 \frac{e^{ik_1x_1}}{\sqrt{x_1}}\sum_{j=0}^{\infty}\frac{a_j(x_2)}{x_1^{j}},
\end{equation}
with the coefficients depending smoothly on $x_2.$ The contribution from near to
$-k_1$ takes the same form with $\sqrt{k_1^2-t^2}$ replaced with
$-\sqrt{k_1^2-t^2}.$ As before, we can deform the contour away from the
stationary point, and conclude that this integral is $O(x_1^{-N})$ for any
$N>0.$ Altogether this shows that with $|x_2|$ bounded, the functions
$u^{\pm}_{c0}(x_1,x_2)$ has asymptotic expansions
\begin{equation}\label{eqn42.666}
  u^{\pm}_{c0}(x_1,x_2)=\frac{e^{ik_1x_1}}{\sqrt{x_1}}\sum_{j=0}^{N}\frac{a^{\pm}_{0j}(x_2)}{x_1^j}+
  O\left(x_1^{-(N+\frac{3}{2})}\right),
\end{equation}
with uniformly bounded errors and coefficients that are smooth on $
(-\infty,-d]\cup[-d,d]\cup[d,\infty),$ with  finite differentiability at
    $x_2=\pm d.$

Similar arguments apply to show that
\begin{equation}\label{eqn42.6}
  u^{\pm}_{c1}(x_1,x_2)=\frac{e^{ik_1x_1}}{\sqrt{x_1}}\sum_{j=0}^{N}\frac{a^{\pm}_{1j}(x_2)}{x_1^j}+
  O\left(x_1^{-(N+\frac{3}{2})}\right),
\end{equation}
for any $N>0.$ Again with uniformly bounded errors and coefficients that are
smooth on $ (-\infty,-d]\cup[-d,d]\cup[d,\infty),$ with  finite
    differentiability at $x_2=\pm d.$
    
   %%  We are left to consider the contributions
%%     to $u_{c0}, u_{c1}$ from $y_2$ bounded, that is $u^0_{c0}, u^0_{c1}.$
    
To estimate the contribution of the bounded part of the integral, we need to
consider the behavior of kernel function $w^{[k]}_{0+}(x_1,x_2;0,y_2)$ as $x_1\to\infty,$ with
$y_2, x_2$ remaining bounded. The stationary phase now occurs where $\xi=k_1;$ if
$|x_2|>d,$ then the integrand decays exponentially as $|\xi|\to\infty,$ and the
analysis from the previous section shows that
\begin{equation}
  w_{0+}^{[k]}(x_1,x_2;0,y_2)=\frac{e^{ik_1x_1}}{\sqrt{x_1}}\left[\sum_{l=0}^N\frac{b_{kl}(x_2,y_2)}{x_1^l}+O(x_1^{-(N+1)})\right]
\end{equation}
with coefficients smooth away from $x_2,y_2=\pm d,$ where they are finitely
differentiable. Once again $b_{j0}$ does not depend on $x_2>d.$

We now assume that both $|x_2|<d,$ and $|y_2|<d,$ with
$x_1\to\infty.$ In this case
\begin{equation}\label{eqn280.666}
  w_{0+}^{[k]}(x_1,x_2;0,y_2)=\int_{\Gamma_{\nu}^+}\frac{\xi^k e^{i\xi
      x_1}\fD(\xi,\sqrt{\xi^2-k_1^2};x_2,y_2)}
  {\fW(\xi,\sqrt{k_1^2-\xi^2})\sqrt{k_1^2-\xi^2}}d\xi,
\end{equation}
with $\fD(\xi,\upsilon;x_2,y_2)$ defined in equation (266) of~\cite{EpWG2023_1}, and is
again an entire function of $(\xi,\upsilon).$ The singularities at $\pm k_1$
produce, after changing variables, stationary phases at $\xi=\pm k_1.$ As before
we can deform the contour and show that the contribution from $\xi=-k_1$ is
rapidly decreasing. The function $\fD$ involves $\cosh x\sqrt{\xi^2-k_2^2}$
and$\frac{\sinh x\sqrt{\xi^2-k_2^2}}{\sqrt{\xi^2-k_2^2}}$ both of which are
entire functions. For this reason there are no stationary phase contributions
from $\pm k_2.$

We let
$\psi\in\cC^{\infty}_c((-(k_2+1),k_2+1))$ be an even, monotone, non-negative function equal to
  $1$ in the interval $[-(k_2+\frac 12),k_2+\frac 12]$ and set
\begin{equation}
  w_{00+}^{[k]}(x_1,x_2;0,y_2)=\int_{\Gamma_{\nu}^+}\frac{\xi^k e^{i\xi
      x_1}\fD(\xi,-i\sqrt{k_1^2-\xi^2};x_2,y_2)\psi(\xi)d\xi}
  {\fW(\xi,\sqrt{k_1^2-\xi^2})\sqrt{k_1^2-\xi^2}}.
\end{equation}
This term includes the stationary phases. We need to consider the
contributions from near to $\pm k_1.$ For that purpose let
$\varphi\in\cC^{\infty}_c(k_1-\epsilon,k_1+\epsilon),$ equal one in a
small neighborhood of $k_1,$ with
$0<\epsilon<\min\{(\xi_1^{r}-k_1)/2,k_1/2\}.$ Let
$t=\pm\sqrt{k_1^2-\xi^2},$ up to a rapidly decreasing error
$w_{00+}^{[k]}(x_1,x_2;0,y_2),$ is a sum of the integrals
\begin{multline}\label{eqn231.55}
  I_{\pm}(x_1,x_2;y_2)=\\
  \pm\int_{\Lambda_{\mu}}\frac{[\pm(k_1^2-t^2)]^{\frac k2} e^{\pm ix_1\sqrt{k_1^2-t^2}}
    \fD(\pm\sqrt{k_1^2-t^2},-it;x_2,y_2)\varphi(\sqrt{k_1^2-t^2})dt}
  {\fW(\pm\sqrt{k_1^2-t^2},t)\sqrt{k_1^2-t^2}},
\end{multline}
which each have a stationary  phase at $t=0,$ the intersection of the two segments
that make up $\Lambda_{\mu}.$

We first consider $I_-.$ In this case we deform $\Lambda_{\mu}$ by
replacing a neighborhood of the corner at 0 with a smooth curve  lying
in the first quadrant, like the
red curve in Figure~\ref{fig6.203}[a]. This curve is parameterized by
$\rho(s)e^{i\theta(s)},$ with $\theta(s)\in [0,\pi/2].$ Along this
curve we see that $\Im \rho^2(s)e^{2i\theta(s)}\geq 0,$ and therefore
\begin{equation}
  \Re -i\sqrt{k_1^2-\rho^2(s)e^{2i\theta(s)}}\leq 0.
\end{equation}
As the deformed curve avoids the stationary point at 0, we conclude
that, for all $N,$ $I_-(x_1,x_2;y_2)=O(x_1^{-N}).$

To analyze $I_+$ we deform $\Lambda_{\mu}$ by replacing a line segment,
$[0,\delta],$ with a smooth curve lying in the fourth quadrant meeting the $x$-
and $y$-axes smoothly, see the red curve in Figure~\ref{fig9.55}. The deformation has a parameterization
$\rho(s)e^{i\theta(s)}$ with $-\frac{\pi}{2}\leq \theta(s)\leq 0,$ so that
\begin{equation}
  \Re i\sqrt{k_1^2-\rho^2(s)e^{2i\theta(s)}}\leq 0.
\end{equation}
 \begin{figure}[h]
  \centering
     \includegraphics[height= 8cm]{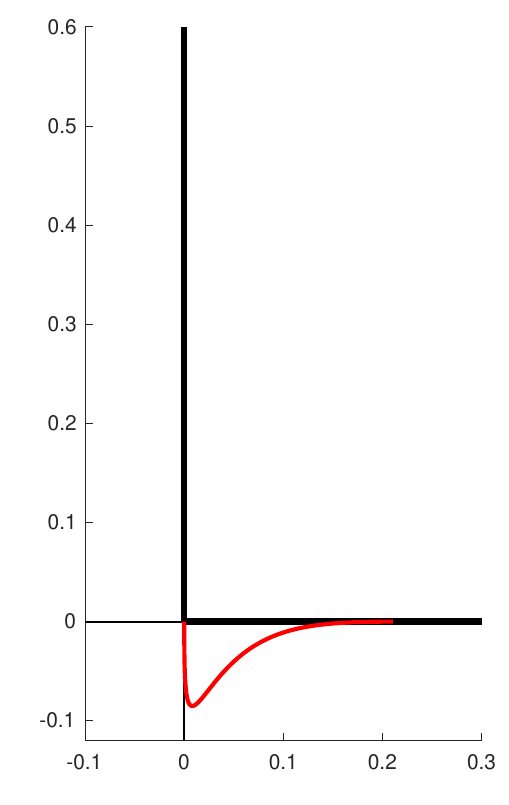}
    \caption{Contour deformation to compute the integral $I_+$ in~\eqref{eqn231.55}}
   \label{fig9.55}
 \end{figure}
The critical point at $t=0$ is an interior point of the
deformed  contour, which shows that we have the  asymptotic
expansions
\begin{equation}
  w_{00+}^{[k]}(x_1,x_2;0,y_2)=\frac{e^{ik_1
      x_1}}{\sqrt{x_1}}\left[\sum_{l=0}^{N}\frac{f_{kl}(x_2;y_2)}{x_1^{l}}\right]+O\left(x_1^{-(N+\frac
    32)}\right)\text{ for any }N,
\end{equation}
where $\{f_{jk}(x_2;y_2)\},$ are continuous functions on
$[-d,d]\times[-d,d],$ which are smooth away from $x_2,y_2=\pm d.$

What requires
further effort is to show that the ``error term''
\begin{equation}
  w_{01+}^{[k]}(x_1,x_2;y_2)=\int_{\Gamma_{\nu}^+}\frac{\xi^k e^{i\xi
      x_1}\fD(\xi,\sqrt{\xi^2-k_1^2};x_2,y_2)(1-\psi(\xi))}
  {\fW(\xi,\sqrt{k_1^2-\xi^2})\sqrt{k_1^2-\xi^2}}d\xi
\end{equation}
is rapidly decreasing as $x_1\to\infty.$ This is obvious so long as $x_2\neq
y_2,$ as
\begin{equation}
  |\fD(\xi,\sqrt{\xi^2-k_1^2};x_2,y_2)|\leq Me^{-\sqrt{\xi^2-k_1^2}|x_2-y_2|}\left[\frac{1}{|1+|\xi|}+|x_2-y_2|\right],
\end{equation}
which is proved in~\cite{EpWG2023_1}.  To prove the rapid decrease where $x_2=y_2$
requires accounting for 
the oscillations from the $e^{i\xi x_1}$-term.

\begin{remark}
  In the remainder of the section we consider a variety of functions that satisfy
  symbolic estimates. We say that a smooth function, $s_l(\xi)$ defined for
  $|\xi|$ sufficiently large, is a ``symbol of order $l$'' if there
  are constants so that
  \begin{equation}
    \limsup_{|\xi|\to\infty}(1+|\xi|)^{m-l}|\pa_{\xi}^ms_l(\xi)|\leq C_m<\infty.
  \end{equation}
  In most cases these functions will depend smoothly on additional parameters,
  and the constants, $\{C_m\},$ in these estimates depend uniformly on these
  parameters. As we never differentiate with respect to them, we often suppress
  the dependence on these parameters.
\end{remark}

The function $\fD(\xi,\sqrt{\xi^2-k_1^2};x_2,y_2)$ can be written as a sum of
terms of the form
\begin{equation}
  e^{-\sqrt{\xi^2-k_i^2}h(x_2,y_2)}s_l(x_2,y_2;\xi),
\end{equation}
where $i=1$ or $2,$ and  $h(x_2,y_2)$ is non-negative, and strictly positive if
$x_2\neq y_2.$ The functions $s_l(x_2,y_2;\xi)$ are  symbols of order
$l\in\{0,-1,-2,-3,-4\}.$ Note, for example that
$e^{(\sqrt{\xi^2-k_2^2}-\sqrt{\xi^2-k_1})d}$ equals $1$ plus a symbol of order
$-1.$ The remainder of the integrand
$$\frac{\xi^j (1-\psi(\xi))}{\fW(\xi,\sqrt{k_1^2-\xi^2})\sqrt{k_1^2-\xi^2}},$$
is a symbol of order $j-2.$

\begin{lemma}\label{lem6.666}
Let $s_l$ be a symbol of order $l\in\bbZ.$   An integral of the form
\begin{equation}
  \int_{-\infty}^{\infty}e^{ix_1\xi}e^{-\lambda\sqrt{\xi^2-k_i^2}} s_{l}(\xi)(1-\psi(\xi))d\xi,
\end{equation}
is uniformly rapidly decreasing in $x_1$ as $\lambda\to 0^+.$ 
\end{lemma}
\begin{proof}
We show this for any $l\in\bbZ$ by
integrating by parts in the most obvious way: for $\lambda>0$ and $m\in\bbN,$ we have
\begin{multline}
  \int_{-\infty}^{\infty}e^{ix_1\xi}e^{-\lambda\sqrt{\xi^2-k_i^2}}
  s_{l}(\xi)(1-\psi(\xi))d\xi=\\
  \frac{1}{(-ix_1)^m}
   \int_{-\infty}^{\infty}e^{ix_1\xi}\pa_{\xi}^m\left[e^{-\lambda\sqrt{\xi^2-k_i^2}}
  s_{l}(\xi)(1-\psi(\xi))\right]d\xi
\end{multline}
If a derivative falls on $1-\psi(\xi),$ then the term is obviously rapidly
decreasing, hence we need to consider:
\begin{equation}\label{eqn59.5}
  \pa_{\xi}^m\left[e^{-\lambda\sqrt{\xi^2-k_i^2}}
    s_{l}(\xi)\right]=\sum_{p=0}^m\left(\begin{matrix}m\\p\end{matrix}\right)\pa_{\xi}^{m-p}s_l(\xi)
    \cdot \pa_{\xi}^{p}e^{-\lambda\sqrt{\xi^2-k_i^2}}.
\end{equation}
We let $s_{l+p-m}=\pa_{\xi}^{m-p}s_l(\xi),$ denote a symbol of order $l+p-m.$

We use the following lemma, which is proved using a simple induction argument
and the fact that
\begin{equation}
  \pa_{\xi}\sqrt{\xi^2-k_i^2}=\frac{\xi}{\sqrt{\xi^2-k_i^2}}
\end{equation}
is a  symbol of order $0$ in $\{\xi:\:|\xi|>k_i+\epsilon\},$ for any $\epsilon>0.$
\begin{lemma}\label{lem7.777}
  For $m\in\bbN$ we have
  \begin{equation}
    \pa_{\xi}^me^{-\lambda\sqrt{\xi^2-k_i^2}} = e^{-\lambda\sqrt{\xi^2-k_i^2}}
    \times \sum_{j=1}^m\lambda^j s_{j-m}(\xi),
  \end{equation}
  where $s_{j-m}$ is a symbol of order $j-m$ in
  $\{\xi:\:|\xi|>k_i+\frac 12\}.$
\end{lemma}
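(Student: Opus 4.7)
The plan is to proceed by induction on $m$, using three standard facts from symbolic calculus on the region $\{|\xi|>k_i+\tfrac12\}$: (a) the function $\xi/\sqrt{\xi^2-k_i^2}$ is a symbol of order $0$ (the apparent order $1$ from the numerator is cancelled by the denominator, which is bounded away from $0$ and behaves symbolically like $|\xi|$); (b) the derivative of a symbol of order $l$ is a symbol of order $l-1$; and (c) the product of symbols of orders $l$ and $l'$ is a symbol of order $l+l'$, while sums of symbols of the same order form a vector space.

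For the base case $m=1$, direct differentiation gives
\begin{equation*}
\partial_\xi e^{-\lambda\sqrt{\xi^2-k_i^2}} \;=\; -\lambda\,\frac{\xi}{\sqrt{\xi^2-k_i^2}}\,e^{-\lambda\sqrt{\xi^2-k_i^2}} \;=\; \lambda\, s_0(\xi)\, e^{-\lambda\sqrt{\xi^2-k_i^2}},
\end{equation*}
which is the required form with the single term $j=1$ and $s_{1-1}=s_0=-\xi/\sqrt{\xi^2-k_i^2}$.

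For the inductive step, assume the formula holds for $m$. Applying $\partial_\xi$ and using the product rule gives
\begin{equation*}
\partial_\xi^{m+1}e^{-\lambda\sqrt{\xi^2-k_i^2}} \;=\; e^{-\lambda\sqrt{\xi^2-k_i^2}}\!\left[\,-\lambda\frac{\xi}{\sqrt{\xi^2-k_i^2}}\sum_{j=1}^m\lambda^j s_{j-m}(\xi) \,+\, \sum_{j=1}^m \lambda^j \partial_\xi s_{j-m}(\xi)\right].
\end{equation*}
By (a) and (c), the first bracketed sum contributes $\lambda^{j+1}$ times a symbol of order $0+(j-m)=j-m$; after reindexing $j\mapsto j-1$ this becomes $\sum_{j=2}^{m+1}\lambda^j \tilde s_{j-(m+1)}(\xi)$. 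By (b), the second sum consists of $\lambda^j$ times a symbol of order $(j-m)-1=j-(m+1)$, i.e.\ $\sum_{j=1}^m\lambda^j s_{j-(m+1)}(\xi)$. Adding these two contributions and grouping like powers of $\lambda$ (using the vector-space property of symbols of a fixed order) produces a single sum $\sum_{j=1}^{m+1}\lambda^j s_{j-(m+1)}(\xi)$ of the required form, completing the induction.

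The argument is an essentially routine induction; there is no real obstacle. The only point that requires attention is the bookkeeping of symbol orders when combining the two contributions, which works out cleanly precisely because the multiplier $\xi/\sqrt{\xi^2-k_i^2}$ is of order $0$ on the region under consideration, so that multiplication by $-\lambda$ times this factor trades one power of $\lambda$ for no change in symbol order, perfectly complementing the derivative contribution, which lowers the order by one without changing the power of $\lambda$.
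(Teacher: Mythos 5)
Your induction is correct and is exactly the argument the paper intends: the lemma is stated there with the remark that it follows "using a simple induction argument and the fact that $\pa_{\xi}\sqrt{\xi^2-k_i^2}=\xi/\sqrt{\xi^2-k_i^2}$ is a symbol of order $0$" on $\{|\xi|>k_i+\tfrac12\}$, which is precisely your base case plus inductive step. Your write-up simply fills in the bookkeeping the paper leaves to the reader, and the order/power accounting is right.
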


Applying  lemma~\ref{lem7.777} and~\eqref{eqn59.5} we see that
\begin{equation}\label{eqn62.5}
  \pa_{\xi}^m\left[e^{-\lambda\sqrt{\xi^2-k_i^2}}
    s_{l}(\xi)\right]=\sum_{j=0}^m\lambda^j s_{l-m+j}(\xi)
    e^{-\lambda\sqrt{\xi^2-k_i^2}}.
\end{equation}
To complete this analysis we need to estimate
\begin{equation}
 I_j(\lambda)= \left|\int\limits_{k_i+\frac 12}^{\infty}e^{i\xi
    x_1}\lambda^js_{l-m+j}(\xi)e^{-\lambda\sqrt{\xi^2-k_i^2}}d\xi\right|
  \text{ for }j\in\{1,\dots,m\},
\end{equation}
for  $m>l+1.$ If $j=0,$ then it is clear that this integral is uniformly bounded
as $\lambda\to 0^+.$

Using the symbolic estimate
$$|s_{l-m+j}(\xi)|\leq C(\xi^2-k_i^2)^{\frac{l-m+j}{2}},\text{ for
}|\xi|>k_i+\epsilon,$$
we see that
\begin{equation}
  I_j(\lambda)\leq C\int\limits_{k_i+\frac 12}^{\infty}
  \lambda^j(\xi^2-k_i^2)^{\frac{l-m+j}{2}}
  e^{-\lambda\sqrt{\xi^2-k_i^2}}d\xi\leq
C'\int\limits_{\sqrt{k_i+\frac 14}}^{\infty}\lambda^jw^{l-m+j}e^{-\lambda w}dw,
\end{equation}
where we have let $w=\sqrt{\xi^2-k_i^2}.$ If $l-m+j<-1,$ then it is clear that
the limit of $I_j(\lambda)$ is $0$ as $\lambda\to 0^+.$ If $l-m+j=-1,$ then
$j=m-l-1>0,$ and once again it is clear that $\lim_{\lambda\to
  0^+}I_j(\lambda)=0.$ Finally, if $l-m+j\geq 0,$ then we let $\lambda w=x$ to
obtain
\begin{equation}
  \lambda^{m-l-1}\int\limits_{\lambda k_i}^{\infty}x^{l-m+j}e^{-x}dx,
\end{equation}
which again tends to zero as $\lambda\to 0^+.$
\end{proof}

Combining these results shows
that, for all $N>0,$ we have the estimate
\begin{equation}
  w^{[k]}_{01+}(x_1,x_2;y_2)=O(x_1^{-N}),
\end{equation}
uniformly for bounded $x_2,y_2.$

\begin{proposition}
  If $x_2$ lies in a bounded interval, then for bounded $y_2,$ and any $N>0,$ we have the
  asymptotic expansions, for $k=0,1,2,$
  \begin{equation}
w^{[k]}_{0+}(x_1,x_2;y_2)  =\frac{e^{ik_1x_1}}{\sqrt{x_1}}\sum_{l=0}^{N}\frac{f_{kl}(x_2,y_2)}{x_1^{l}}+O(x_1^{-(N+1)}),
  \end{equation}
  where $\{f_{jk}(x_2,y_2)\}$ are continuous bounded functions of $x_2,y_2.$
\end{proposition}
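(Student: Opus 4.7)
The plan is to simply combine the two pieces of analysis just carried out. Using the cutoff $\psi\in\cC^{\infty}_c((-(k_2+1),k_2+1))$ with $\psi\equiv 1$ on $[-(k_2+\tfrac 12),k_2+\tfrac 12]$, I would decompose
\begin{equation}
w^{[j]}_{0+}(x_1,x_2;0,y_2)=w^{[j]}_{00+}(x_1,x_2;0,y_2)+w^{[j]}_{01+}(x_1,x_2;0,y_2),
\end{equation}
where $w^{[j]}_{00+}$ is the part of the $\Gamma_\nu^+$-integral localized by $\psi$ near the singularities at $\xi=\pm k_1$, and $w^{[j]}_{01+}$ is the complementary tail where $|\xi|\geq k_2+\tfrac 12$.

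For the first piece, I would change variables $t=\sqrt{k_1^2-\xi^2}$ as in~\eqref{eqn231.55}, rewrite $w^{[j]}_{00+}$ as a contour integral over $\Lambda_\mu$, and then deform $\Lambda_\mu$ through the stationary point at $t=0$ into the fourth-quadrant contour shown in Figure~\ref{fig9.55}. Along that deformed contour the real part of the phase is non-positive, the integrand is smooth and compactly supported, and the critical point $t=0$ is interior. Standard stationary phase then yields
\begin{equation}
w^{[j]}_{00+}(x_1,x_2;0,y_2)=\frac{e^{ik_1x_1}}{\sqrt{x_1}}\sum_{k=0}^{N}\frac{f_{jk}(x_2,y_2)}{x_1^{k}}+O(x_1^{-(N+1)}),
\end{equation}
with coefficients given explicitly by derivatives of $\fD$ (or of $\fB,\fA_0^\pm$ depending on whether $|x_2|,|y_2|$ lie inside or outside $[-d,d]$) at $t=0$. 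This expression shows that the $f_{jk}$ are continuous and bounded on each of the rectangles determined by $x_2,y_2\in (-\infty,-d)\cup(-d,d)\cup(d,\infty)$, with at worst finite differentiability across $x_2,y_2=\pm d$.

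For the tail $w^{[j]}_{01+}$, I would use exactly the integration-by-parts argument just given. Writing $\fD(\xi,\sqrt{\xi^2-k_1^2};x_2,y_2)$ (or its analogue in the other $|x_2|,|y_2|$ regimes) as a finite sum of terms $e^{-\sqrt{\xi^2-k_i^2}h(x_2,y_2)}s_l(x_2,y_2;\xi)$ with $h\geq 0$ and $s_l$ a symbol of order $l$, repeated integration by parts in $\xi$ combined with the estimates
\begin{equation}
\pa_\xi^m\bigl[e^{-\lambda\sqrt{\xi^2-k_i^2}}s_l(\xi)\bigr]=\sum_{p=0}^{m}\lambda^p s_{l-m+p}(\xi)e^{-\lambda\sqrt{\xi^2-k_i^2}}
\end{equation}
and the $I_j(\lambda)$-bounds derived above shows that $w^{[j]}_{01+}(x_1,x_2;0,y_2)=O(x_1^{-N})$ for every $N$, uniformly for $x_2,y_2$ in a bounded set, including the diagonal $x_2=y_2$ where $h$ vanishes.

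The main obstacle, and indeed the only nontrivial point, is ensuring the uniformity of the tail estimate when $h(x_2,y_2)\to 0$, since the exponential decay that gives trivial bounds away from the diagonal is lost there. This is exactly the content of the symbolic-derivative lemma and the $I_j(\lambda)$ case analysis already completed above, so the proposition follows by adding the two expansions and absorbing the $O(x_1^{-N})$ remainder from $w^{[j]}_{01+}$ into the error term of the expansion for $w^{[j]}_{00+}$.
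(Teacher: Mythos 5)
Your proposal follows essentially the same route as the paper: the same cutoff decomposition into $w^{[j]}_{00+}$ and $w^{[j]}_{01+}$, the change of variables $t=\sqrt{k_1^2-\xi^2}$ and deformation of $\Lambda_\mu$ into the fourth-quadrant contour so that $t=0$ becomes an interior stationary point, and the symbol/integration-by-parts analysis with the $I_j(\lambda)$ bounds to get the tail estimate uniformly down to the diagonal $x_2=y_2$. No gaps; this is the paper's argument.
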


As a corollary of the proposition we have the asymptotic expansions for
$u^0_{ck}(x_1,x_2),$ with $x_2$ bounded:
\begin{equation}\label{eqn67.6}
  u^0_{ck}(x_1,x_2)=\frac{e^{ik_1
      x_1}}{\sqrt{x_1}}\left[\sum_{l=0}^{N}\frac{b_{kl} (x_2)}{x_1^{l}}+
    O(x_1^{-(N+1)})\right], \text{ for }k=0,1.
\end{equation}
Taken together,~\eqref{eqn32.6}, \eqref{eqn42.6}, \eqref{eqn49.6},
and~\eqref{eqn67.6} complete the proof of the theorem in the $j=0$ case.

\Rd To show that the functions $\pa_{x_1}^ju_{c0}(x_1,x_2),
\pa_{x_1}^ju_{c1}(x_1,x_2)$ have asymptotic expansions obtained by
differentiating the expansions for $u_{c0}(x_1,x_2), u_{c1}(x_1,x_2)$ we observe
that the formul{\ae}~\eqref{eqn268.666},~\eqref{eqn274.666}
and~\eqref{eqn280.666} can be differentiated with respect to $x_1$ leading to 
factors of $(i\xi)^j$ in the numerator of the integrand. The bounded
contributions of the first two can be estimated exactly as before. The unbounded
contribution to error terms are uniformly bounded because
$\htau_{\pm}(\xi),\hsigma_{\pm}(\xi)$ are rapidly decreasing as
$|\xi|\to\infty.$

To contribution of~\eqref{eqn280.666} to the error is a bit subtler. It is still uniformly
bounded as Lemma~\ref{lem6.666} shows that such integrals are bounded for $s_l(\xi)$
symbols of arbitrary integral order. The additional factor of $\xi^j$ just multiplies  the
symbols in the $j=0$ case. Again this shows that these functions have asymptotic
expansions; the theorem of Coddington and Levinson shows that these can only
be obtained by differentiating the expansions for $u_{c0}(x_1,x_2), u_{c1}(x_1,x_2).$

\Bk

  \section{Concluding Remarks}
  In this paper we have derived refined estimates for the solution, obtained in
  Part I, to the scattering problem specified by two open semi-infinite
  wave-guides meeting along a common perpendicular line. These estimates show
  that the solution satisfies the usual Sommerfeld radiation condition away from
  the channels. Within the channels the solution splits cleanly into a
  ``radiation'' part and a wave-guide mode part. The radiation part satisfies a
  standard Sommerfeld radiation condition. The wave-guide mode parts are sum of terms
  of the form $\{e^{-i\xi^l_mx_1}v(x_2)\}$ with $\xi^l_m>0$ in the left half
  plane, and $\{e^{i\xi^r_mx_1}v(x_2)\}$ with $\xi^r_m>0$ in the right half
  plane. These contributions are therefore outgoing, in a naive sense. In
  Part III we show that they are also outgoing in the rigorous sense first
  introduced by Isozaki in~\cite{Isozaki94}.

  Nonetheless in Part III we introduce the formalism used in~\cite{Vasy2000},
  which gives radiation conditions for very general open wave guide problems in
  any dimension. The radiation condition implies uniqueness for the outgoing
  solution to $(\Delta+q+k^2)u=f,$ provided that $f$ decays rapidly enough.
  This analysis also shows the existence of a limiting absorption solution to
  this equation, which is shown to satisfy these radiation conditions. \Rd Using
  the PDE uniqueness result we show that the integral equation~\eqref{eqn143.35}
  has a trivial null-space on any of the spaces
  $\cC_{\alpha}(\bbR)\oplus\cC_{\alpha+\frac 12}(\bbR),$ for $0<\alpha<\frac
  12.$ Hence these equations are uniquely solvable for data in these spaces.
  \Bk We then conclude that the solutions we have constructed starting with
  admissible data agree with the limiting absorption solutions. Note that, in
  general the solutions to the PDE, $u^{l,r},$ constructed in each half space
  via~\eqref{eqn160.08} with general data
  $(g,h)\in\cC_{\alpha}(\bbR)\oplus\cC_{\alpha+\frac 12}(\bbR),$ cannot be
  expected to satisfy an outgoing radiation condition.


\begin{thebibliography}{1}
\bibitem{BBD_CW_F2022}  {\sc A.-S.~Bonnet-Ben Dhia, S.N.~Chandler-Wilde, and S.~Fliss}, {\em On the half-space matching method for real wavenumber}, SIAM Journal on Applied Mathematics, 82 (2022), pp. 1287--1311.
\bibitem{Christiansen} {\sc T.~Christiansen}, {\em Scattering theory for
  perturbed stratified media}, J. D'Anal. Math., 76(1998), pp.~1--43.
\bibitem{CoddingtonLevinson} {\sc E.A.~Coddington and N.~Levinson}, {\bf Theory
  of Ordinary Differential Equations}, (1955), McGraw-Hill, New York.
 \bibitem{EpWG2023_1}  {\sc C.L.~Epstein}, {\em Solving the  Scattering
  Problem for Open Wave-guide Networks, I: Fundamental Solutions and Integral
  Equations},  	arXiv:2302.04353 [math-ph], 2023.
    \bibitem{EGHQR2025}    {\sc C.L.~Epstein, T.~Goodwill, J.~Hoskins, S.~Quinn and
    M.~Rachh}, {\em Complex scaling for open waveguides}, 
    arXiv:2506.10263 [math.NA], 2025.

\bibitem{EpMaSR2023}
  {\sc C.L.~Epstein and R.~Mazzeo}, {\em Solving the Scattering Problem for
    Open Wave-Guide Networks, III: Radiation Conditions and Uniqueness},
  arXiv:2401.04674 [math.AP], 2024.
  \bibitem{Erdelyi} {\sc A.~Erd\'elyi},  {\bf Asymptotic expansions},   (1956),
   Dover Publications,  New York. 

  \bibitem{Isozaki94}
{\sc H.~Isozaki}, {\em A generalization of the radiation condition of Sommerfeld
  for $N$-body Schro\"odinger operators}, Duke
  Math. Jour., 74 (1994), p.~557–584.
\bibitem{Melrose94}
  {\sc R.B.~Melrose}, {\em Spectral and Scattering Theory for the Laplacian on
    Asymptotically Euclidean Spaces} in {\bf Spectral and Scattering Theory},
  Ed. M. Ikawa, (1994), CRC Press, 46 p 85-130.
\bibitem{MorseFeshbach} {\sc P.~Morse and H.~Feshbach}, {\bf Methods of
  Theoretical Physics, vol. 1}, (1953), McGraw-Hill, New York.
 \bibitem{Nosich1994} {\sc A.I. Nosich}, {\em Radiation conditions, limiting
      absorption principle, and general relations in open waveguide scattering},
      Journal of Electromagnetic Waves and Applications, 8, (1994), p.~329--353.
\bibitem{Vasy97}  {\sc A.~Vasy}, {\em Structure of the resolvent for three-body
  potentials}, Duke Math. J. 90 (1997), p. 379-434.
  \bibitem{Vasy2000}  {\sc A.~Vasy}, {\em Propagation of singularities in three-body
    scattering}, Ast\'erisque, tome 262 (2000), 158pp.
    \bibitem{ZworskiSA2012} {\sc M. Zworski}, {\bf Semiclassical Analysis},
      Grad. Studies in Math. vol.~138,   (2012), A.M.S.
\end{thebibliography}
\end{document}